\pdfoutput=1
\documentclass[12pt]{article} %
\usepackage[utf8]{inputenc}
\usepackage[tracking=true, letterspace=50, expansion=false]{microtype}

\usepackage{amsmath}
\usepackage{amssymb}
\usepackage{amsthm}
\usepackage{thmtools}

\usepackage[margin=1in]{geometry} %
\usepackage{setspace} %
\onehalfspacing

\usepackage{natbib} %
\usepackage[title]{appendix} %
\usepackage{booktabs} %
\usepackage[inline]{enumitem} %
\usepackage{subcaption}

\bibliographystyle{apalike}

\newtheorem{theorem}{Theorem}[section]
\newtheorem{lemma}{Lemma}[section]
\newtheorem{corollary}{Corollary}[section]
\newtheorem{assumption}{Assumption}[section]
\theoremstyle{definition}
\newtheorem{remark}{Remark}[section]
\usepackage{etoolbox}
\AtEndEnvironment{example}{\qed}%
\newtheorem{example}{Example}[section]
\AtEndEnvironment{algorithm}{\qed}%
\newtheorem{algorithm}{Algorithm}[section]

\usepackage{graphicx}

\DeclareMathOperator{\cv}{cv}
\DeclareMathOperator{\eig}{eig}

\DeclareMathOperator{\bias}{bias}
\DeclareMathOperator{\maxbias}{\overline{bias}}

\DeclareMathOperator*{\argmin}{argmin}
\DeclareMathOperator{\var}{var}

\DeclareMathOperator{\Pen}{Pen}

\usepackage{mathtools}
\DeclarePairedDelimiter\hor{[}{)}
\DeclarePairedDelimiter\hol{(}{]}
\DeclarePairedDelimiter\abs{\lvert}{\rvert}
\DeclarePairedDelimiter\norm{\lVert}{\rVert}

\renewcommand*{\arraystretch}{1.2}
\newcommand{\ND}{\mathcal{N}} 
\newcommand{\sRk}{\mathcal{G}} 

\usepackage{xcolor}%
\definecolor{webbrown}{rgb}{.6,0,0}

\usepackage[nolist]{acronym}
\begin{acronym}
  \acro{OLS}{ordinary least squares}
  \acro{ATE}{average treatment effect}
  \acro{TE}{treatment effect}
  \acro{MSE}{mean squared error}
  \acro{CI}{confidence interval}
  \acro{DGP}{data generating process}
  \acro{FLCI}{fixed-length confidence interval}
  \acro{FGLS}{feasible generalized least squares}
  \acro{EHW}{Eicker-Huber-White}
  \acro{FWL}{Frisch–Waugh–Lovell}
  \acro{MPE}{marginal propensity to earn}
\end{acronym}

\newcommand\RMSE{\ensuremath\operatorname{R}_{\textnormal{MSE}}}

\allowdisplaybreaks

\usepackage[normal, online, flushleft]{threeparttable}%

\usepackage{tikz} 
\usepackage[capposition=bottom]{floatrow}
\usepackage{hyperref}
\hypersetup{%
  breaklinks = true,%
  colorlinks = true,%
  anchorcolor = webbrown,%
  citecolor = webbrown,%
  filecolor = webbrown,%
  linkcolor = webbrown,%
  menucolor = webbrown,%
  urlcolor= webbrown,%
  citebordercolor= 1 0 0,%
  menubordercolor=1 0 0,%
  urlbordercolor=1 0 0,%
  runbordercolor=1 0 0,%
  pdftitle=Bias-Aware Inference in Regularized Regression Models,%
  pdfauthor=Timothy B. Armstrong \&
  Michal Kolesár \& Soonwoo Kwon}
\usepackage{cleveref}
\crefname{assumption}{assumption}{assumptions}
\crefname{corollary}{Corollary}{Corollaries}
\crefname{algorithm}{algorithm}{algorithms}
\crefname{remark}{remark}{remarks}
\crefname{proposition}{proposition}{propositions}
\crefname{appsec}{appendix}{appendices}

\title{Bias-Aware Inference in Regularized Regression Models\thanks{An earlier
    version of this paper was circulated under the title ``Optimal Inference in
    Regularized Regression Models''. Parts of this paper incorporate material
    from Section 4 of the working paper \citet{ArKo16optimal}, which was taken
    out in the published version \citep{ArKo18optimal}. We thank Victor
    Chernozhukov for helpful comments and discussion. We thank Mark Li and
    Ulrich Müller for sharing their code. Armstrong acknowledges support from
    National Science Foundation Grant SES-2049765. Kolesár acknowledges support
    by the Sloan Research Fellowship and from National Science Foundation Grant
    SES-22049356.}}
\author{Timothy B. Armstrong\thanks{email: \texttt{timothy.armstrong@usc.edu}}\\
  University of Southern California \and
  Michal Kolesár\thanks{email: \texttt{mkolesar@princeton.edu}}\\
  Princeton University \and Soonwoo Kwon\thanks{email:
    \texttt{soonwoo\_kwon@brown.edu}}\\ Brown University} \date{\today}

\begin{document}

\maketitle

\begin{abstract}
  We consider inference on a scalar regression coefficient under a constraint on
  the magnitude of the control coefficients. A class of estimators based on a
  regularized propensity score regression is shown to exactly solve a tradeoff
  between worst-case bias and variance. We derive \acp{CI} based on these
  estimators that are bias-aware: they account for the possible bias of the
  estimator. Under homoskedastic Gaussian errors, these estimators and \acp{CI}
  are near-optimal in finite samples for \acl{MSE} and \ac{CI} length. We also
  provide conditions for asymptotic validity of the \acp{CI} with unknown and
  possibly heteroskedastic error distribution, and derive novel optimal rates of
  convergence under high-dimensional asymptotics that allow the number of
  regressors to increase more quickly than the number of observations. Extensive
  simulations and an empirical application illustrate the performance of our
  methods.
\end{abstract}

\clearpage

\section{Introduction}

We are interested in estimation and inference on a scalar coefficient $\beta$ in
a linear regression model
\begin{equation}\label{linear_regression_eq}
  Y_{i} = w_{i}\beta + z_{i}'\gamma + \varepsilon_{i}, \qquad i=1, \dotsc, n,
\end{equation}
when the $k$-vector $z_i$ of controls is large. In such settings, the classic
\ac{OLS} estimator is often uninformative, exhibiting variance that is too
large; the estimator is not even defined when $k>n$. This motivates modifying
the \ac{OLS} objective function to penalize large values of $\gamma$, thereby
lowering variance at the cost of introducing bias.

The most popular of these approaches is the lasso
\citep{tibshirani_regression_1996} or other variants of $\ell_{1}$ penalization
\citep*[e.g.][]{CaTa07,BeChWa11}. There is a large literature \citep[see,
e.g.][for a review]{buhlmann_statistics_2011} showing favorable \ac{MSE}
properties of these estimators when $\gamma$ is sparse. For inference, several
papers have proposed \acp{CI} based on ``double lasso'' estimators \citep[see,
among
others,][]{belloni_inference_2014,javanmard_confidence_2014,van_de_geer_asymptotically_2014,zhang_confidence_2014},
with asymptotic justification relying on rate conditions for the sparsity of
$\gamma$. However, in many applications in economics, the sparsity assumption is
not compelling and may be hard to motivate. Furthermore, it is unclear what
sparsity level this approach implicitly imposes in a given finite sample.

We propose bounding the \emph{magnitude} of the control coefficients, rather
than their sparsity level, by assuming that $\Pen(\gamma)\leq C$. The penalty
function $\Pen(\cdot)$ formalizes the notion of magnitude, and it can
incorporate any restrictions on $\gamma$ that place it in a convex symmetric
set. Such restrictions arise naturally in a plethora of applications. For
instance, the dimensionality of the control vector is often large due to the
inclusion of additional controls that are collectively believed to only be
weakly associated with the outcome, but are nonetheless included to purge any
possible confounding. One can then take the penalty to be an $\ell_{p}$ norm for
the additional controls. If $z_i'\gamma$ is a basis approximation to some smooth
function, we can define $\Pen(\gamma)$ to incorporate bounds on the derivatives
of this function. The regularity parameter $C$ plays a role analogous to a
sparsity bound.

We obtain sharp finite-sample results deriving near-optimal estimators and
\acp{CI} under this penalty constraint and the idealized assumption that the
regression errors $\varepsilon_{i}$ are Gaussian with a known homoskedastic
variance. We show that the class of estimators that exactly resolves the
trade-off between worst-case bias and variance can be obtained by (1) running a
penalized propensity score regression of $w_i$ on $z_i$ using $\Pen(\cdot)$ as
the penalty function; and then (2) using the residuals from this regression as
an instrument in the univariate regression of $Y_{i}$ on $w_{i}$. \acp{CI} based
on these estimators can be constructed by using a critical value that
incorporates the worst-case bias of the estimator, which we show obtains
automatically as a byproduct of the regularized regression in step (1). Because
these \acp{CI} are bias-aware---they account for the potential finite-sample
bias of the estimator---they are valid in finite samples in this idealized
Gaussian setup. We show how to choose the weight on the penalty function in step
(1) to optimize the \ac{MSE} of the resulting estimator, or the length of the
resulting \ac{CI}.

In the more realistic setting with heteroskedasticity and an unknown error
distribution, bias-aware \acp{CI} can be formed using heteroskedasticity-robust
variance estimators, and we give conditions for their asymptotic validity,
allowing for high-dimensional asymptotics with $k\gg n$. Our setup can allow for
the effect of $w_{i}$ on the outcome to be heterogeneous, either by including
interactions of the treatment and demeaned covariates among the controls, or by
reinterpreting $\beta$ in \cref{linear_regression_eq} as a weighted average
treatment effect. We show that the treatment weights solve a bias-variance
tradeoff in a problem where we can pick the estimand to make the estimation
problem as easy as possible.

We also employ the high-dimensional asymptotics to study rates of convergence of the
bias-aware \acp{CI} when $\Pen(\gamma)$ is an $\ell_p$ norm. We show that if
$k\gg n$ and $C$ does not shrink with $n$, the optimal \ac{CI} shrinks more
slowly than $n^{-1/2}$, so that the bias term asymptotically dominates;
accounting for bias in \ac{CI} construction thus cannot be avoided even in large
samples. Furthermore, we show that, in the $\ell_1$ case, this rate cannot be
improved even if one additionally imposes the same $\ell_1$ bound in the
propensity score regression of $w_i$ on $z_i$, as well as a certain degree of
sparsity in both regressions.

Explicit specification of the regularity parameter $C$ that bounds the magnitude
of $\gamma$ is a key input for our approach. Our efficiency bounds show that it
is impossible to automate the choice of $C$ when forming \acp{CI}. We discuss
how relating the magnitude of $\gamma$ to other quantities, such as the
magnitude of the control coefficients in a short regression that only includes
baseline controls, can help guide its choice. We develop a rule of thumb
specification for $C$ based on this idea that we use in our simulations and
empirical application. Robustness of the results can be assessed by computing a
breakdown value of $C$, its largest value such that the empirical finding of
interest, such as rejecting a particular null hypothesis, holds. Selection of
$C$ cannot be automated due to the impossibility of getting a sufficiently
informative data-driven upper bound for it. We show, however, that it is possible
to obtain a \emph{lower} \ac{CI} for $C$, which can be used as a specification
check to ensure that the chosen value is not too low.

The requirement to explicitly choose $C$ may seem like a limitation of our
approach relative to sparsity-based approaches, where the analogous tuning
parameter, the degree of sparsity, does not need to be explicitly specified.
However, good finite-sample performance of such methods relies on bounding these
tuning parameters implicitly, and such implicit bounds are hard to calculate or
evaluate in a given problem. We demonstrate this issue in a Monte Carlo
analysis, where we show that double-lasso \acp{CI} suffer from moderate to
severe undercoverage even in designs that are apparently sparse. Indeed, we view
the explicit specification of $C$ as an advantage of our approach, because our
coverage guarantees and efficiency bounds are based on transparent assumptions
rather than ``asymptotic promises'' about tuning parameters that are hard to
evaluate in a particular sample.

Our results relate to several strands of literature. Our procedures and
efficiency bounds apply the general theory of estimation and inference on linear
functionals in convex Gaussian models developed in \citet{IbKh85},
\citet{donoho94}, \citet{low95} and \citet{ArKo18optimal}, and add to a growing
literature applying this approach to various settings, including
\citet{ArKo20ate,ArKo20sensitivity}, \citet{kolesar_inference_2018},
\citet{ImWa19}, \citet{RaRo19}, \citet{noack_bias-aware_2019}, and
\citet{kwon_inference_2020}. \citet{muralidharan_factorial_2020} apply the
approach in the present paper to experiments with factorial designs and bounds
on interaction effects.

The idea of using propensity score residuals to estimate $\beta$ goes back at
least to the work of \citet{robinson_root-n-consistent_1988} on the partly
linear model. We provide a novel finite-sample justification for this idea, as
well as an exact result giving the optimal penalization of this regression. Our
setup allows for a general form of $\Pen(\cdot)$, and yields existing estimators
in a few special cases; the bias-aware \acp{CI} to accompany such estimators are
novel. First, we recover the optimal linear estimators in
\citet{heckman_minimax_1988}, who considered the partly linear model with a
penalty function bounding the first or second derivative of a univariate
nonparametric regression function. Next, we reproduce the result in \citet{li82}
that the optimal estimator uses ridge regression when the penalty corresponds to
an $\ell_{2}$ norm. Finally, \citet{li_linear_2020} consider the weighted
$\ell_2$ norm $\Pen(\gamma)=\left(\sum_{i=1}^n (z_i'\gamma)^{2} \right)^{1/2}$.
They develop bias-aware \acp{CI} under this penalty based on a likelihood ratio
statistic, which are numerically shown to be close to optimal under
homoskedasticity and a particular weighted average length criterion. However,
unlike our \ac{CI}, the \citeauthor{li_linear_2020} \ac{CI} may end up being longer
than the long regression \ac{CI}, as we illustrate in our empirical application
in \Cref{sec:empirical-app}.

The next section presents our finite-sample results in the idealized model with
Gaussian errors. \Cref{sec:impl-with-non} discusses implementation in the more
realistic setting with unknown error distribution. \Cref{sec:asymptotics}
derives rates of convergence under high-dimensional asymptotics and bounds on an
$\ell_p$ norm. \Cref{sec:nonconvex_efficiency} compares our approach to \acp{CI}
motivated by sparsity constraints. The performance of our methods is evaluated
in a Monte Carlo study in \Cref{sec:simulation-results}, while
\Cref{sec:empirical-app} illustrates them in an empirical application. Proofs
and auxiliary results appear in appendices.

\section{Finite-sample results}\label{sec:finite-sample-result}

This section sets up an idealized version of our model with Gaussian
homoskedastic errors. We then show how to construct estimators and \acp{CI} in
this model that are near-optimal in finite samples.

\subsection{Setup}\label{sec:setup}

We write the model in \cref{linear_regression_eq} in vector form as
\begin{equation}\label{eq:linear_regression_vector}
  Y = w\beta + Z\gamma + \varepsilon,
\end{equation}
where $w=(w_1,\dotsc, w_n)'\in\mathbb{R}^n$ is the variable of interest with
coefficient $\beta\in\mathbb{R}$ and
$Z=(z_{1}, \dotsc, z_{n})'\in \mathbb{R}^{n\times k}$ is a matrix of control
variables. The design matrix $X=(w, Z)$ is treated as fixed. To obtain
finite-sample results, we further assume that the errors are normal and
homoskedastic $\varepsilon\sim\ND(0, \sigma^{2}I_{n})$, with $\sigma^{2}$ known.
To ensure informative inference on $\beta$ when $k$ is large relative to $n$
(including the case $k>n$), the researcher needs to make \emph{a priori}
restrictions on the control coefficients $\gamma$. We assume that these
restrictions can be formalized by restricting the parameter space for
$(\beta, \gamma')'$ to be $\mathbb{R}\times \Gamma$ where, for some linear
subspace $\sRk$ of $\mathbb{R}^{k}$ and some seminorm $\Pen(\cdot)$ on $\sRk$,
\begin{equation}\label{eq:Gamma_def}
  \Gamma=\Gamma(\Pen;C)= \{\gamma\in \sRk\colon \Pen(\gamma)\le C\}.
\end{equation}
The requirement that $\Pen(\cdot)$ be a seminorm means that it satisfies the
triangle inequality
($\Pen(\gamma+\tilde\gamma)\le \Pen(\gamma)+\Pen(\tilde\gamma)$), and
homogeneity ($\Pen(c\gamma)=\abs{c}\Pen(\gamma)$ for any scalar $c$), but,
unlike a norm, it is not necessarily positive definite ($\Pen(\gamma)=0$ does not
imply $\gamma=0$). This allows us to cover settings where only a subset of the
control coefficients is restricted.

A common class of restrictions arises when $\Pen(\gamma)$ is a weighted
$\ell_{p}$ norm on a subset of the coefficients. To describe two examples in
this class of restrictions, partition the controls into a set of $k_{1}\geq 0$
unrestricted baseline controls and a set of $k_{2}=k-k_{1}$ additional controls,
$Z=(Z_{1}, Z_{2})$. Partition $\gamma=(\gamma_{1}', \gamma_{2}')'$ accordingly.
Let $H_{A}$ denote the projection matrix onto the column space of a matrix $A$.
Let $\|\cdot\|_p$ denote the $\ell_p$ norm.
\begin{example}[name={$\ell_{2}$ penalty},label=example:l2]
  We specify the penalty as
  \begin{equation}\label{eq:l2-penalty}
    \Pen(\gamma)=\norm{M\gamma}_{2}=\sqrt{\gamma'M'M\gamma},
  \end{equation}
  where the $k_{2}\times k$ matrix $M$ incorporates scaling the variables and
  picking out which variables are to be constrained. If $M=(0,I_{k_{2}})$, then
  $\Pen(\gamma)=\norm{\gamma_{2}}_{2}$, with $\gamma_{1}$ unconstrained. Setting
  $M=(0,(Z_{2}'(I-H_{Z_{1}})Z_{2}/n)^{1/2})$ corresponds to the specification
  considered in \citet{li_linear_2020}, which restricts the average of the
  squared mean effects $z_{2i}'\gamma_{2}$ on $Y_{i}$, after controlling for the
  baseline controls $z_{1i}$.
\end{example}
\begin{example}[name={$\ell_{1}$ penalty},label=example:l1]
  A weighted $\ell_{1}$ penalty replaces the norm in \cref{eq:l2-penalty} with
  an $\ell_{1}$ norm. We focus on the unweighted case for simplicity, setting
  $\Pen(\gamma)=\norm{\gamma_{2}}_{1}$.
\end{example}
In addition to selecting the penalty, the specification of $\Gamma$ also
requires the researcher to pick the regularity parameter $C$; here we take it as
given, and defer a discussion of its choice to \Cref{sec:impl-with-non}.

Formulating the parameter space $\Gamma$ in terms of a seminorm is not
restrictive in the sense that essentially any convex set $\Gamma$ that is
symmetric ($\gamma\in\Gamma$ implies $-\gamma\in\Gamma$) can be defined in this
way \citep[see][Proposition 5, p.~26]{yosida_functional_1995}. Although we rule
out non-convex constraints on $\Gamma$, such as sparsity, our results
nonetheless have implications for such settings, as we discuss in
\Cref{sec:nonconvex_efficiency}.

Our goal is to construct estimators and \acp{CI} for $\beta$. To evaluate
estimators $\hat{\beta}$ of $\beta$, we consider their worst-case performance
over the parameter space $\mathbb{R}\times \Gamma$ under the \ac{MSE} criterion,
\begin{equation*}
  \RMSE(\hat\beta;\Gamma)
  =\sup_{\beta\in\mathbb{R}, \gamma\in\Gamma}E_{\beta, \gamma}[(\hat\beta-\beta)^{2}],
\end{equation*}
where $E_{\beta, \gamma}$ denotes expectation under $(\beta, \gamma')'$. An
interval $\{\hat\beta\pm \hat\chi\}$ with half-length $\hat\chi=\hat\chi(Y, X)$
is a \ac{CI} with level $1-\alpha$ if it satisfies the coverage requirement
\begin{equation*}
  \inf_{\beta\in\mathbb{R}, \gamma\in\Gamma}
  P_{\beta, \gamma}\left(\beta\in \{\hat\beta\pm \hat\chi\} \right)\ge 1-\alpha,
\end{equation*}
where $P_{\beta, \gamma}$ denotes probability under $(\beta, \gamma')'$. To
compare two CIs under a particular parameter vector $(\beta, \gamma')'$, we
prefer the one with shorter expected length $E_{\beta, \gamma}[2\hat\chi]$. Note
that optimizing expected length will not necessarily lead to \acp{CI} centered
at an estimator $\hat\beta$ that is optimal under the \ac{MSE} criterion.

\subsection{Linear estimators and CIs}\label{sec:line-estim-cis}

We start by considering estimators that are linear in the outcomes $Y$,
$\hat{\beta}=a'Y$, and derive \acp{CI} based on such estimators. The $n$-vector
of weights $a$ may depend on the design matrix $X$ or the known variance
$\sigma^{2}$. In~\Cref{sec:optimal-weights} below, we show how to choose the
weights $a$ optimally, and in \Cref{sec:linear_efficiency} we show that when $a$
is optimally chosen, the resulting estimators and \acp{CI} are optimal or
near-optimal among all procedures, not just linear ones.

Under a given parameter vector $(\beta, \gamma')'$, the bias of
$\hat{\beta}=a'Y$ is given by $a'(w\beta+Z\gamma) - \beta$. As
$(\beta, \gamma')'$ ranges over the parameter space $\mathbb{R}\times \Gamma$,
the bias ranges over the interval
$[-\maxbias_{\Gamma}(\hat\beta), \maxbias_{\Gamma}(\hat\beta)]$, where
\begin{equation}\label{eq:maxbias_def}
  \maxbias_{\Gamma}(\hat\beta)
  = \sup_{\beta\in\mathbb{R}, \gamma\in \Gamma} a'(w\beta+Z\gamma) - \beta
\end{equation}
denotes the worst-case bias. The variance of $\hat\beta$ does not depend on
$(\beta, \gamma')'$, and is given by $\var(\hat\beta) = \sigma^2 a'a$.

To form a CI centered at $\hat\beta$, note that the $z$-statistic
$(\hat\beta-\beta)/\var(\hat\beta)^{1/2}$ follows a $\ND(b,1)$ distribution with
mean bounded by $\abs{b}\le \maxbias_{\Gamma}(\hat\beta)/\var(\hat\beta)^{1/2}$.
Thus, a two-sided \ac{CI} can be formed as
\begin{equation}\label{eq:linear-CI}
  \hat{\beta} \pm \chi, \quad
  \text{where}\quad
  \chi = \var(\hat\beta)^{1/2}
  \cdot \cv_{\alpha}\left(\maxbias_{\Gamma}(\hat\beta)/\var(\hat\beta)^{1/2}\right),
\end{equation}
and $\cv_{\alpha}(B)$ denotes the $1-\alpha$ quantile of the folded normal
distribution, $\abs{\ND(B,1)}$.\footnote{\label{fn:compute_cv}The critical value
  $\cv_{1-\alpha}(B)$ can be computed as the square root of the $1-\alpha$
  quantile of a non-central $\chi^{2}$ distribution with $1$ degree of freedom
  and non-centrality parameter $B^{2}$.}

This \ac{CI} is \emph{bias-aware} in that the critical value
$\cv_{\alpha}(\cdot)$ reflects the potential finite-sample bias of
$\hat{\beta}$. Following the terminology in \citet{donoho94}, we refer to the
\ac{CI} as a \acf{FLCI}, since its length $2\chi$ is fixed: it depends only on
the non-random design matrix $X$, and known variance $\sigma^{2}$, but not on
$Y$ or the parameter vector $(\beta, \gamma')'$.

\subsection{Optimal weights}\label{sec:optimal-weights}

Both the \ac{MSE}
$R(\hat\beta;\Gamma)= \maxbias_{\Gamma}(\hat\beta)^2 + \var(\hat\beta)$ and the
\ac{FLCI} length $2\chi$ given in~\cref{eq:linear-CI} are increasing in the
variance of $\hat{\beta}$ and in its worst-case bias
$\maxbias_{\Gamma}(\hat{\beta})$. Therefore, to find the optimal weights, we
first minimize variance subject to a bound $B$ on worst-case bias,
\begin{equation}\label{eq:a_optimization}
  \min_{a\in\mathbb{R}} a'a
  \quad\text{s.t.}\quad
  \sup_{\beta\in\mathbb{R}, \gamma\in \Gamma} a'(w\beta+Z\gamma) - \beta
  \le B.
\end{equation}
We then vary the bound $B$ to find the optimal bias-variance tradeoff for a
given criterion (\ac{MSE} or \ac{FLCI} length). Since this optimization does not
depend on the outcome data $Y$, optimizing the weights does not affect the
coverage properties of the resulting \ac{CI}.

Our main computational result, in \Cref{thm:optimization_equivalence} below,
shows that the estimator solving the optimization problem in
\cref{eq:a_optimization} is given by a simple two-step procedure. In the first
step, we estimate a penalized regression of $w$ on $Z$ with penalty $\Pen(\pi)$,
so that the coefficient estimate on $Z$, $\pi_\lambda$, solves
\begin{equation}\label{eq:pi_optimization}
  \min_{\pi} \norm{w-Z\pi}^{2}_{2} \quad\text{s.t.}\quad \Pen(\pi)\le t_\lambda,
\end{equation}
where $t_\lambda$ is a bound on the penalty term. We refer to
\cref{eq:pi_optimization} as a (regularized) propensity score regression, even
though we don't require $w_{i}$ to be binary. In the second step, we use the
residuals $\tilde{w}_{\lambda}:=w-Z\pi_{\lambda}$ from the propensity score
regression as instruments in a univariate regression of $Y$ on $w$. The tuning
parameter $\lambda$ indexes the weight placed on the constraint in
\cref{eq:pi_optimization}, and its selection depends on the criterion we are
optimizing. It may correspond to the Lagrange multiplier in a Lagrangian
formulation of \cref{eq:pi_optimization}, or, if we can
solve \cref{eq:pi_optimization} directly, we may take $t_\lambda=\lambda$.

\begin{theorem}\label{thm:optimization_equivalence}
  Let $\tilde{w}_{\lambda}=w-Z\pi_{\lambda}$, where $\pi_\lambda$ solves
  \cref{eq:pi_optimization}, and suppose that
  $\norm{\tilde{w}_{\lambda}}_{2}>0$. Then
  $a_{\lambda} = \frac{\tilde{w}_{\lambda}}{\tilde{w}_{\lambda}'w}$ solves
  \cref{eq:a_optimization} with the bound given by
  $B=\frac{C}{t_\lambda}\cdot {a_{\lambda}}'Z\pi_{\lambda}$. Consequently,
  the worst-case bias and variance of the estimator
  \begin{equation}\label{eq:optimal_estimator}
    \hat{\beta}_{\lambda}={a_{\lambda}}'Y=\frac{\tilde{w}_{\lambda}'Y}{\tilde{w}_{\lambda}'w}
  \end{equation}
  are given by
  \begin{equation}\label{eq:betahat_lambda_def}
    \maxbias_{\Gamma}(\hat\beta_\lambda)=C \overline{B}_\lambda,
    \quad\text{and}\quad V_{\lambda} =
    \sigma^{2}\norm{{a_{\lambda}}}_{2}^{2}, \
    \quad\text{where}\;
    \overline B_\lambda
    =\frac{{a_{\lambda}}'Z\pi_\lambda }{\Pen(\pi_\lambda)} .
 \end{equation}
\end{theorem}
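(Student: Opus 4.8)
The plan is to reduce the bias constraint in \eqref{eq:a_optimization} to an explicit form, read off the first-order conditions of the propensity-score regression \eqref{eq:pi_optimization}, verify that $a^*_\lambda$ is feasible with the stated bias and variance, and finally certify its optimality by a short convexity argument. For the first step, write $a'(w\beta+Z\gamma)-\beta=(a'w-1)\beta+a'Z\gamma$ and take the supremum over $\beta\in\mathbb{R}$ first: this is $+\infty$ unless $a'w=1$. Hence every feasible $a$ obeys the normalization $a'w=1$, and on this set $\maxbias_{\Gamma}(\hat\beta)=\sup_{\gamma\in\Gamma}a'Z\gamma=C\sup_{\Pen(\gamma)\le1}(Z'a)'\gamma$ by homogeneity of $\Pen$. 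Thus \eqref{eq:a_optimization} becomes: minimize $a'a$ subject to $a'w=1$ and $C\sup_{\Pen(\gamma)\le1}(Z'a)'\gamma\le B$.

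Next I would characterize the regularized regression. Let $r=w-Z\pi^*_\lambda$. Since \eqref{eq:pi_optimization} minimizes a convex quadratic subject to a seminorm constraint (Slater's condition holds at $\pi=0$ when $t_\lambda>0$), convexity yields a Karush--Kuhn--Tucker multiplier $\lambda\ge0$ and a subgradient $g\in\partial\Pen(\pi^*_\lambda)$ with the stationarity condition $Z'r=\lambda g$. The proof then rests on the two defining properties of the subgradient of a seminorm: $g'\gamma\le\Pen(\gamma)$ for every $\gamma\in\sRk$, with equality $g'\pi^*_\lambda=\Pen(\pi^*_\lambda)$ at the optimum. These give $r'w=\norm{r}_2^2+\lambda\Pen(\pi^*_\lambda)>0$ (using the hypothesis $\norm{r}_2>0$), so $a^*_\lambda=r/(r'w)$ is well defined and ${a^*_\lambda}'w=1$; the same two properties also show ${a^*_\lambda}'Z\gamma_0=0$ for any $\gamma_0$ in the null space of $\Pen$, so $a^*_\lambda$ is genuinely feasible even when $\Pen$ is degenerate.

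To obtain the bias and variance, use $Z'r=\lambda g$: for any $\gamma\in\Gamma$, ${a^*_\lambda}'Z\gamma=\lambda g'\gamma/(r'w)\le\lambda C/(r'w)$, with equality attained at $\gamma^*=C\pi^*_\lambda/\Pen(\pi^*_\lambda)\in\Gamma$ (this uses $g'\pi^*_\lambda=\Pen(\pi^*_\lambda)$ and $\Pen(\pi^*_\lambda)>0$). Hence $\maxbias_{\Gamma}(\hat\beta_\lambda)=\lambda C/(r'w)$; substituting the identity $r'Z\pi^*_\lambda=\lambda g'\pi^*_\lambda=\lambda\Pen(\pi^*_\lambda)$ rewrites this as $C\overline{B}_\lambda$, and, when the constraint binds so that $\Pen(\pi^*_\lambda)=t_\lambda$, as the stated bound $B$ (when it does not bind, $\lambda=0$ and all three expressions vanish). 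The variance is $\sigma^2{a^*_\lambda}'a^*_\lambda=\sigma^2\norm{r}_2^2/(r'w)^2=V_\lambda$ by direct substitution.

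For optimality, take any competitor $\tilde a$ feasible in the reduced problem and write $\tilde a=a^*_\lambda+d$, so that $\tilde a'\tilde a={a^*_\lambda}'a^*_\lambda+2{a^*_\lambda}'d+d'd$; it suffices to show ${a^*_\lambda}'d=r'd/(r'w)\ge0$. Feasibility forces $d'w=0$ (both satisfy $a'w=1$) and ${a^*_\lambda}'Z\gamma^*=\maxbias_{\Gamma}(\hat\beta_\lambda)\ge\tilde a'Z\gamma^*$, hence $d'Z\gamma^*\le0$, i.e.\ $d'Z\pi^*_\lambda\le0$; since $r'd=w'd-d'Z\pi^*_\lambda=-d'Z\pi^*_\lambda\ge0$, optimality follows. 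I expect the main obstacle to be the middle step: establishing existence of the KKT multiplier and correctly invoking the subgradient characterization of a possibly degenerate seminorm (both $g'\gamma\le\Pen(\gamma)$ and the equality $g'\pi^*_\lambda=\Pen(\pi^*_\lambda)$), since feasibility through the null space of $\Pen$, the exact worst-case bias, and the sign argument for optimality all hinge on these two facts. The remaining steps are essentially bookkeeping.
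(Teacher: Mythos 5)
Your proof is correct, and it reaches the theorem by a genuinely different route than the paper. The paper does not argue via KKT conditions at all: it casts the problem in the Donoho (1994)/Low (1995)/Armstrong--Kolesár (2018) framework for linear functionals in Gaussian models, forms the modulus of continuity $\omega(\delta)$ in \eqref{eq:modulus_pi}, and proves a dedicated lemma (\Cref{modulus_solution_lemma}) showing that solutions of the modulus problem and of the penalized regression \eqref{eq:pi_optimization} coincide under the reparametrization $t_\lambda=C/\beta^{\text{mod}}_\delta$ (including an existence argument that quotients out the subspace $\{\pi\in\sRk\colon Z\pi=0,\ \Pen(\pi)=0\}$); the estimator, worst-case bias, and optimality are then read off from cited general results (eq.~(26) and Lemmas A.1 and D.1 of Armstrong--Kolesár; Lemma 4 of Donoho). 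You instead give a direct, self-contained argument: reduce \eqref{eq:a_optimization} to $a'w=1$ plus a dual-seminorm bias bound, extract a multiplier and subgradient via Slater's condition, use the two subgradient identities $g'\gamma\le\Pen(\gamma)$ and $g'\pi^*_\lambda=\Pen(\pi^*_\lambda)$ to get feasibility (including through the null space of $\Pen$, which is exactly the right degeneracy to worry about with a seminorm), exhibit the least-favorable $\gamma^*=C\pi^*_\lambda/\Pen(\pi^*_\lambda)$ --- matching the paper's least-favorable $Ct_\lambda^{-1}\pi^*_\lambda$ when the constraint binds --- and certify optimality by the expansion $\tilde a'\tilde a={a^*_\lambda}'a^*_\lambda+2{a^*_\lambda}'d+d'd$ with $r'd=-d'Z\pi^*_\lambda\ge 0$. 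Your route is more elementary, makes the worst-case $\gamma$ explicit, and needs no existence lemma because the theorem hypothesizes a solution $\pi^*_\lambda$; the paper's heavier detour is not wasted, however, since $\omega(\delta)$ and \Cref{modulus_solution_lemma} are precisely the inputs reused in \Cref{optimality_corollary} for the efficiency bounds among nonlinear procedures, which a pure KKT argument would not deliver. Two cosmetic points only: stationarity literally reads $2Z'r=\mu g$, so your $\lambda$ is $\mu/2$ (harmless, as the multiplier cancels through the identity $\lambda\Pen(\pi^*_\lambda)=r'Z\pi^*_\lambda$ that yields $\overline B_\lambda$); and $\overline B_\lambda$ is only well defined when $\Pen(\pi^*_\lambda)>0$, the degenerate case being exactly the $\lambda=0$ situation that your complementary-slackness remark correctly disposes of.
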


The result follows by applying the general theory of \citet{IbKh85},
\citet{donoho94}, \citet{low95}, and \citet{ArKo18optimal} to our setting, which
allows us to rewrite \cref{eq:a_optimization} as a convex optimization problem.
Solving it then yields the result.

With the solution to \cref{eq:a_optimization} in hand, for estimation and
\ac{CI} construction, we select penalties $\lambda^{*}_{\textnormal{MSE}}$ and
$\lambda^{*}_{\textnormal{FLCI}}$ that optimize the \ac{MSE} and \ac{CI} length,
respectively. Specifically, the penalties solve the univariate optimization
problems
\begin{equation}\label{eq:optimal_lambda}
  \lambda^{*}_{\textnormal{MSE}} = \argmin_{\lambda} V_\lambda + (C \overline B_\lambda)^2,
  \quad \lambda^{*}_{\textnormal{FLCI}}
  = \argmin_{\lambda} \cv_{\alpha}(C\overline B_\lambda/\sqrt{V_\lambda})\sqrt{V_\lambda},
\end{equation}
with $V_\lambda$ and $\overline B_\lambda$ given in
\cref{eq:betahat_lambda_def}. The optimal linear estimator is then given by
$\hat{\beta}_{\lambda^{*}_{\textnormal{MSE}}}$, and the optimal \ac{FLCI} takes
the form
$\hat{\beta}_{\lambda^{*}_{\textnormal{FLCI}}} \pm
\sigma\norm{{a_{\lambda^{*}_{\textnormal{FLCI}}}}}_{2} \cdot \cv_{\alpha}\left(
  \frac{C}{\sigma}\frac{{a_{\lambda^{*}_{\textnormal{FLCI}}}}
    'Z\pi_{\lambda^{*}_{\textnormal{FLCI}}} }{
\Pen(\pi_{\lambda^{*}_{\textnormal{FLCI}}})\norm{{a_{\lambda^{*}_{\textnormal{FLCI}}}}}_{2}}
  \right)$.

As $t_{\lambda}\to 0$, provided that $\Pen(\cdot)$ is a norm on $Z_{2}$,
$\hat{\beta}_{\lambda}$ converges to the short regression estimate
$\hat{\beta}_{\textnormal{short}}=\frac{w'(I-H_{Z_{1}})Y}{w'(I-H_{Z_{1}})w}$
that only includes the unrestricted controls $Z_{1}$. This estimator minimizes
variance among all linear estimators with finite worst-case bias. In the other
direction, as $t_{\lambda}\to\infty$, $\hat{\beta}_{\lambda}$ converges to the
long regression estimate
$\hat{\beta}_{\textnormal{long}}=\frac{w'(I-H_{Z})Y}{w'(I-H_{Z})w}$, provided
that $w$ is not in the column space of $Z$ (which ensures that the condition
$\norm{\tilde{w}_{\lambda}}_{2}>0$ in~\Cref{thm:optimization_equivalence} holds
for all $\lambda$). This estimator minimizes variance among all linear
estimators that are unbiased, so \Cref{thm:optimization_equivalence} reduces to
the Gauss-Markov theorem in this case. In other words, the short and long
regressions are corner solutions of the bias-variance tradeoff, in which weight
is entirely placed on variance, or on bias.

\begin{example}[continues=example:l2]
  In this case, a convenient Lagrangian formulation
  for~\eqref{eq:pi_optimization} is
  \begin{equation*}
    \pi_{\lambda} = \argmin_\pi \norm{w-Z\pi}_{2}^{2} + \lambda\norm{M\pi}_{2}^{2}.
  \end{equation*}
  Suppose $Z'Z + \lambda M'M$ is invertible.\footnote{Invertibility holds so
    long as no element $\pi\ne 0$ satisfies $Z\pi=0$ and $M\pi=0$
    simultaneously. Intuitively, if $Z$ has rank less than $k$, then the data is
    not informative about certain directions $\pi$, and we require the matrix
    $M$ to place sufficient restrictions on $\pi$ in these directions.} Then the
  first order conditions immediately imply the closed form solution
  \begin{equation*}
    \pi_\lambda = (Z'Z + \lambda M'M)^{-1}Z'w,
  \end{equation*}
  which is a generalized ridge regression estimator of the propensity
  score.\footnote{We reserve the term ``ridge regression'' without the qualifier
    ``generalized'' for the case $M'M=I_k$.} Plugging this expression for
  $\pi_{\lambda}$ into \cref{eq:optimal_estimator} yields
  \begin{equation*}
    \hat\beta_\lambda
    =e_1'\left(X'X+\lambda
      \begin{pmatrix}
        0&0\\
        0&M'M
      \end{pmatrix}
    \right)^{-1}X'Y,
  \end{equation*}
  where $e_1=(1, 0, \dotsc, 0)'$ is the first standard basis vector. Thus, the
  optimal estimate can also be obtained from a generalized ridge regression of
  $Y$ onto $X$. The optimality of ridge regression in this setting was shown by
  \citet{li82}, and the above derivation gives this result as a special case of
  \Cref{thm:optimization_equivalence}. Under the \citet{li_linear_2020}
  specification $M=(0,(Z_{2}'(I-H_{Z_{1}})Z_{2}/n)^{1/2})$, the estimator
  further simplifies to a weighted average of the short and long regression
  estimates,
  \begin{equation}\label{eq:li_mueller_beta}
    \hat{\beta}_{\lambda}
    =\omega(\lambda)\hat{\beta}_{\textnormal{short}}
    + (1-\omega(\lambda))\hat{\beta}_{\textnormal{long}},
  \end{equation}
  with weights
  \begin{equation*}
    \omega(\lambda)=\frac{\lambda/n}{
      \lambda/n+ \varsigma^{2}
    },\qquad \varsigma^{2}=\frac{w'(I-H_{Z})w}{w'(I-H_{Z_{1}})w}=
    \frac{\var(\hat{\beta}_{\textnormal{short}})}{\var(\hat{\beta}_{\textnormal{long}})}.
  \end{equation*}
  The weight on the short regression increases with $\lambda$ (as the relative
  weight on variance in the bias-variance tradeoff increases), and decreases
  with $\varsigma^{2}$.
\end{example}

\begin{example}[continues=example:l1]
  In this case, the solution to~\eqref{eq:pi_optimization} is given by a variant
  of the lasso estimate \citep{tibshirani_regression_1996} that only penalizes
  $\gamma_{2}$.

  The resulting estimator $\hat{\beta}_{\lambda}$ is related to estimators
  proposed for constructing \acp{CI} using the lasso \citep[see, among
  others,][]{zhang_confidence_2014,javanmard_confidence_2014,van_de_geer_asymptotically_2014,belloni_inference_2014}.
  These papers propose estimators for $\beta$ that combine lasso estimates from
  the outcome regression of $Y$ onto $X$ with lasso estimates from the
  propensity score regression, which yields an estimate that is non-linear in
  $Y$. In contrast, our estimator only uses lasso estimates for the propensity
  score regression, and is linear in $Y$. We give a detailed comparison between
  our estimator and this ``double lasso'' approach in
  \Cref{sec:nonconvex_efficiency}.

  While under the $\ell_{2}$ constraints with
  $M=(0,(Z_{2}'(I-H_{Z_{1}})Z_{2}/n)^{1/2})$, the class of optimal estimators in
  \cref{eq:li_mueller_beta} depends on the data only through the short and long
  regression estimates, $\hat{\beta}_{\lambda}$ under $\ell_{1}$ constraints
  (or under $\ell_2$ constraints with other choices of $M$)
  doesn't simply interpolate between these two extremes, and the optimal weights
  $\alpha_{\lambda}$ display much richer variation. This is analogous to the
  form of optimal weights in regression discontinuity designs
  \citep[e.g.][]{ArKo18optimal,ImWa19}, where one needs to consider a range of
  bandwidths, rather than just interpolating between estimators that consider
  the maximal and minimal possible bandwidths.
\end{example}

\begin{example}[name={Partly linear model},label={example:partly_linear_model}]
  To flexibly control for a low-dimensional set of covariates $\tilde{z}_{i}$,
  one may specify a semiparametric model
  \begin{equation*}
    y_{i} = w_{i}\beta + h(\tilde{z}_{i}) + \varepsilon_{i},
    \quad \widetilde{\Pen}(h)\le \widetilde C,
  \end{equation*}
  where the penalty $\widetilde{\Pen}(h)$ is a seminorm on functions $h(\cdot)$
  that penalizes the ``roughness'' of $h$, such as the Hölder or Sobolev
  seminorm of order $q$. Minimax linear estimation in this model for particular
  choices of $\widetilde{\Pen}(h)$ has been considered in
  \citet{heckman_minimax_1988}. This setting is covered by our setup if we
  define $Z=I_{n}$, $\gamma_{i}=h(\tilde{z}_{i})$, and
  $\Pen(\gamma)=\min_{h\colon h(\tilde z_i)=\gamma_i, \; i=1,\dotsc n}
  \widetilde{\Pen}(h)$ (assuming the minimum is taken).
  \Cref{thm:optimization_equivalence} then implies that the optimal estimator
  takes the form
  \begin{equation*}
    \hat{\beta}_\lambda = \frac{\sum_{i=1}^{n}
      (w_i-g_\lambda(\tilde z_i))Y_i}{\sum_{i=1}^n(w_i-g_\lambda(\tilde z_i))w_i},
  \end{equation*}
  where $g_\lambda(\cdot)$ is analogous to the regularized regression estimate
  $\pi_\lambda$ in~\eqref{eq:pi_optimization}: it solves
  \begin{equation*}
    \min_{g} \sum_{i=1}^n(w_i-g(\tilde z_i))^2
    \quad\text{s.t.}\quad
    \widetilde{\Pen}(g) \le t_\lambda.
  \end{equation*}
  When $\widetilde{\Pen}$ is the Sobolev seminorm, this yields a spline estimate
  $g_{\lambda}$ \citep[see, for example][]{wahba90}. Interestingly, the
  estimator proposed in the seminal work by
  \citet{robinson_root-n-consistent_1988} takes a similar form to the estimator
  $\hat\beta_\lambda$, involving residuals from a nonparametric regression of
  $w$ on $\tilde z_i$. While the analysis in
  \citet{robinson_root-n-consistent_1988} is asymptotic, our results imply that
  a version of this estimator has sharp finite-sample optimality properties.
\end{example}

\subsection{Efficiency among non-linear procedures}\label{sec:linear_efficiency}

So far, we have restricted attention to procedures that are linear in the
outcomes $Y$. We now show that the estimator
$\hat{\beta}_{\lambda^{*}_{\textnormal{MSE}}}$, and the \ac{CI} based on the
estimator $\hat{\beta}_{\lambda^{*}_{\textnormal{FLCI}}}$ are in fact highly
efficient among all procedures, not just linear ones. This is due to the
convexity and symmetry of the parameter space $\Gamma$, and follows from the
general results in \citet{donoho94}, \citet{low95} and \citet{ArKo18optimal} for
estimation of linear functionals in Gaussian models with convex parameter
spaces.

\begin{corollary}\label{optimality_corollary}
  Let $\lambda^{*}_{\textnormal{MSE}}$ and $\lambda^{*}_{\textnormal{FLCI}}$ be
  given in~\cref{eq:optimal_lambda}, where the optimization is over all
  $\lambda$ with $t_{\lambda}>0$ such that $\norm{w-Z\pi_{\lambda}}_{2}>0$.
  Let $\hat\beta_\lambda$, $\overline B_\lambda$ and $V_{\lambda}$ be given in
  \cref{eq:betahat_lambda_def}. Let $\tilde \beta$ and
  $\tilde\beta\pm \tilde\chi$ denote some other (possibly non-linear) estimator
  and some other (possibly non-linear, variable-length) \ac{CI}.
  \begin{enumerate}[label=({\roman*})]
  \item\label{item:bias-variance} For any $\lambda$,
    $\sup_{\beta\in\mathbb{R}, \gamma\in\Gamma} \var_{\beta,
      \gamma}(\tilde\beta) \leq V_\lambda$ implies
    $\maxbias_{\Gamma}(\tilde\beta) \geq C \overline B_\lambda$, and
    $\maxbias_{\Gamma}(\tilde\beta) \leq C \overline B_\lambda$ implies
    $\sup_{\beta\in\mathbb{R}, \gamma\in\Gamma} \var_{\beta, \gamma}(\tilde\beta)
    \geq V_\lambda$.
  \item\label{item:mse} The worst-case \ac{MSE} improvement of $\tilde\beta$
    over $\hat\beta_{\lambda^*_{\textnormal{MSE}}}$ is bounded by
    \begin{equation*}
    \frac{\RMSE(\tilde\beta)}{\RMSE(\hat\beta_{\lambda^*_{\textnormal{MSE}}})}\ge
    \kappa^*_{\textnormal{MSE}}(X, \sigma, \Gamma)\ge 0.8,
  \end{equation*}
  where $\kappa^*_{\textnormal{MSE}}(X, \sigma, \Gamma)$ is given in
  \Cref{sec:proof-coroll-optimality}.
  \item\label{item:flci} The improvement of the expected length of the \ac{CI}
    $\tilde\beta\pm\tilde \chi$ over the optimal linear \ac{FLCI}
    $\hat\beta_{\lambda^*_{\textnormal{FLCI}}} \pm \cv_{\alpha}(C\overline
    B_{\lambda^*_{\textnormal{FLCI}}}/V_{\lambda^*_{\textnormal{FLCI}}}^{1/2})
    V_{\lambda^*_{\textnormal{FLCI}}}^{1/2}$ at $\gamma=0$ and any $\beta$ is
    bounded by
    \begin{equation*}
      \frac{E_{\beta,0}[\tilde\chi]}{\cv_{\alpha}(C\overline{B}_{\lambda^*_{\textnormal{FLCI}}}/
        V_{\lambda^*_{\textnormal{FLCI}}}^{1/2})V_{\lambda^*_{\textnormal{FLCI}}}^{1/2}}
      \ge \kappa^*_{\textnormal{FLCI}}(X, \sigma, \Gamma),
  \end{equation*}
  where $\kappa^*_{\textnormal{FLCI}}(X, \sigma, \Gamma)$ is given in
  \Cref{sec:proof-coroll-optimality} and is at least $0.717$ when $\alpha=0.05$.
  \end{enumerate}
\end{corollary}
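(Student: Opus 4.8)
\emph{Setup and strategy.} I would recast the problem as estimation of the linear functional $L(\beta,\gamma)=\beta$ in the Gaussian shift experiment $Y\sim\ND(w\beta+Z\gamma,\sigma^{2}I_{n})$ over the parameter space $\mathbb{R}\times\Gamma$, which (writing $\theta=(\beta,\gamma')'$) is convex and symmetric about the origin, and then apply the general theory of \citet{IbKh85}, \citet{donoho94}, \citet{low95}, and \citet{ArKo18optimal}. The organizing object is the modulus of continuity
\begin{equation*}
  \omega(\delta)=\sup\{\beta_{1}-\beta_{0}\colon (\beta_{j},\gamma_{j}')'\in\mathbb{R}\times\Gamma,\ \norm{w(\beta_{1}-\beta_{0})+Z(\gamma_{1}-\gamma_{0})}_{2}\le\delta\sigma\},
\end{equation*}
which, by symmetry, is attained at antipodal points $\theta_{1}=-\theta_{0}$ and is concave and increasing in $\delta$. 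The first task, common to all three parts, is to show that as $\lambda$ ranges over the admissible set in \cref{eq:optimal_lambda}, the pairs $(C\overline B_{\lambda},\sqrt{V_{\lambda}})$ from \Cref{thm:optimization_equivalence} sweep out exactly the affine bias--variance frontier, and that this frontier is the one parametrized by $\omega$ and its derivative through the hardest one-dimensional subfamily (the line joining $\theta_{0}$ and $\theta_{1}$). This amounts to matching the Lagrangian index $\lambda$ in \cref{eq:pi_optimization} to the modulus index $\delta$ and checking the two corner cases --- the short and long regressions --- identified after \Cref{thm:optimization_equivalence}.

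\emph{Part \ref{item:bias-variance}.} Given the frontier identification, the exact bias--variance lower bound is the content of \citet{low95}: over a convex parameter space the bias--variance tradeoff attained by linear estimators is simultaneously a valid lower bound for \emph{all} estimators. Concretely, restricting to the hardest one-dimensional subfamily reduces estimation of $\beta$ to a scaled bounded-normal-mean problem, for which minimizing worst-case variance subject to a worst-case bias bound is solved exactly by a linear rule; translating this back gives that $\sup_{\beta,\gamma}\var_{\beta,\gamma}(\tilde\beta)\le V_{\lambda}$ forces $\maxbias_{\Gamma}(\tilde\beta)\ge C\overline B_{\lambda}$, and symmetrically. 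The only real work is verifying the hypotheses (finiteness and concavity of $\omega$, convexity and symmetry of $\Gamma$) and the frontier correspondence above.

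\emph{Parts \ref{item:mse} and \ref{item:flci}.} For the \ac{MSE} comparison I would again pass to the hardest one-dimensional subfamily and lower-bound $\RMSE(\tilde\beta)$ by the minimax risk of the associated bounded-normal-mean problem; dividing the risk $V_{\lambda^{*}_{\textnormal{MSE}}}+(C\overline B_{\lambda^{*}_{\textnormal{MSE}}})^{2}$ of the optimal linear estimator by this lower bound defines $\kappa^{*}_{\textnormal{MSE}}(X,\sigma,\Gamma)$, and the universal value $0.8$ is the \citet{IbKh85}/Donoho--Liu bound on the ratio of minimax linear risk to minimax risk over a convex set. For the \ac{CI} comparison I would lower-bound the expected half-length $E_{\beta,0}[\tilde\chi]$ of an arbitrary valid \ac{CI} at $\gamma=0$ using the expected-length theory of \citet{ArKo18optimal}, which expresses this bound through $\omega$ and a one-sided testing quantile; dividing the optimal linear \ac{FLCI} half-length $\cv_{\alpha}(C\overline B_{\lambda^{*}_{\textnormal{FLCI}}}/\sqrt{V_{\lambda^{*}_{\textnormal{FLCI}}}})\sqrt{V_{\lambda^{*}_{\textnormal{FLCI}}}}$ into it defines $\kappa^{*}_{\textnormal{FLCI}}(X,\sigma,\Gamma)$, and the value $0.717$ for $\alpha=0.05$ follows by inserting the relevant normal quantiles into the resulting closed-form ratio.

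\emph{Main obstacle.} The crux is the frontier identification in the first task: establishing that the Lagrangian-indexed family $\hat\beta_{\lambda}$ traverses the \emph{entire} optimal modulus frontier (so that no point of it is skipped as $\lambda$ varies), and handling possible nondifferentiability of $\omega$ together with the boundary behavior at the short- and long-regression corners. Once this correspondence is pinned down, parts \ref{item:bias-variance}--\ref{item:flci} reduce to quoting the one-dimensional bounded-normal-mean bounds and evaluating the universal constants.
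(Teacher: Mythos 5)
Your proposal follows essentially the same route as the paper's proof: part \ref{item:bias-variance} via Theorem 1 of \citet{low95} applied to the hardest one-dimensional subfamily (the paper uses the submodel $\beta\in[-C/t_\lambda, C/t_\lambda]$, $\gamma=-\pi^*_\lambda\beta$, with your ``main obstacle''---the correspondence between the Lagrangian index $\lambda$ and the modulus index $\delta$---already resolved by \Cref{modulus_solution_lemma} in the proof of \Cref{thm:optimization_equivalence}, which the corollary presupposes), part \ref{item:mse} via \citet{donoho94} with the Ibragimov--Khasminskii/Donoho--Liu constant $0.8$, and part \ref{item:flci} via the expected-length bound of Corollary 3.3 in \citet{ArKo18optimal}. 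The one caveat is your treatment of the constant $0.717$: it does not follow by ``inserting the relevant normal quantiles into the resulting closed-form ratio,'' since $\kappa^*_{\textnormal{FLCI}}(X,\sigma,\Gamma)$ depends on the modulus $\omega$ and is problem-specific; the value $0.717$ is a \emph{universal} lower bound over all concave moduli, which the paper imports from Theorem 4.1 of \citet{ArKo20sensitivity} rather than computing directly.
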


By construction, the estimator $\hat{\beta}_{\lambda}$ minimizes variance among
all linear estimators with a bound $C\overline{B}_{\lambda}$ on the bias (or
equivalently, it minimizes bias among all linear estimators with a bound
$V_{\lambda}$ on the variance).
\Cref{optimality_corollary}\ref{item:bias-variance} shows that this optimality
property is retained if we enlarge the class of estimators to all estimators,
including non-linear ones. As a result, the minimax linear estimator
$\hat{\beta}_{\lambda^{*}_{\textnormal{MSE}}}$ (i.e.\ the estimator attaining
the lowest worst-case \ac{MSE} in the class of linear estimators) continues to
perform well among all estimators, including non-linear ones: by
\Cref{optimality_corollary}\ref{item:mse}, its worst-case \ac{MSE} efficiency is
at least 80\%. The exact efficiency bound
$\kappa^*_{\textnormal{MSE}}(X, \sigma, \Gamma)$ depends on the design matrix,
noise level, and particular choice of the parameter space, and can be computed
explicitly in particular applications. We have found that typically the
efficiency is considerably higher than the 80\% lower bound.

Finally, \Cref{optimality_corollary}\ref{item:flci} shows that it is not
possible to substantively improve upon the \ac{FLCI} based on
$\hat{\beta}_{\lambda^{*}_{\textnormal{FLCI}}}$ in terms of expected length when
$\gamma=0$, even if we consider variable length \acp{CI} that ``direct power''
at $\gamma=0$ (potentially at the expense of longer expected length when
$\gamma\neq 0$). The construction of the \ac{FLCI} may appear conservative: its
length depends on the worst-case bias over the parameter space for
$(\beta, \gamma')'$, which, as the proof of \Cref{thm:optimization_equivalence}
shows, attains at
$\gamma=Ct_{\lambda^{*}_{\textnormal{FLCI}}}^{-1}\pi_{\lambda^{*}_{\textnormal{FLCI}}}$,
with $\Pen(\gamma)=C$. Therefore, one may be concerned that when the magnitude
of $\gamma$ is much smaller than $C$, the \ac{FLCI} is too long.
\Cref{optimality_corollary}\ref{item:flci} shows that this is not the case, and
the efficiency of the \ac{FLCI} is at least 71.7\% relative to variable-length
\acp{CI} that optimize their expected length when $\gamma=0$. The exact
efficiency bound $\kappa^*_{\textnormal{MSE}}(X, \sigma, \Gamma)$ can be
computed explicitly in particular applications, and we have found that it is
typically considerably higher than $71.7\%$.

A consequence of \Cref{optimality_corollary}\ref{item:flci} is that it is
impossible to form a CI that is adaptive with respect to the regularity parameter
$C$ that bounds $\Pen(\gamma)$. In the present setting, an adaptive CI would
have length that automatically reflects the true regularity $\Pen(\gamma)$ while
maintaining coverage under a conservative a priori bound on $\Pen(\gamma)$.
However, according to \Cref{optimality_corollary}\ref{item:flci}, any CI must
have expected width that reflects the conservative a priori bound $C$ rather
than the true regularity $\Pen(\gamma)$, even when $\Pen(\gamma)$ is much
smaller than the conservative a priori bound $C$. In particular, it is
impossible to automate the choice of the regularity parameter $C$ when forming a
CI\@. We therefore recommend varying $C$ as a form of sensitivity analysis, or
using auxiliary information to choose $C$; see \Cref{choice_of_C_remark}.

\subsection{Heterogeneous treatment effects}\label{sec:heter-treatm-effects}

If $w_{i}$ is as good as randomly assigned conditional on $z_{i}$, the
coefficient $\beta$ in \cref{linear_regression_eq} can be interpreted as the
\ac{ATE} of a one-unit increase in the variable $w$. This interpretation
requires that the individual \acp{TE} are mean independent of $z_{i}$. To relax
this assumption, we replace $\beta$ in \cref{linear_regression_eq} with a
covariate-specific coefficient $\beta(z)$ that represents the conditional
\ac{ATE} for units $i$ with $z_{i}=z$, obtaining the model
\begin{equation}\label{het_te_regression_eq}
  Y_i = w_i\beta(z_i) + z_i'\gamma + \varepsilon_i.
\end{equation}
Suppose the parameter of interest takes the form $\int \beta(z)\, d\mu(w, z)$
where $\mu$ is a signed measure defined on some set that includes the empirical
support $\{(w_{i}, z_{i})\}_{i=1}^{n}$. Allowing $\mu$ to be signed allows for
inference on non-convex averages of $\beta(z)$. We allow $\mu$ to place nonzero
mass outside the empirical support $\{(w_{i}, z_{i})\}_{i=1}^{n}$, thereby
allowing for extrapolation. The general theory of estimation and inference on
linear functionals developed in \citet{donoho94} and \citet{ArKo18optimal}
underlying \Cref{thm:optimization_equivalence,optimality_corollary} can be
applied to inference on this parameter under any convex and symmetric
restriction on the function $\beta(\cdot)$ and the parameter $\gamma$---only the
worst-case bias calculation in \cref{eq:maxbias_def} changes. We now discuss two
particular specifications for $\mu$ and $\beta(\cdot)$.

For the first approach, let $\mu$ correspond to a weighted empirical measure
with weights $c_{i}$ that sum to one, so the parameter of interest is given by
$\tilde{\beta}=\sum_{i}c_{i}\beta(z_{i})$. For example, the unweighted case
$c_{i}=1/n$ gives the (conditional on the sample) \ac{ATE}, while setting
$c_{i}=w_{i}/\sum_{j}w_{j}$ gives the \ac{ATE} for the treated. Assume also that
the function $\beta(z)$ is linear, $\beta(z)=z'\delta$, and that the first
element of $z_{i}$ is a constant. Consider a parameter space for
$(\delta',\gamma')'$ given by
$\{(\delta',\gamma')'\in\mathcal{B}\colon \Pen(\delta,\gamma)\leq C\}$, where
$\mathcal{B}$ is a subspace of a Euclidean space with a seminorm $\Pen$.
Allowing $\mathcal{B}$ to be a subspace allows us to restrict $\beta(z)$ to only
depend on a subset of the controls. As noted by \citet[Section 5.3]{imbens_recent_2009} in the unpenalized case, we can map this problem back
into the model in \cref{linear_regression_eq,eq:Gamma_def} by rewriting
\cref{het_te_regression_eq} under these assumptions as
\begin{equation*}
  Y_i = w_i\tilde{\beta}
  +w_{i}(z_{i,-1}-\sum_{j}c_{j}z_{j,-1})'\delta_{-1} + z_{i}'\gamma + \varepsilon_i,
\end{equation*}
where $z_{i,-1}$ is the vector of controls excluding a constant and
$\delta_{-1}$ is the corresponding subvector of $\delta$. This is exactly our
problem in \cref{linear_regression_eq}, with the control vector consisting of
the original controls $z_{i}$ as well as the interaction of the treatment with
the demeaned controls, $w_{i}(z_{i,-1}-\sum_{i}c_{i}z_{i,-1})$. We use this
approach in our empirical application in \Cref{sec:empirical-app}.

For the second approach, we compute the same linear estimator and bias-aware
\acp{CI} as in the homogeneous \ac{TE} model in \cref{linear_regression_eq}; we
only change the interpretation of the estimand as targeting a particular
weighted average of \acp{TE} given in the next theorem. To set the stage for the
theorem, let us denote the worst-case bias of a linear estimator $\hat{\beta}$
relative to an estimand $\int \beta(z)d\mu(w, z)$ when the heterogeneity is
completely unrestricted by
$\widetilde{\maxbias}_{\Gamma}(\hat{\beta};\mu)=\sup_{\beta(\cdot), \gamma}
\left[ \sum_{i=1}^n a_{i}w_i \beta(z_i)+a'Z\gamma - \int \beta(z)\, d\mu(w, z)
\right]$, where the supremum is over $\gamma\in\Gamma$ and all functions
$\beta(\cdot)$. For an $n$-vector $a$, let $\mu_{a, w}^{*}(w, z)$ denote a
weighted empirical measure with (possibly negative) weights $a_{i}w_{i}$, so
that the parameter of interest becomes
$\tilde{\beta}_{a, w}=\sum_{i}a_{i}w_{i}\beta(z_{i})$.

\begin{theorem}\label{theorem:weighted_te}
  Let $\mu$ be a signed measure with $\int\, d \mu(w, z)=1$, and let
  $\hat{\beta}=a'Y$ be an estimator with $a'w=1$. If $\mu=\mu^*_{a, w}$, then
  $\widetilde{\overline{\bias}}_{\Gamma}(\hat\beta;\mu)
  =\maxbias_{\Gamma}(\hat\beta)$, with $\maxbias_{\Gamma}(\hat\beta)$ given
  in~\cref{eq:maxbias_def}. If $\mu\ne \mu^*_a$, then
  $\widetilde{\overline{\bias}}_{\Gamma}(\hat\beta;\mu)=\infty$. Furthermore,
  the estimator $\hat{\beta}_{\lambda}$ given in \cref{eq:optimal_estimator}
  solves
  \begin{equation}\label{moving_goalposts_eq}
    \min_{\hat\beta=a'Y, a\in\mathbb{R}^{n}} \var(\hat\beta)
    \quad\text{s.t.}\quad
    \min_\mu \widetilde{\overline{\bias}}_{\Gamma}(\hat\beta)\le C\overline B_\lambda,
  \end{equation}
  where the second minimization is over all signed measures such that
  $\int d\mu(w, z)=1$.
\end{theorem}

The theorem gives three results for inference when $\beta(\cdot)$ is
unrestricted. First, given a linear estimator $a'Y$, the only estimand for which
the bias is finite is $\tilde{\beta}_{a, w}$. Second, for this estimand, the
bias is the same as in the homogeneous \ac{TE} model with $\beta(z)=\beta$.
Thus, assuming homogeneous \acp{TE} leads to valid inference for
$\tilde{\beta}_{a, w}$ when \acp{TE} are in fact heterogeneous. In particular,
the bias-aware \ac{CI} based on the estimator $\hat{\beta}_{\lambda}$ provides
inference on the weighted average
$\tilde{\beta}_{a_{\lambda}, w}=
\sum_{i}\tilde{w}_{\lambda, i}w_{i}\beta(z_{i})/\sum_{i}\tilde{w}_{\lambda, i}w_{i}$.
Observe that, when the treatment $w_{i}$ is binary, the weights are non-negative
if and only if the residual in the propensity score regression
$\tilde{w}_{\lambda, i}$ is positive whenever $w_{i}=1$---this can be easily
verified in a given application. Equivalently, the weights are positive if the
fitted values $z_{i}'\pi_{\lambda}$ are smaller than one: the fitted values in
the propensity score regression must respect the population constraint that
treatment probabilities must be smaller than one. This finding is a
finite-sample analog of the identification result in \citet{ghk21}, who show
that the estimand in the partly linear model has an analogous weighted \ac{ATE}
interpretation under heterogeneous \acp{TE}. An analogous identification result
in a random design setting dates to at least
\citet{angrist_estimating_1998}, who gave a weighted \ac{ATE} interpretation to
the \ac{OLS} estimand.

Third, the estimator $\hat{\beta}_{\lambda}$ remains optimal in the
heterogeneous \ac{TE} model in that it solves a bias-variance tradeoff in a
problem where we can pick the estimand to make the estimation problem as easy as
possible. The problem~\eqref{moving_goalposts_eq} is a finite-sample version of
the ``moving the goalposts'' problem considered by \citet[][Section
5.4]{crump_moving_2006}.\footnote{The optimization
  problem~\eqref{moving_goalposts_eq} was also used by \citet{ImWa19} in the
  context of a regression discontinuity design with multiple cutoffs, although
  they did not explicitly note the optimality properties of the resulting
  estimator.} \citet{crump_moving_2006} derived the measure $\mu$ that minimizes
the asymptotic variance of a particular class of inverse propensity score
weighted estimators of weighted \acp{ATE} in a random design setup.
\citet{ghk21} show this measure in fact minimizes the variance among all regular
estimators; the measure coincides with the weighting of the treatment effects
given in \citet{angrist_estimating_1998} under homoskedasticity, giving an
optimality property to the \ac{OLS} estimator.

\section{Implementation with non-Gaussian and heteroske\-dastic errors}\label{sec:impl-with-non}

We now discuss practical implementation issues, allowing $\varepsilon$ to be
non-Gaussian and heteroskedastic. As a baseline, we propose the following
implementation:
\begin{algorithm}[Baseline implementation]\mbox{}\label{algorithm:baseline}
  \begin{description}
  \item[Input] Data $(Y, X)$, penalty $\Pen(\cdot)$, regularity parameter $C$,
    and initial estimates of residuals
    $\hat{\varepsilon}_{\textnormal{init}, 1}, \dotsc,
    \hat{\varepsilon}_{\textnormal{init}, n}$.
  \item[Output] Estimator and \ac{CI} for $\beta$.
  \end{description}
  \begin{enumerate}
  \item\label{initial_variance_estimator_step} Compute an initial variance estimator,
    $\hat{\sigma}^{2}=\frac{1}{n}\sum_{i=1}^{n}
    \hat{\varepsilon}_{\textnormal{init}, i}^2$, assuming homoskedasticity.
  \item\label{solution_path_step} Compute the solution path
    $\{\pi_{\lambda}\}_{\lambda>0}$ for the regularized propensity score
    regression in \cref{eq:pi_optimization}, indexed by the penalty weight
    $\lambda$. For each $\lambda$, compute $\hat{\beta}_{\lambda}$ as in
    \cref{eq:optimal_estimator}, and $\overline{B}_{\lambda}$, and $V_{\lambda}$
    as in \cref{eq:betahat_lambda_def}, with $\hat{\sigma}^{2}$ in place of
    $\sigma^{2}$ in the formula for $V_\lambda$.
  \item Compute $\lambda^*_{\textnormal{MSE}}$ and
    $\lambda^*_{\textnormal{FLCI}}$ as in~\cref{eq:optimal_lambda}, and compute
    the robust variance estimate
    $\hat V_{\lambda, \textnormal{rob}} = \sum_{i=1}^n {a_{\lambda, i}}^2
    \hat\varepsilon_{\textnormal{init}, i}^2$, where
    $a_\lambda=\frac{w-Z\pi_\lambda}{(w-Z\pi_\lambda)'w}$.
  \end{enumerate}
  Return the estimator $\hat\beta_{\lambda^*_{\textnormal{MSE}}}$ and the
  \ac{CI}
  $\hat\beta_{\lambda^*_{\textnormal{FLCI}}} \pm \cv_{\alpha}\left(C\overline
    B_{\lambda^*_{\textnormal{FLCI}}}/\hat{V}_{\lambda^*_{\textnormal{FLCI}},
      \textnormal{rob}}^{1/2}\right)\cdot
  \hat{V}^{1/2}_{\lambda^*_{\textnormal{FLCI}}, \textnormal{rob}}$.
\end{algorithm}

The following remarks discuss the implementation choices, and the optimality and
validity of the baseline procedure.

\begin{remark}[Validity]\label{lindeberg_remark}
  As the initial residual estimates $\hat\varepsilon_{\textnormal{init}, i}$, we
  can take residuals from a regularized outcome regression of $Y$ on $X$ (see
  \cref{eq:YX_regularized_regression} in \Cref{sec:global_estimation_algebra}).
  We give conditions for asymptotic validity of the resulting \acp{CI} in
  \Cref{sec:se_consistency}. The key requirement is that the maximal Lindeberg
  weight
  $\operatorname{Lind}(a_\lambda)=\max_{1\le i\le n}{a_{\lambda,
      i}}^2/\sum_{j=1}^n{a_{\lambda, j}}^2$ associated with the estimator
  $\hat{\beta}_{\lambda}$ shrink quickly enough relative to error in the
  estimator used to form the residuals. Ensuring that
  $\operatorname{Lind}(a_\lambda)$ is small prevents the estimator from putting
  too much weight on a particular observation, so that the Lindeberg condition
  for the central limit theorem holds.

  Whether these conditions hold for the optimal estimator will in general depend
  on the form of $\Pen(\gamma)$ and on the magnitude of $C$ relative to $n$. To
  ensure that $\textnormal{Lind}(a_\lambda)$ is small enough in a particular
  sample for a normal approximation to work well, one may impose a bound on this
  term by only minimizing~\cref{eq:optimal_lambda} over $\lambda$ such that
  $\textnormal{Lind}(a_\lambda)$ is small enough when computing
  $\lambda^*_{\textnormal{FLCI}}$. This is similar to proposals by
  \citet{noack_bias-aware_2019}, and \citet{javanmard_confidence_2014} in other
  settings. As discussed further in \Cref{sec:se_consistency}, under mild
  regularity conditions, imposing such a bound doesn't affect the convergence
  rate of the resulting \ac{CI}.
\end{remark}
\begin{remark}[Efficiency]
  The weights $a_{\lambda^*_{\textnormal{FLCI}}}$ and
  $a_{\lambda^*_{\textnormal{MSE}}}$ are not optimal under
  heteroskedasticity. One could in principle generalize the \ac{FGLS} approach
  used for unconstrained estimation by deriving optimal weights under the
  assumption $\varepsilon\sim \ND(0,\Sigma)$ (which simply follows the above
  analysis after pre-multiplying by $\Sigma^{-1/2}$), and derive conditions
  under which the estimator and \ac{CI} that plug in an estimate of $\Sigma$ are
  optimal asymptotically when the assumption of known variance and Gaussian
  errors is dropped. Instead of pursuing this generalization, our baseline
  implementation computes the weights $a_{\lambda}$ under the assumption of
  homoskedasticity, but we use robust standard errors when computing the
  \ac{CI}. Thus, analogous to the ubiquitous practice of reporting \ac{OLS} with
  \ac{EHW} standard errors in the unconstrained setting, our baseline
  implementation leverages homoskedasticity for efficiency, but the \acp{CI}
  remain valid when the homoskedasticity assumption is violated.
\end{remark}
\begin{remark}[Choice of $C$]\label{choice_of_C_remark}
  By \Cref{optimality_corollary}\ref{item:flci}, one cannot use a data-driven
  rule to automate the choice of $C$ when forming a CI\@. Therefore, plausible
  magnitudes of $\Pen(\gamma)$ need to be assessed using prior knowledge.

  Such assessments can be aided by relating the magnitude of $\Pen(\gamma)$ to
  other quantities. Let us now describe an approach to calibrating $C$ that we
  use in our numerical and empirical work in
  \Cref{sec:simulation-results,sec:empirical-app}. Let
  $z_{i1}=(1,\tilde{z}_{i1}')'$ denote a vector of baseline controls, believed
  to be important confounders, and let $z_{i2}$ be a possibly high-dimensional
  vector of additional controls, believed to be less important. Suppose that
  $\Pen(\gamma)=\norm{\gamma_{2}}$ corresponds to some norm on the additional
  controls as in \Cref{example:l2,example:l1}. To formalize the belief that the
  baseline controls are more important, we use the norm of the population
  coefficient $\widetilde{\gamma}_{short}$ on $\tilde z_{i1}$ in the short
  regression of $Y_{i}$ on a constant, $w_{i}$ and $\tilde{z}_{i1}$ as a bound
  on $\norm{\gamma_{2}}$. Since $\widetilde{\gamma}_{short}$ is unknown, we set
  $C^{rot}=\norm{\widehat{\gamma}_{short}}$ as a rule of thumb, where
  $\widehat{\gamma}_{short}$ is an \ac{OLS} estimate of
  $\widetilde{\gamma}_{short}$.\footnote{Formally, one should account for
    sampling uncertainty in $\widehat{\tilde \gamma}_{short}$ to ensure validity
    of the \ac{CI} under the assumption
    $\norm{\gamma_{2}}\leq\norm{\tilde{\gamma}_{short}}$, such as by combining a
    first stage CI for $\tilde{\gamma}_{short}$ with a Bonferroni correction. In
    our Monte Carlos in \Cref{sec:simulation-results}, however, we find that the
    $C^{rot}$ leads to valid coverage when this assumption holds even without
    additional corrections for sampling uncertainty.}

  Calibrations of the regularity parameter $C$ should be complemented by varying
  $C$ as a form of sensitivity analysis. Robustness of the results can also be
  assessed by computing two additional values of the regularity parameter. The
  first is a ``breakdown value'' $C^*$, the largest value of $C$ such the
  empirical finding of interest holds. Second, by way of a specification check,
  one can form a lower \ac{CI} $\hor{\hat{\underline C}, \infty}$ for $C$ to
  assess the plausibility of a given bound on $\Pen(\gamma)$. We present such a
  \ac{CI} in \Cref{sec:C_lower_CI} for the case where $\Pen(\gamma)$ takes the
  form of an $\ell_p$ constraint.

\end{remark}
\begin{remark}[Computational issues]
  Step~\ref{solution_path_step} involves computing the solution path of a
  regularized regression estimator. Efficient algorithms exist for computing
  these paths under $\ell_{1}$ penalties and its variants
  \citep{efron2004lars,rosset_piecewise_2007}. Under $\ell_{2}$ penalty, the
  regularized regression has a closed form, so that our algorithm can again be
  implemented in a computationally efficient manner. For other types of
  penalties, the convexity of the optimization problem in \cref{eq:pi_optimization}
  can be exploited to yield efficient implementation. We also note that since
  the solution path $\pi_{\lambda}$ does not depend on $C$, it only needs to
  be computed once, even when multiple choices of $C$ are considered in a
  sensitivity analysis.
\end{remark}

\section{Rates of convergence}\label{sec:asymptotics}

We now derive the rates of convergence for the optimal linear \acp{FLCI} as
$n\to\infty$. For ease of notation, we assume all coefficients are constrained,
and focus on the case $\Pen(\gamma)=\norm{\gamma}_{p}$ for some $p\ge 1$, and
the case $\Pen(\gamma)=\norm{Z\gamma/\sqrt{n}}_{2}$ (see \Cref{example:l2}). We
allow the regularity parameter $C=C_{n}$ go to $0$ or $\infty$ with the sample
size, and consider high dimensional asymptotics where $k=k_n\gg n$. We consider
a standard ``high dimensional'' setting, placing conditions on the design matrix
$X$ that hold with high probability when $w_i, z_i$ are drawn i.i.d.\ over $i$,
with the eigenvalues of $\var((w_i, z_i')')$ bounded away from zero and infinity.

Let $q\in[0,\infty]$ denote the Hölder conjugate of $p$, satisfying $1/p+1/q=1$.
We will show that when $\Pen(\gamma)=\norm{\gamma}_{p}$, the optimal linear
\ac{FLCI} shrinks at the rate
\begin{equation}\label{eq:convergence-rates}
  n^{-1/2}+C r_{q}(k, n)\quad\text{where}\quad
  r_{q}(k, n) =
  \begin{cases}
    k^{1/q}/\sqrt{n} & \text{if $q<\infty$,} \\
    \sqrt{\log k}/\sqrt{n} & \text{if $q=\infty$}.
  \end{cases}.
\end{equation}
Furthermore, for $p=1$ and $p=2$, we will show that no other \ac{CI} can shrink
at a faster rate. For $p=1$, we will in fact prove a stronger result showing
that imposing sparsity bounds on the outcome and propensity score regressions,
in addition to the bound on $\Pen(\gamma)$, does not help achieve a faster rate,
unless one assumes sparsity of order greater than $C_n\sqrt{n/\log(k)}$ (termed
the ``ultra sparse'' case in \citet{cai_lasso_2017}). For the case
$\Pen(\gamma)=\norm{Z\gamma/\sqrt{n}}_{2}$, we will show that the optimal rate
is given by $n^{-1/2}+C$ when $k\gg n$.

If $k\gg n$ and $C=C_{n}$ does not decrease to zero with $n$, these rates
require $p<2$ (so that $q>2$) for consistency. When $p=1$, we can then allow $k$
to grow exponentially with $n$, whereas setting $1<p<2$ allows for $k$ to grow
at a polynomial rate in $n$ that depends on $p$. Since taking $C_n\to 0$ rules
out even a single coefficient being bounded away from zero, these bounds imply
that taking $p<2$ in ``high dimensional'' settings is necessary for consistency,
with $p=1$ offering the best rate conditions. It also follows from these rate
results that if $C_{n}=C$ does not decrease to zero with $n$, the bias term can
dominate asymptotically, making it necessary to explicitly account for bias in
\ac{CI} construction even in large samples.

\subsection{Upper bounds}\label{sec:asymptotic_upper_bounds}

To state the result, given $\eta>0$, let $\mathcal{E}_n(\eta)$ denote the set of
design matrices $X$ for which there exists $\delta\in\mathbb{R}^k$ such that
\begin{align*}
  \frac{1}{n}\norm{w-Z\delta}_{2}^{2}&\leq \frac{1}{\eta},
  & \frac{1}{n}w'(w-Z\delta)&\geq \eta,
  & \frac{1}{n}\norm{Z'(w-Z\delta)}_{q}
  & \le \frac{r_{q}(k, n)}{\eta}.
\end{align*}
Let
$R^{*}_{\textnormal{FLCI}}(X, C)=2\cv_\alpha(C
\overline{B}_{\lambda^{*}_{\textnormal{FLCI}}}/V_{\lambda^{*}_\textnormal{FLCI}}^{1/2})\cdot
V_{\lambda^{*}_\textnormal{FLCI}}^{1/2}$ denote the length of the optimal linear
\ac{FLCI}.

\begin{theorem}\label{thm:upper_rate_bound}
  (i) Suppose $\Pen(\gamma)=\norm{\gamma}_{p}$. There exists a finite constant
  $K_\eta$ depending only on $\eta$ such that
  $R^{*}_{\textnormal{FLCI}}(X, C)\le K_\eta n^{-1/2}(1+ C k^{1/q})$ for $p>1$,
  and $R^{*}_{\textnormal{FLCI}}(X, C)\le K_\eta n^{-1/2}(1+ C \sqrt{\log k})$
  for $p=1$ for any $X\in \mathcal{E}_n(\eta)$. (ii) Suppose
  $\Pen(\gamma)=\norm{Z\gamma/\sqrt{n}}_{2}$. There exists a finite constant
  $K_\eta$ depending only on $\eta$ such that
  $R^{*}_{\textnormal{FLCI}}(X, C)\le K_\eta (n^{-1/2}+ C)$ for any $X$ such
  that $\eta\leq w'w/n$.
\end{theorem}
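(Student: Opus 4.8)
The plan is to bound $R^{*}_{\textnormal{FLCI}}(X,C)$ from above by the length of the FLCI built from a single, conveniently chosen linear estimator, and then exploit the fact that $\lambda^{*}_{\textnormal{FLCI}}$ minimizes length over the entire linear class. Since the half-length $\chi$ in \cref{eq:linear-CI} is increasing in both worst-case bias and variance, and since \Cref{thm:optimization_equivalence} shows that for every bias bound the minimum-variance linear estimator already lies in the family $\{a^{*}_{\lambda}\}$, the optimal length $R^{*}_{\textnormal{FLCI}}(X,C)$ is no larger than the FLCI length of \emph{any} linear estimator $\hat\beta=a'Y$ with $a'w=1$. So it suffices to produce one good $a$, and the mere \emph{existence} of a feasible $\delta$ in the definition of $\mathcal{E}_n(\eta)$ is exactly what I need, rather than optimality of some $\pi^{*}_\lambda$.

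Concretely, I would take $a_\delta=(w-Z\delta)/\bigl((w-Z\delta)'w\bigr)$, using the $\delta$ guaranteed by $X\in\mathcal{E}_n(\eta)$ in part (i) and $\delta=0$ in part (ii); the inequality $\tfrac1n w'(w-Z\delta)\ge\eta>0$ (resp.\ $w'w/n\ge\eta$) makes the denominator positive, so $a_\delta$ is well defined and satisfies $a_\delta'w=1$. This normalization is the key simplification: the bias $a_\delta'(w\beta+Z\gamma)-\beta=(a_\delta'w-1)\beta+a_\delta'Z\gamma$ loses its $\beta$ component, leaving $\maxbias_\Gamma(\hat\beta)=\sup_{\gamma\in\Gamma}a_\delta'Z\gamma$. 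For part (i) this is a dual-norm identity, $\sup_{\norm{\gamma}_p\le C}a_\delta'Z\gamma=C\norm{Z'a_\delta}_q$ by H\"older's inequality; for part (ii), writing $u=Z\gamma/\sqrt n$ gives $\sup=\sqrt n\,C\norm{H_Z a_\delta}_2\le\sqrt n\,C\norm{a_\delta}_2$. Recognizing the worst-case bias as this dual norm is the one genuinely substantive step.

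The rest is substitution, and the three defining inequalities of $\mathcal{E}_n(\eta)$ are tailored for it: $\norm{w-Z\delta}_2^2\le n/\eta$ and $(w-Z\delta)'w\ge n\eta$ bound the variance $V=\sigma^2\norm{w-Z\delta}_2^2/[(w-Z\delta)'w]^2\le\sigma^2/(n\eta^3)$, while $\norm{Z'(w-Z\delta)}_q\le n\,r_q(k,n)/\eta$ combined with the denominator bound gives $\maxbias_\Gamma=C\norm{Z'(w-Z\delta)}_q/[(w-Z\delta)'w]\le C\,r_q(k,n)/\eta^2$. In part (ii) the choice $\delta=0$ yields $\norm{a_0}_2=1/\norm{w}_2$, hence $V\le\sigma^2/(n\eta)$ and $\maxbias_\Gamma\le C/\sqrt\eta$.

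To assemble, I would use the elementary bound $\cv_\alpha(B)\le B+\cv_\alpha(0)$, which follows pointwise from $\abs{\ND(B,1)}\le B+\abs{\ND(0,1)}$, so that $\chi=V^{1/2}\cv_\alpha(\maxbias_\Gamma/V^{1/2})\le\maxbias_\Gamma+\cv_\alpha(0)\,V^{1/2}$ and therefore $R^{*}_{\textnormal{FLCI}}\le 2\bigl(\maxbias_\Gamma+\cv_\alpha(0)\,V^{1/2}\bigr)$. Plugging in the bounds above and folding all $\eta$- and $\sigma$-dependent factors into a single constant $K_\eta$ (treating $\sigma$ as fixed) gives $K_\eta n^{-1/2}(1+Ck^{1/q})$ for $p>1$, the same expression with $\sqrt{\log k}$ for $p=1$ (where $q=\infty$ and the dual norm is $\ell_\infty$), and $K_\eta(n^{-1/2}+C)$ for part (ii). I do not expect a serious obstacle here: the only content is the dual-norm characterization of the worst-case bias and the observation that the $\mathcal{E}_n(\eta)$ conditions control precisely the residual $\ell_2$ norm, the normalizing inner product $w'(w-Z\delta)$, and the $\ell_q$ norm of $Z'(w-Z\delta)$; the point needing care is simply that the length bound requires a feasible linear estimator, not an optimal $\pi^{*}_\lambda$, which is why existence of $\delta$ is enough.
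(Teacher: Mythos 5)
Your proposal is correct and follows essentially the same route as the paper: reduce to bounding the FLCI length of a single feasible linear estimator, take weights proportional to $w-Z\delta$ with $\delta$ from the definition of $\mathcal{E}_n(\eta)$ (the short regression, $\delta=0$, for part (ii)), identify the worst-case bias as the dual norm $C\norm{Z'a}_q$ via H\"older (Cauchy--Schwarz/projection in part (ii)), and read off the variance and bias bounds from the three defining inequalities of $\mathcal{E}_n(\eta)$. The only cosmetic difference is that you make the assembly step $\cv_\alpha(B)\le B+\cv_\alpha(0)$ explicit, which the paper leaves implicit.
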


The second part of the \namecref{thm:upper_rate_bound} follows since the short
regression without any controls achieves a bias that is of the order $C$. The
first part shows that the upper bounds on the rate of convergence match those in
\cref{eq:convergence-rates} if the high-level condition
$X\in \mathcal{E}_n(\eta)$ holds. The next lemma shows that this high-level
condition holds with high probability when $w_{i}, Z_{i}$ are drawn i.i.d.\ from
a distribution satisfying mild conditions on moments and covariances.

\begin{lemma}\label{upper_bound_event_lemma}
  Suppose $w_i, z_i$ are drawn i.i.d.\ over $i$, and let
  $\delta=\argmin_b E[(w_i-z_i'b)^2]$ so that $z_{i}'\delta$ is the population
  best linear predictor error of $w_{i}$. Suppose that the linear prediction
  error $E[(w_{i}-z_{i}'\delta)^{2}]$ is bounded away from zero as
  $k\to \infty$, $E[w_i^2]<\infty$, and that
  $\sup_j E[\abs{(w_{i}-z_{i}'\delta)z_{ij}}^{\max\{2,q\}}]<\infty$ when $p>1$, and, for some
  $c>0$, $P\left(\abs{(w_{i}-z_{i}'\delta)z_{ij}}\ge t \right)\le 2\exp(-c t^2)$ for all $j$
  when $p=1$. Then, for any $\tilde\eta>0$, there exists $\eta$ such that
  $X\in \mathcal{E}_n(\eta)$ with probability at least $1-\tilde\eta$ for large
  enough $n$.
\end{lemma}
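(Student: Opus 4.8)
The plan is to establish the three high-probability bounds defining $\mathcal{E}_n(\eta)$ by choosing $\delta$ to be the empirical (rather than population) best linear predictor coefficient, or more simply the population $\delta$ itself, and then showing each of the three sample quantities concentrates around its population analogue. Writing $u_i = w_i - z_i'\delta$ for the population prediction error, the first two conditions concern $\frac{1}{n}\sum_i u_i^2$ and $\frac{1}{n}\sum_i w_i u_i$, both of which are sample averages of i.i.d.\ terms with finite means (the former converging to $E[u_i^2]$, assumed bounded away from zero, and the latter to $E[w_i u_i] = E[u_i^2] > 0$ by orthogonality of the prediction error). A standard weak law of large numbers gives convergence in probability, so for any target $\tilde\eta$ we can find $\eta$ small enough that $\frac{1}{n}\norm{w-Z\delta}_2^2 \le 1/\eta$ and $\frac{1}{n}w'(w-Z\delta)\ge \eta$ each fail with probability at most $\tilde\eta/3$.

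The substantive work is the third condition, controlling $\frac{1}{n}\norm{Z'u}_q$ where the $j$th entry of $Z'u/n$ is $\frac{1}{n}\sum_i u_i z_{ij}$, each a mean-zero average by the normal equations $E[u_i z_{ij}]=0$. The key point is that we need a \emph{uniform} (over the $k$ coordinates) bound that scales like $r_q(k,n)$, and this is where the two regimes diverge. For $p>1$ (so $q<\infty$), I would bound $\frac{1}{n}\norm{Z'u}_q = \left(\sum_{j=1}^k \abs{\frac{1}{n}\sum_i u_i z_{ij}}^q\right)^{1/q}$ in expectation: using the Marcinkiewicz–Zygmund or Rosenthal inequality together with the moment bound $\sup_j E[\abs{u_i z_{ij}}^{\max\{2,q\}}]<\infty$, each coordinate satisfies $E[\abs{\frac{1}{n}\sum_i u_i z_{ij}}^q] = O(n^{-q/2})$, so the expected $q$th power of the norm is $O(k\, n^{-q/2}) = O\!\left((k^{1/q}/\sqrt n)^q\right) = O(r_q(k,n)^q)$, and a Markov inequality then yields $\frac{1}{n}\norm{Z'u}_q \le r_q(k,n)/\eta$ with the remaining probability budget $\tilde\eta/3$. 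For $p=1$ (so $q=\infty$), the quantity is $\max_j \abs{\frac{1}{n}\sum_i u_i z_{ij}}$, and here the sub-Gaussian tail assumption $P(\abs{u_i z_{ij}}\ge t)\le 2\exp(-ct^2)$ lets me apply a Bernstein/Hoeffding tail bound to each coordinate followed by a union bound over $j=1,\dots,k$; the resulting maximal deviation is $O(\sqrt{\log k}/\sqrt n) = O(r_\infty(k,n))$ with high probability, which is exactly why the $\sqrt{\log k}$ factor (rather than a power of $k$) appears in this regime.

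The \textbf{main obstacle} is getting the $k$-dependence of the uniform bound on $\frac{1}{n}\norm{Z'u}_q$ to come out at precisely the rate $r_q(k,n)$ rather than something coarser, since $k$ is allowed to grow faster than $n$; this is what forces the careful split by the value of $q$ and the correspondingly tailored moment versus sub-Gaussian hypotheses in the lemma statement. A subsidiary technical point is that the definition of $\mathcal{E}_n(\eta)$ only asserts the \emph{existence} of some $\delta$, so it suffices to exhibit the single population choice $\delta=\argmin_b E[(w_i-z_i'b)^2]$ and verify all three inequalities for it simultaneously; intersecting the three good events, each of probability at least $1-\tilde\eta/3$, gives $X\in\mathcal{E}_n(\eta)$ with probability at least $1-\tilde\eta$ for $n$ large, as claimed.
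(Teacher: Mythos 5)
Your proposal is correct and follows essentially the same route as the paper's proof: the population best linear predictor $\delta$, orthogonality plus the law of large numbers for the first two conditions (the paper invokes a triangular-array version since the distribution of $v_i=w_i-z_i'\delta$ changes with $k$), a $q$th-moment bound on each coordinate summed over $j$ with Markov's inequality when $q<\infty$ (the paper cites von Bahr's inequality where you cite Marcinkiewicz--Zygmund/Rosenthal, which are interchangeable here), and a union bound with sub-Gaussian Hoeffding when $q=\infty$. No substantive differences.
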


\subsection{Lower bounds}

We now show that the rates in~\cref{eq:convergence-rates} are sharp when
$p=2$, or $p=1$.

\subsubsection{\texorpdfstring{$p=2$}{p=2}}\label{sec:l2_lower_bounds}

As with the upper bound in \Cref{sec:asymptotic_upper_bounds}, we derive a bound
that holds when the design matrix $X$ is in some set, and then show that this
set has high probability when $w_i, z_i$ are drawn i.i.d.\ from a sequence of
distributions satisfying certain conditions. We focus on the case $k\geq n$. Let
$\widetilde{\mathcal{E}}_n(\eta)$ denote the set of design matrices $X$ such
that
\begin{equation*}
  \eta\le \frac{1}{n}w'w \le \eta^{-1},
  \quad \min\eig(ZZ'/k)\ge \eta,
\end{equation*}
where $\eig(A)$ denotes the set of eigenvalues of a square matrix $A$.
\begin{theorem}\label{thm:l2_lower_bound}
  Let $\hat\beta\pm\hat\chi$ be a \ac{CI} with coverage at least $1-\alpha$
  under $\Pen(\gamma)\le C$. (i) If $\Pen(\gamma)=\norm{\gamma}_{2}$, there
  exists a constant $c_\eta>0$ depending only on $\eta$ such that the expected
  length under $\beta=0$, $\gamma=0$ satisfies
  $E_{0, 0}[\hat \chi]\ge c_\eta n^{-1/2}(1+ C k^{1/2})$ for any
  $X\in \widetilde{\mathcal{E}}_{n}(\eta)$. (ii) If
  $\Pen(\gamma)=\norm{Z\gamma/\sqrt{n}}_{2}$, there exists a constant $c_\eta>0$
  depending only on $\eta$ such that the expected length under $\beta=0$,
  $\gamma=0$ satisfies $E_{0, 0}[\hat \chi]\ge c_\eta (n^{-1/2}+ C)$
  for any $X\in \widetilde{\mathcal{E}}_{n}(\eta)$.
\end{theorem}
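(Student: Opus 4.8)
The plan is to prove both bounds by two-point arguments that convert the coverage requirement into a lower bound on length, obtaining the stated rate as the maximum of two contributions: a ``perfect confounding'' term supplying the $C k^{1/2}/\sqrt n$ (resp.\ $C$) factor, and a classical parametric term supplying the $n^{-1/2}$ factor. Throughout I would use the elementary observation that if a confidence interval $\{\hat\beta\pm\hat\chi\}$ covers two values $\nu_0$ and $\nu_1$ under a single distribution $P$, each with probability at least $1-\alpha$, then $P(\{\nu_0,\nu_1\}\subset\{\hat\beta\pm\hat\chi\})\ge 1-2\alpha$, and on that event the interval contains $[\nu_0,\nu_1]$, so $2\hat\chi\ge\abs{\nu_1-\nu_0}$; taking expectations yields $E_P[2\hat\chi]\ge(1-2\alpha)\abs{\nu_1-\nu_0}$ whenever $\alpha<1/2$.

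For the leading term I would exploit that $k\ge n$ together with $\min\eig(ZZ'/k)\ge\eta$ forces $w$ into the column space of $Z$ with a control coefficient of small norm. Since $ZZ'$ is then invertible, set $\gamma^*=Z'(ZZ')^{-1}w$, the minimum-norm solution of $Z\gamma^*=w$, and consider the two parameter values $(\beta,\gamma)=(0,0)$ and $(\beta,\gamma)=(t,-t\gamma^*)$. Both induce the mean vector $0$ (the second because $wt-tZ\gamma^*=0$), so the two sampling laws are \emph{identical} to $P_{0,0}=\ND(0,\sigma^2 I_n)$. The first point is feasible, and the second is feasible whenever $\Pen(t\gamma^*)\le C$. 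The displayed length bound with $\nu_0=0$, $\nu_1=t$ then gives $E_{0,0}[\hat\chi]\ge\tfrac{1-2\alpha}{2}\abs t$, so it remains to take $t$ as large as the penalty budget allows. In case (i), $\norm{\gamma^*}_2^2=w'(ZZ')^{-1}w\le w'w/\min\eig(ZZ')\le w'w/(\eta k)\le n/(\eta^2 k)$, using $w'w/n\le\eta^{-1}$; hence $\Pen(t\gamma^*)=\abs t\,\norm{\gamma^*}_2\le C$ holds for $\abs t=C/\norm{\gamma^*}_2\ge C\eta\sqrt{k/n}$, giving $E_{0,0}[\hat\chi]\ge c_\eta C k^{1/2}/\sqrt n$. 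In case (ii) the computation is even more direct: by homogeneity $\Pen(t\gamma^*)=\abs t\,\norm{Z\gamma^*/\sqrt n}_2=\abs t\,\norm{w/\sqrt n}_2=\abs t\sqrt{w'w/n}$, so feasibility holds for $\abs t=C\sqrt{n/(w'w)}\ge C\sqrt\eta$, giving $E_{0,0}[\hat\chi]\ge c_\eta C$.

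For the parametric $n^{-1/2}$ term I would instead perturb $\beta$ with $\gamma$ held at $0$, comparing $(0,0)$ with $(\delta,0)$; both are feasible since $0\in\Gamma$. The two laws $\ND(0,\sigma^2 I_n)$ and $\ND(w\delta,\sigma^2 I_n)$ now differ, so the identical-law step must be replaced by a change-of-measure step: writing $\mathrm{TV}$ for the total variation distance between them, the coverage inequalities give $P_{0,0}(\{0,\delta\}\subset\{\hat\beta\pm\hat\chi\})\ge 1-2\alpha-\mathrm{TV}$. Since $\mathrm{TV}$ is a fixed increasing function of $\abs\delta\sqrt{w'w}/\sigma$, I would choose $\abs\delta$ as large as possible subject to $\mathrm{TV}\le(1-2\alpha)/2$, i.e.\ so that this signal-to-noise ratio equals a fixed positive constant depending only on $\alpha$; because $w'w/n\le\eta^{-1}$ this gives $\abs\delta\ge c(\alpha)\sigma\sqrt\eta\,n^{-1/2}$, and the length bound on the resulting event yields $E_{0,0}[\hat\chi]\gtrsim n^{-1/2}$.

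Finally I would combine the two contributions via $\max(a,b)\ge\tfrac12(a+b)$: the expected length is at least a constant multiple of $n^{-1/2}$ and at least a constant multiple of $Ck^{1/2}/\sqrt n$ (resp.\ $C$), hence at least a constant multiple of their sum, which is the asserted rate, since $n^{-1/2}(1+Ck^{1/2})=n^{-1/2}+Ck^{1/2}/\sqrt n$ (the constant depends also on the fixed $\alpha$ and $\sigma$). The main obstacle is the confounding construction behind the leading term: the substance of the bound is that a penalty budget $C$ buys a $\beta$-separation of order $C\sqrt{k/n}$ between two \emph{indistinguishable} distributions, and certifying this requires exactly the lower eigenvalue condition $\min\eig(ZZ'/k)\ge\eta$ (to bound $\norm{\gamma^*}_2$ from above) together with the upper bound on $w'w/n$ built into $\widetilde{\mathcal{E}}_n(\eta)$. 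The parametric term is routine, the only care needed being the total-variation control, which is standard for a Gaussian location shift.
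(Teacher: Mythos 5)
Your proposal is correct, and its core---the perfect-confounding construction---coincides with the paper's: the paper's proof takes $\tilde\gamma=-C\eta\sqrt{k/n}\,Z'(ZZ')^{-1}w$ for part (i) (resp.\ $\tilde\gamma=-C\eta^{1/2}Z'(ZZ')^{-1}w$ for part (ii)), i.e.\ exactly your scaled minimum-norm solution $t\gamma^*$, verifies feasibility from the same two ingredients ($\min\eig(ZZ'/k)\ge\eta$ to bound $w'(ZZ')^{-1}w$, and $w'w/n$ bounded above and below), and concludes from observational equivalence of $(\tilde\beta,\tilde\gamma)$ and $(0,0)$; you additionally spell out the coverage-intersection step $E_{0,0}[2\hat\chi]\ge(1-2\alpha)\abs{\tilde\beta}$ that the paper leaves implicit (both arguments tacitly need $\alpha<1/2$). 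Where you genuinely diverge is the $n^{-1/2}$ term: the paper first invokes \Cref{optimality_corollary}\ref{item:flci} to reduce the whole claim to lower-bounding the optimal linear FLCI length $R^{*}_{\textnormal{FLCI}}(X,C)$, and then observes that any linear estimator with finite worst-case bias must satisfy $a'w=1$, so by Cauchy--Schwarz its variance is at least $\sigma^2/\norm{w}_2^2\ge \sigma^2\eta/n$; your route is instead a second two-point argument with a pure $\beta$-shift, $(0,0)$ versus $(\delta,0)$, controlled in total variation for the Gaussian location family. Your version is more elementary and self-contained: it never uses the efficiency machinery behind \Cref{optimality_corollary} (hence none of the modulus/FLCI apparatus), yields explicit constants, and applies verbatim to arbitrary possibly nonlinear, variable-length CIs without passing through linear procedures. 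The paper's route is shorter given that \Cref{optimality_corollary} is already established, and as a byproduct it lower-bounds the stronger benchmark $R^{*}_{\textnormal{FLCI}}(X,C)$ itself rather than only the expected length at $(0,0)$. One piece of shared bookkeeping: your constant depends on $\alpha$ and $\sigma$ in addition to $\eta$, which is consistent with the paper's convention of treating $\alpha$ and $\sigma$ as fixed throughout.
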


If $z_{i}$ is i.i.d.\ over $i$, then $EZZ'/k $ is equal to the $n\times n$
identity matrix times the scalar $\frac{1}{k}\sum_{j=1}^k E[z_{ij}^2]$. Thus,
the condition on the minimum eigenvalue of $ZZ'/k$ will hold under concentration
conditions on the matrix $Z'Z$ so long as the second moments of the covariates
are bounded from below. Here, we state a result for a special case where the
$z_{ij}$'s are i.i.d.\ normal, which is immediate from \citet[][Lemma
3.4]{donoho_for_2006}.

\begin{lemma}
  Suppose that $w_{i}$ are i.i.d.\ over $i$ and that $z_{ij}$ are i.i.d.\ normal over $i$
  and $j$.  Then, for any $\tilde\eta>0$, there exists $\eta>0$ such that $X\in
  \widetilde{\mathcal{E}}_n(\eta)$ with probability at least $1-\tilde\eta$ once $n$
  and $k/n$ are large enough.
\end{lemma}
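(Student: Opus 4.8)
The plan is to verify the two defining inequalities of $\widetilde{\mathcal{E}}_n(\eta)$ separately and then combine them by a union bound. The condition $\eta \le w'w/n \le \eta^{-1}$ concerns only the i.i.d.\ sequence $w_i$ and follows from a law of large numbers, whereas the condition $\min\eig(ZZ'/k)\ge \eta$ is a statement about the extreme singular values of a Gaussian random matrix and is where the cited result of \citet{donoho_for_2006} does the work.

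First I would dispose of the condition on $w$. Writing $\mu=E[w_i^2]$, which I take to lie in $(0,\infty)$ (nondegeneracy together with a finite second moment being the natural reading of the hypothesis on $w_i$), the weak law of large numbers gives $n^{-1}w'w=n^{-1}\sum_{i=1}^n w_i^2 \to \mu$ in probability. Hence for any target probability there is an $n$ beyond which $n^{-1}w'w \in [\mu/2,\, 2\mu]$ with probability at least $1-\tilde\eta/2$. Choosing $\eta_w = \min\{\mu/2,\, 1/(2\mu)\}$ then makes both inequalities $\eta_w \le n^{-1}w'w \le \eta_w^{-1}$ hold on this event.

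Next I would handle the eigenvalue condition. Since $Z$ is $n\times k$ with i.i.d.\ $\ND(0,\tau^2)$ entries, and $ZZ'$ is $n\times n$ with eigenvalues equal to the squared singular values of $Z$, we have $\min\eig(ZZ'/k)=s_{\min}(Z)^2/k$, where $s_{\min}(Z)$ is the smallest singular value of $Z$. In the regime $k\ge n$ this smallest singular value concentrates: by \citet[Lemma 3.4]{donoho_for_2006} (equivalently, the standard Davidson--Szarek/Gordon bounds for Gaussian matrices), $s_{\min}(Z)/\sqrt{k}$ is, with probability at least $1-\tilde\eta/2$, at least $\tau(1-\sqrt{n/k}) - o(1)$ once $n$ and $k/n$ are large. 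Since the leading term tends to $\tau$ as $k/n\to\infty$, on this event $\min\eig(ZZ'/k)=s_{\min}(Z)^2/k \ge \tau^2/2 =: \eta_Z$ for all sufficiently large $n$ and $k/n$.

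Finally, setting $\eta=\min\{\eta_w,\eta_Z\}$ and taking a union bound over the two events gives $X\in\widetilde{\mathcal{E}}_n(\eta)$ with probability at least $1-\tilde\eta$ once $n$ and $k/n$ are large enough. The only genuinely nontrivial input is the lower bound on the smallest singular value of the Gaussian matrix $Z$; since we may invoke \citet[Lemma 3.4]{donoho_for_2006} directly, no real obstacle remains, and the two points to be careful about are (i) making explicit the moment and nondegeneracy assumption on $w_i$ used in the law of large numbers, and (ii) confirming that the Marchenko--Pastur lower edge $\tau^2(1-\sqrt{n/k})^2$ stays bounded away from zero, which is precisely why the conclusion is stated for $k/n$ large rather than merely $k\ge n$.
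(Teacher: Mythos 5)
Your proposal is correct and is essentially the paper's own argument: the paper gives no written proof, stating the result as ``immediate from \citet[Lemma 3.4]{donoho_for_2006},'' and your write-up simply makes that citation explicit (smallest singular value of the Gaussian matrix $Z$ concentrating at the Marchenko--Pastur edge $\tau^2(1-\sqrt{n/k})^2$, bounded away from zero for $k/n$ large) together with the routine law-of-large-numbers step for $w'w/n$ and a union bound. Your observation that the hypothesis on $w_i$ must implicitly include nondegeneracy and $E[w_i^2]<\infty$ (otherwise $w'w/n\to\infty$ a.s.\ and the upper bound $\eta^{-1}$ fails for every fixed $\eta$) is a fair reading of a condition the paper leaves unstated.
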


\subsubsection{\texorpdfstring{$p=1$}{p=1}}

We now consider the case where $p=1$, as in \Cref{example:l1}. Rather than
imposing conditions on $X$ in a fixed design setting that hold with high
probability (as in \Cref{sec:asymptotic_upper_bounds} and
\Cref{sec:l2_lower_bounds}), we directly consider a random design setting, and
we do not condition on $X$ when requiring coverage of \acp{CI}. This allows us
to strengthen the conclusion of our theorem by showing that the rate in
\Cref{thm:upper_rate_bound} is sharp even if one imposes a linear model for
$w_i$ given $z_i$ along with sparsity and $\ell_1$ bounds on the coefficients in
this model.

We introduce some additional notation to cover the random design setting, which
we use only in this section. We consider a random design model
\begin{align*}
  Y&=w\beta+Z\gamma + \varepsilon, \quad \varepsilon\mid Z, w \sim \ND(0,\sigma^2I_n), \\
  w&=Z\delta + v, \quad v\mid Z \sim \ND(0,\sigma_{v}^2I_n), \\
  z_{ij} &\sim \ND(0,1)\quad\text{i.i.d.\ over $i, j$}.
\end{align*}
We use $P_{\vartheta}$ and $E_{\vartheta}$ for probability and expectation when
$Y, X$ follow this model with parameters
$\vartheta=(\beta, \gamma', \delta', \sigma^2,\sigma^2_{v})'$. Let
$\sigma_0^2>0$ and $\sigma^2_{v, 0}>0$ be given and let $\Theta(C, s, \eta)$
denote the set of parameters
$\vartheta=(\beta, \gamma', \delta', \sigma^2, \sigma^2_{v})$ where
$\abs{\sigma^2-\sigma^2_0}\le \eta$,
$\abs{\sigma_{v}^2-\sigma^2_{v, 0}}\le \eta$, $\norm{\gamma}_{1}\le C$,
$\norm{\delta}_{1}\le C$, $\norm{\gamma}_0\le s$ and $\norm{\delta}_0\le s$.

\begin{theorem}\label{thm:l1_lower_bound}
  Let $\hat\beta\pm \hat\chi$ be a \ac{CI} satisfying
  $P_{\vartheta}(\beta\in \{\hat\beta \pm \hat{\chi}\}) \geq 1-\alpha$ for all $\vartheta$ in
  $\Theta(C_n, C_n\cdot K \sqrt{n/\log k}, \eta_n)$ where $\alpha<1/2$. Suppose
  $k\to\infty$, $C_n\sqrt{\log k}/n\to 0$ and
  $C_n\le \sqrt{k/n}\cdot k^{-\tilde \eta}$ for some $\tilde\eta>0$. Then, there
  exists $c$ such that, if $K$ is large enough and $\eta_n\to 0$ slowly enough,
  the expected length of this \ac{CI} under the parameter vector $\vartheta^*$
  given by $\beta=0$, $\gamma=0$, $\delta=0$, $\sigma^2=\sigma^2_0$,
  $\sigma^2_{v}=\sigma^2_{v, 0}$ satisfies
  $E_{\vartheta^*}[\hat\chi] \ge c\cdot n^{-1/2}(1+C_n\sqrt{\log k})$ once $n$ is
  large enough.
\end{theorem}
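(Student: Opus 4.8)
The plan is to derive the bound by reducing confidence-interval length to a testing problem and then realizing the two terms $n^{-1/2}$ and $C_n\sqrt{\log k}/\sqrt n$ through two separate families of alternatives, the second being the substantive one. Write $P_0=P_{\vartheta^*}$ for the null law and, for a prior $\pi$ supported on parameters whose $\beta$-coordinate equals a fixed $\Delta$, let $P_\pi=\int P_\vartheta\,d\pi(\vartheta)$. Since the interval must cover $0$ under $P_0$ with probability at least $1-\alpha$ and cover $\Delta$ under every $P_\vartheta$ in the support of $\pi$ (hence under $P_\pi$ on average), and since containing both $0$ and $\Delta$ forces $\hat\chi\ge\Delta/2$, a union bound gives $P_0(\hat\chi\ge\Delta/2)\ge 1-2\alpha-\operatorname{TV}(P_0,P_\pi)$ and therefore $E_{\vartheta^*}[\hat\chi]\ge\tfrac{\Delta}{2}(1-2\alpha-\operatorname{TV}(P_0,P_\pi))$. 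Because $\alpha<1/2$, it suffices, for each target separation $\Delta$, to exhibit a prior supported on $\Theta(C_n,C_nK\sqrt{n/\log k},\eta_n)$ with all $\beta$-coordinates equal to $\Delta$ and with $\operatorname{TV}(P_0,P_\pi)$ bounded away from $1$ by more than $2\alpha$.

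For the $n^{-1/2}$ term I take a point-mass prior at $\beta=\Delta_1=c_1n^{-1/2}$, $\gamma=\delta=0$, with unchanged variances; only $\operatorname{Cov}(Y_i,w_i)$ moves, by $O(n^{-1/2})$, so the $n$-fold total variation is bounded away from $1$ for small $c_1$, giving $E_{\vartheta^*}[\hat\chi]\gtrsim n^{-1/2}$. The content is the second term, with $\Delta=\Delta_2\asymp C_n\sqrt{\log k}/\sqrt n$, and the organizing observation is that, conditional on $Z$, the law of $(w,Y)$ depends on $\vartheta$ only through the two variances and the three population covariances $\operatorname{Cov}(w_i,z_i)=\delta$, $\operatorname{Cov}(Y_i,z_i)=\beta\delta+\gamma$, and $\operatorname{Cov}(Y_i,w_i)=\beta(\norm{\delta}_2^2+\sigma_v^2)+\gamma'\delta$. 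The last is a \emph{scalar} functional, estimable at rate $n^{-1/2}$, so to separate $\beta$ by more than $n^{-1/2}$ one must drive it to its null value $0$; the first two are $k$-vectors, which can instead be hidden as sparse signals below the detection boundary with randomized support and signs.

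Concretely I draw a uniform random support $S\subseteq\{1,\dots,k\}$ with $\abs{S}=s=C_nK\sqrt{n/\log k}$ and i.i.d.\ signs $\xi_j$, and set $\delta_j=a\xi_j$ and $\gamma_j=(-\Delta a-b)\xi_j$ for $j\in S$ (and $0$ otherwise), with $a=b=C_n/s\asymp\sqrt{\log k/n}/K$. Then $\operatorname{Cov}(Y_i,z_i)=\gamma+\Delta\delta$ has coordinates $-b\xi_j$ on $S$ and $\operatorname{Cov}(w_i,z_i)=\delta$ has coordinates $a\xi_j$ on $S$, both sparse of per-coordinate size $\asymp\sqrt{\log k/n}/K$, a factor $1/K$ below the threshold at which a randomly placed sparse signal is detectable, while $\gamma'\delta=-abs$ is deterministic. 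Choosing $\Delta=\Delta_2=abs/\sigma_v^2=C_n^2/(s\sigma_v^2)\asymp C_n\sqrt{\log k}/\sqrt n$ makes $\operatorname{Cov}(Y_i,w_i)$ equal its null value up to the negligible term $\beta\norm{\delta}_2^2$, and the perturbations to $\operatorname{Var}(w_i)=\sigma_v^2+\norm{\delta}_2^2$ and to $\operatorname{Var}(Y_i)$, both of order $\norm{\delta}_2^2\asymp\Delta_2$, are absorbed by adjusting $\sigma_v^2$ and $\sigma^2$ inside the $\eta_n$ slack (this is why $\eta_n$ must shrink slowly). One checks $\norm{\delta}_1,\norm{\gamma}_1\lesssim C_n$ and $\norm{\delta}_0,\norm{\gamma}_0\le s$, so the prior lives on $\Theta(C_n,s,\eta_n)$.

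The remaining and hardest step is to bound $\operatorname{TV}(P_0,P_\pi)$. Conditioning on $Z$ and using $\operatorname{TV}\le\tfrac12\sqrt{\chi^2}$, I will bound the $\chi^2$-divergence through the Ingster–Suslina identity $\chi^2(P_\pi\|P_0)+1=E\,E_0[L(S,\xi)L(S',\xi')]$ for independent draws $(S,\xi),(S',\xi')$ and Gaussian likelihood ratio $L$. The difficulty beyond a standard sparse-sequence calculation is that the model has \emph{two coupled Gaussian channels}: the $w$-channel carries the mean shift $Z\delta$, while the $Y$-channel carries $\Delta v=\Delta(w-Z\delta)$, which depends on $w$ and so entangles the two integrals; the leading $w$-dependent piece of the $Y$-channel is exactly the part neutralized by the $\operatorname{Cov}(Y_i,w_i)$ matching above, and the residual must be shown harmless. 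After integrating out $(Y,w)$ and then $Z$, the inner expectation collapses to an exponential of inner products of the two signals, and the sign randomization annihilates all but the terms indexed by the overlap $S\cap S'$, whose size concentrates near $s^2/k$. The exponent is governed by $n$ times the squared per-coordinate magnitude, $na^2\asymp\log k/K^2$, a factor $1/K^2$ below $\log k$; combined with $C_n\le\sqrt{k/n}\,k^{-\tilde\eta}$ (which controls $s^2/k$) and $C_n\sqrt{\log k}/n\to0$ (which keeps the $Y$-channel exponential moment bounded), taking $K$ large pushes $\chi^2(P_\pi\|P_0)$ below $4(1-2\alpha)^2$, securing the gap and a constant $c=c(K,\alpha,\sigma_{v,0}^2)>0$. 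Carrying out this coupled-channel cross-term bookkeeping, and verifying that the variance mismatches genuinely fit within $\eta_n$, is the crux of the argument.
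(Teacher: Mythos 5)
Your proposal is correct, and its skeleton coincides with the paper's: a reduction from expected length to testing via a mixture prior with point-mass $\beta$ (your union-bound inequality is exactly Lemma 1 of Cai and Guo (2017), which the paper invokes), a sparse random-support prior on $\delta$ with $\gamma$ coupled to $\delta$ so that the scalar channel $\operatorname{Cov}(Y_i,w_i)$ is driven to its null value and the variance shifts of order $\norm{\delta}_2^2$ are absorbed into the $\eta_n$ slack, and a $\chi^2$ bound exploiting overlap concentration near $s^2/k$ with per-coordinate exponent $na^2\asymp \log k/K^2$, closed under the conditions $C_n\le\sqrt{k/n}\,k^{-\tilde\eta}$ (giving $s\le k^\psi$, $\psi<1/2$) and large $K$. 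The difference is in execution. Rather than reverse-engineering the prior from the three covariance functionals with sign randomization, the paper posits $\delta_j\in\{0,\nu\}$ on a uniform random support (no signs) and sets $\beta=-\norm{\delta}_2^2/(1-\norm{\delta}_2^2)$, $\gamma=(1-\beta)\delta$, $\sigma_v^2=1-\norm{\delta}_2^2$, $\sigma^2=(1-2\norm{\delta}_2^2)/(1-\norm{\delta}_2^2)$, which makes $\Sigma_\vartheta$ \emph{exactly} the matrix with unit diagonal, $\operatorname{Cov}(Y_i,w_i)=0$, and the \emph{same} vector $\delta$ appearing in both $z$-covariance slots; the payoff is that the coupled-channel $\chi^2$ bookkeeping you correctly identify as the crux is then available verbatim from the literature, and the paper simply quotes $\chi^2(f_{\pi_0},f_{\pi_1})\le e^{s^2/(k-s)}\bigl(1+\tfrac{s}{k}(e^{4n\nu^2}-1)\bigr)^s-1$ from Javanmard and Javadi (2018, eqs.\ (118)--(119)), themselves applications of Lemmas 2--3 of Cai and Guo. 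Your construction is equivalent up to the reflection $Y\mapsto -Y$ (with $a=b$ your $Y$-channel signal $-a\xi_S$ is the negative of the $w$-channel signal $a\xi_S$, versus the paper's equal signals), so the Ingster--Suslina computation you defer would indeed go through; but note that as written your attempt leaves the central estimate as a plan, where the paper closes it by matching the exact covariance form in prior work. Two small repairs if you carry it out yourself: your matching is in fact exact rather than ``up to the negligible term $\beta\norm{\delta}_2^2$,'' since with $\gamma_j=(-\Delta a-b)\xi_j$ one gets $\operatorname{Cov}(Y_i,w_i)=\Delta\sigma_v^2-abs$ identically, so $\Delta=abs/\sigma_v^2$ annihilates it; and $\norm{\gamma}_1=s(b+\Delta a)=C_n(1+\Delta)>C_n$, so you must rescale $a,b$ by $(1+\Delta)^{-1}$ --- harmless since $\Delta\to0$, and the paper's own prior has the same epsilon-slack through the factor $1-\beta>1$, which it likewise treats as a routine verification.
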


\Cref{thm:l1_lower_bound} follows from similar arguments to
\citet{cai_lasso_2017} and \citet{javanmard_debiasing_2018}, who provide similar
bounds for the case where only a sparsity bound is imposed. According to
\Cref{thm:l1_lower_bound}, imposing sparsity does not allow one to improve upon
the \acp{CI} that uses only the $\ell_1$ bound $\norm{\gamma}_1\le C_n$ (thereby
attaining the rate in \Cref{thm:upper_rate_bound}), unless one imposes sparsity
of order greater than $C_n\sqrt{n/\log k}$. We provide further comparison with
\acp{CI} that impose sparsity in the next section.

\section{Comparison with sparsity constraints}\label{sec:nonconvex_efficiency}

Several authors have considered \acp{CI} for $\beta$ using ``double lasso''
estimators \citep[see, among
others,][]{belloni_inference_2014,javanmard_confidence_2014,van_de_geer_asymptotically_2014,zhang_confidence_2014}.
These \acp{CI} are valid under the parameter space
\begin{equation}\label{eq:sparsity_parameter_space}
  \widetilde{\Gamma}(s) = \{\gamma\colon\norm{\gamma}_0\le s\},
\end{equation}
where $\norm{\gamma}_0=\#\{j\colon \gamma_j\ne 0\}$ is the $\ell_0$ ``norm,''
which indexes the sparsity of $\gamma$, and with $s$ increasing slowly enough
relative to $n$ and $k$. Since $\norm{\gamma}_0$ is not a true norm or seminorm
(it is non-convex), this parameter space is not covered by our setup.
Nonetheless, as we show in \Cref{connection_with_l1_estimator_sec}, if the
sparsity assumption is used to bound the $\ell_{1}$ loss of a preliminary lasso
estimator, arguments from \Cref{sec:finite-sample-result} lead to estimators and
\acp{CI} that are analogous to those proposed in the double lasso literature. In
\Cref{double_lasso_comparison_sec}, we provide a comparison of our approach to
these double lasso \acp{CI}.

\subsection{Connection between double lasso and optimal estimator
under \texorpdfstring{$\ell_1$}{ell1} constraints}\label{connection_with_l1_estimator_sec}

When $\Pen(\gamma)=\norm{\gamma}_1$ (\cref{example:l1}), the solution
$\pi_\lambda$ to \cref{eq:pi_optimization} is the lasso estimate in the
propensity score regression of $w$ on $Z$, and our
estimator~\eqref{eq:optimal_estimator} uses residuals from this lasso
regression. This is related to ``double lasso'' estimators used to form \acp{CI}
for $\beta$ under sparsity constraints on $\gamma$ \citep[see, among
others,][]{belloni_inference_2014,javanmard_confidence_2014,van_de_geer_asymptotically_2014,zhang_confidence_2014}.
For concreteness, we focus on the estimator in \citet{zhang_confidence_2014},
which is given by
\begin{equation*}
  \hat\beta_{\textnormal{ZZ}} =\hat\beta_{\textnormal{lasso}}
  +\frac{\tilde{w}_{\lambda}'(Y-w\hat\beta_{\textnormal{lasso}}
    -Z\hat\gamma_{\textnormal{lasso}})}{\tilde{w}_{\lambda}'w},
\end{equation*}
where $\hat\beta_{\textnormal{lasso}}, \hat\gamma_{\textnormal{lasso}}$ are the lasso
estimates from regressing $Y$ on $X$:
\begin{equation*}
  \hat\beta_{\textnormal{lasso}}, \hat\gamma_{\textnormal{lasso}}
  = \argmin_{\beta, \gamma} \norm{Y-w\beta-Z\gamma}_{2}^{2} + \tilde\lambda (\abs{\beta}
  +\norm{\gamma}_{1})
\end{equation*}
for some penalty parameter $\tilde \lambda>0$.

\begin{remark}\label{nonlinearity_remark}
  Note that $\hat\beta_{\textnormal{ZZ}}$ is non-linear in $Y$, due to
  nonlinearity of the lasso estimates
  $\hat\beta_{\textnormal{lasso}}, \hat\gamma_{\textnormal{lasso}}$, which is
  consistent with the goal of efficiency in the non-convex parameter
  space~\eqref{eq:sparsity_parameter_space}. In contrast,
  \Cref{optimality_corollary} shows that under the convex parameter space
  $\Gamma=\{\gamma\colon \norm{\gamma}_{1}\le C\}$, the estimator
  $\hat\beta_{\lambda}$ in~\eqref{eq:optimal_estimator} which only uses lasso in
  the propensity score regression of $w$ on $Z$, is already highly efficient
  among all estimators, so that there is no further role for substantive
  efficiency gains from the lasso regression of $Y$ on $X$, or from the use of
  other non-linear estimators.
\end{remark}

To further understand the connection between these estimators, we note that
\citet{zhang_confidence_2014} motivate their approach by bounds of the form
\begin{equation}\label{eq:lasso_gamma_l1_bound}
  \norm{\hat\gamma_{\textnormal{lasso}}-\gamma}_1 \le \tilde C\quad
  \text{where}\quad\tilde C= \text{const.} \cdot s\sqrt{\log k}/\sqrt{n},
\end{equation}
which hold with high probability with the constant depending on certain ``compatibility
constants'' that describe the regularity of the design matrix $X$
\citep[see][Theorem 6.1, and references in the surrounding
discussion]{buhlmann_statistics_2011}.  This suggests correcting the initial
estimate $\hat\beta_{\textnormal{lasso}}$ by estimating
$\tilde\beta=\beta-\hat\beta_{\textnormal{lasso}}$ in the regression
\begin{equation*}
  \tilde Y= w(\beta-\hat\beta_{\textnormal{lasso}})+Z(\gamma-\hat\gamma_{\textnormal{lasso}})+\varepsilon
   = w\tilde \beta + Z\tilde \gamma + \varepsilon,
\end{equation*}
where
$\tilde Y=Y-\hat\beta_{\textnormal{lasso}}-Z\hat\gamma_{\textnormal{lasso}}$.
Heuristically, we can treat the bound in \cref{eq:lasso_gamma_l1_bound} as a
constraint $\norm{\tilde\gamma}_{1}\le \tilde C$ on the unknown parameter
$\tilde\gamma=\gamma-\hat\gamma_{\textnormal{lasso}}$ and search for an optimal
estimator of $\tilde\beta=\beta-\hat\beta_{\textnormal{lasso}}$ under this
constraint. Applying the optimal estimator derived in
\Cref{thm:optimization_equivalence} then suggests estimating
$\beta-\hat\beta_{\textnormal{lasso}}$ with
$\frac{\tilde{w}_{\lambda}'\tilde Y}{\tilde{w}_{\lambda}'w}$. Adding this
estimate to $\hat\beta_{\textnormal{lasso}}$ gives the estimate
$\hat\beta_{\textnormal{ZZ}}$ proposed by \citet{zhang_confidence_2014}. Whereas
\citet{zhang_confidence_2014} motivate their approach as one possible way of
correcting the initial estimate $\hat\beta_{\textnormal{lasso}}$ using the bound
in \cref{eq:lasso_gamma_l1_bound}, the above analysis shows that their
correction is in fact identical to an approach in which one optimizes this
correction numerically.\footnote{The estimator proposed by
  \citet{javanmard_confidence_2014} performs a numerical optimization of this
  form, but with the constraint~\eqref{eq:lasso_gamma_l1_bound} replaced by a
  constraint on
  $\abs{\hat\beta_{\textnormal{lasso}}-\beta}+\norm{\hat\gamma_{\textnormal{lasso}}-\gamma}_{1}$.
  Thus, \Cref{thm:optimization_equivalence} shows that a modification of the
  constraint used in \citet{javanmard_confidence_2014} yields the same estimator
  as \citet{zhang_confidence_2014}.}

Under the bound in \cref{eq:lasso_gamma_l1_bound} it follows that
$\hat\beta_{\textnormal{ZZ}}-\beta =\tilde b + {a_\lambda}'\varepsilon$ where
$a_\lambda=\frac{\tilde{w}_\lambda}{\tilde{w}_{\lambda}'w}$ are the optimal weights
under the $\ell_1$ constraint $\norm{\tilde\gamma}_{1}\le \tilde C$, given in
\Cref{thm:optimization_equivalence}. Furthermore,
$\abs{\tilde b}\le \tilde C\overline B_\lambda$, with $\overline B_\lambda$
given in \Cref{thm:optimization_equivalence} and $\tilde C$ given in
\cref{eq:lasso_gamma_l1_bound}, and the variance of the random term
${a_\lambda}'\varepsilon$ is given by $V_\lambda$ in
\Cref{thm:optimization_equivalence}. Using arguments similar to those used to
prove \Cref{thm:upper_rate_bound}, it follows that
$\tilde C\overline B_\lambda/\sqrt{V_\lambda}$ is bounded by a constant times
$s(\log k)/\sqrt{n}$, so that one can ignore bias in large samples as long as
this term converges to zero. This leads to the \ac{CI} proposed by
\citet{zhang_confidence_2014}, which takes the form
\begin{equation}\label{eq:zz_CI}
\{\hat{\beta}_{\textnormal{ZZ}} \pm z_{1-\alpha/2} \hat{V}_\lambda^{1/2}\},
\end{equation}
where $\hat V_\lambda$ is an estimate of the variance $V_\lambda$. We use the
term ``double lasso \ac{CI}'' to refer to this \ac{CI}, and to related \acp{CI}
such as those proposed in
\citet{belloni_inference_2014,javanmard_confidence_2014,van_de_geer_asymptotically_2014}.

\begin{remark}\label{bias-aware_double_lasso_remark}
  To avoid the assumption that $s(\log k)/\sqrt{n}\to 0$ one could, in
  principle, extend our approach and the above analysis to form valid bias-aware
  \acp{CI} as
  $\{\hat\beta_{\textnormal{ZZ}} \pm [ \tilde{C}
  \overline{B}_\lambda+z_{1-\alpha/2}\hat{V}_\lambda^{1/2}] \}$.\footnote{We use
    the slightly more conservative approach of adding and subtracting the bound
    $\tilde {C}\overline{B}_\lambda$ rather than using the critical value
    $\cv_{\alpha}(\tilde {C}\overline{B}_\lambda/\hat{V}_\lambda^{1/2})$ as in
    \cref{eq:linear-CI}, since the ``bias'' term for
    $\hat\beta_{\textnormal{ZZ}}$ is correlated with $\varepsilon$ through the
    first step estimates
    $\hat\beta_{\textnormal{lasso}}, \hat\gamma_{\textnormal{lasso}}$.}
  Unfortunately, finding a computable constant $\tilde C$
  in~\eqref{eq:lasso_gamma_l1_bound} that is sharp enough to yield useful bounds
  in practice appears to be difficult, although it is an interesting area for
  future research.
\end{remark}

\subsection{Comparison of our approach with CIs based on double lasso
  estimators}\label{double_lasso_comparison_sec}

When should one use a double lasso \ac{CI}, and when should one use the
approach in the present paper?  In principle, this depends on the a priori
assumptions one is willing to make, and whether they are best captured by a
sparsity bound or a bound on convex penalty function, such as the $\ell_{1}$ or
$\ell_{2}$ norm.  In many settings, it may be difficult to motivate the assumption
that a regression function has a sparse approximation, whereas upper bounds on
the magnitude of the coefficients may be more plausible.

A key advantage of the \acp{CI} and estimators we propose is that they have
sharp finite-sample optimality properties and coverage guarantees in the fixed
design Gaussian model with known error variance. While this is an idealized
setting, the worst-case bias calculations do not depend on the error
distribution, and remain the same under non-Gaussian, heteroskedastic errors.
Our approach directly accounts for the potential finite-sample bias of the
estimator, rather than relying on ``asymptotic promises'' about rates at which
certain constants involved in bias terms converge to zero.

On the flip side, our \acp{CI} require an explicit choice of the regularity
parameter $C$ in order to form a ``bias-aware'' \ac{CI}. In contrast, \acp{CI}
based on double lasso estimators do not require explicitly choosing the
regularity (in this case, the sparsity $s$), since they ignore bias. This is
justified under asymptotics in which $s$ increases more slowly than
$\sqrt{n}/\log k$, which lead to the bias of $\hat\beta_{\textnormal{ZZ}}$
decreasing more quickly than its standard deviation. Thus, the \ac{CI} in
\cref{eq:zz_CI} is ``asymptotically valid'' without the need to explicitly
specify the sparsity index $s$: one need only make an ``asymptotic promise''
that $s$ increases slowly enough. However, such asymptotic promises are
difficult to evaluate in a given finite-sample setting. Indeed, as shown by
\citet{wuthrich_omitted_2021} and confirmed in our Monte Carlos in
\Cref{sec:simulation-results} below, the double lasso CI leads to undercoverage
in finite samples even in relatively sparse settings. To ensure good
finite-sample coverage of the \ac{CI} in~\cref{eq:zz_CI}, one needs to ensure
that the actual finite-sample bias is negligible relative to the standard
deviation of the estimator. But since any bias bound depends on the sparsity
index $s$ (as in the bound in \cref{eq:lasso_gamma_l1_bound}), this gets us back
to having to explicitly specify $s$.

Thus, \acp{CI} that ignore bias such as conventional \acp{CI} based on double
lasso estimators do not avoid the problem of specifying $s$ or $C$: they merely
make such choices implicit in their asymptotic promises. These issues show up
formally in the asymptotic analysis of such \acp{CI}. In particular, double
lasso \acp{CI} require the ``ultra sparse'' asymptotic regime
$s=o(\sqrt{n}/\log k)$, and they undercover asymptotically in the ``moderately
sparse'' regime where $s$ increases more slowly than $n$ with
$s\gg \sqrt{n}/\log k$. Indeed, \Cref{thm:l1_lower_bound} above, as well as the
results of \citet{cai_lasso_2017} and \citet{javanmard_debiasing_2018} show that
it is impossible to avoid explicitly specifying $s$ if one allows for the
moderately sparse regime.

On the other end of the spectrum, in the ``low dimensional'' regime where
$k\ll n$, the double lasso \ac{CI} is asymptotically equivalent to the usual
\ac{CI} based on the long regression. Thus, the double lasso \ac{CI} cannot be
used when the goal is to use a priori information on $\gamma$ to improve upon
the \ac{CI} based on the long regression \citep[as in, for
example,][]{muralidharan_factorial_2020}. In contrast, our approach optimally
incorporates the bound $C$ regardless of the asymptotic regime.

\section{Simulation results}\label{sec:simulation-results}

We now illustrate the performance of our methods when the penalty takes the form
of an $\ell_{1}$ norm on a subset of $k_{2}$ controls, as in \Cref{example:l1}.
We consider a design taken from \citet{belloni_inference_2014}, with data
generated from a random regressor model that supplements
\cref{eq:linear_regression_vector} with a propensity score regression
\begin{equation*}
  w=Z\pi+\sigma_{\tilde{w}}\tilde{w},
\end{equation*}
with $\tilde{w}_{i}$ and $\varepsilon_{i}$ independent standard Gaussian, and
independent of $z_{i}$, which are distributed i.i.d.\ $\ND (0, \Sigma)$ with
$\Sigma_{ij}=2^{-\abs{i-j}}$. Similar to \citet{wuthrich_omitted_2021}, we tweak
the \citet{belloni_inference_2014} design by considering regression coefficients
that are of similar magnitude rather than decaying. This allows us to separately
vary the degree of sparsity and the signal-to-noise ratio. Specifically, we set
\begin{equation*}
  \gamma_{j}= \pi_{j} =
  \begin{cases}
    c_{1} & \text{if $j\leq k_{1}$,}\\
    c_{2} & \text{if $k_{1}<j\leq k_{1}+s$,} \\
    0 & \text{otherwise.}
\end{cases}.
\end{equation*}

We consider three methods for constructing \acp{CI} for $\beta$ with nominal
level 95\%. The first two methods implement \Cref{algorithm:baseline}, with the
penalty given by the $\ell_{1}$ norm of $\gamma_{2}$, the last $k_{2}$
regression coefficients. The first method, which we refer to as ``oracle,'' sets
the penalty parameter $C$ to the actual value of $\norm{\gamma_{2}}_{1}$, and
uses knowledge of the variance of the error term $\varepsilon_{i}$. The second
\ac{CI}, termed ``AKK,'' uses initial residual estimates based on the lasso
estimator (that only penalizes $\gamma_{2}$), with the penalty chosen via
10-fold cross-validation. The \ac{CI} uses the rule of thumb calibration
$C^{rot}=\norm{\widehat{\gamma}_{short}}_{1}$ from \Cref{choice_of_C_remark},
where $\widehat{\gamma}_{short}$ are \ac{OLS} estimates from a short regression
that only includes the first $k_{1}$ controls (the ``baseline'' controls). The
final method, termed ``BCH'', implements the double lasso procedure by
\citet{belloni_inference_2014}, using the R package \texttt{hdm}, without
penalizing the $k_{1}$ baseline controls and including an intercept.

The \ac{DGP} in our random regressor model depends on 8 parameters: $n$,
$k_{1}$, $k_{2}$, $s$, $\beta$, $\sigma_{\tilde{w}}$, $c_{1}$ and $c_{2}$. We
consider $n \in \{500, 1000\}$, $k_{1} \in \{5, 10\}$,
$k_{2} \in \{100, 200, 500, 1000\}$, $s\in \{10, 20, 100\}$, $\beta \in \{0, 2\}$, and
$\sigma_{\tilde{w}} \in \{0.5, 1\}$. We calibrate $c_{1}$ and $c_{2}$ by fixing
the population $R^{2}$ from the regression of $Y$ on $Z$, and fixing the ratio
$\nu_{rot} = \norm{\gamma_{2}}_{1} / \norm{\tilde{\gamma}_{short}}_{1}$. This
allows us to directly control the signal-to-noise ratio, and the validity of the
rule-of-thumb calibration: if $\nu_{rot}\leq 1$, then the population restriction
underlying our rule of thumb is valid. We consider 4 values for the population
$R^{2}$, $\{0.01, 0.1, 0.25, 0.5\}$, and 12 values for $\nu_{rot}$,
$\{0.2, 0.4, \dotsc, 2.4 \}$. This gives a total of 9,216 \acp{DGP}.

\begin{table}
  \renewcommand*{\arraystretch}{1.15}
\begin{threeparttable}
  \caption{Simulation results for $n=500$.}\label{tab:sim_n=500}
  \begin{tabular}{@{}l rrrr rrrr rrrr r@{}}
    & \multicolumn{3}{c}{$k_{2} = 100$}    & \multicolumn{3}{c}{$k_{2} = 200$}
    & \multicolumn{3}{c}{$k_{2} = 500$} &   \multicolumn{3}{c}{$k_{2} = 1000$}\\
    \cmidrule(lr){2-4}\cmidrule(lr){5-7} \cmidrule(lr){8-10} \cmidrule(lr){11-13}
$\nu_{rot}$    & \multicolumn{1}{@{}c@{}}{AKK} & \multicolumn{1}{@{}c@{}}{BCH} & \multicolumn{1}{@{}c@{}}{Or} & \multicolumn{1}{@{}c@{}}{AKK} & \multicolumn{1}{@{}c@{}}{BCH} & \multicolumn{1}{@{}c@{}}{Or} & \multicolumn{1}{@{}c@{}}{AKK} & \multicolumn{1}{@{}c@{}}{BCH} & \multicolumn{1}{@{}c@{}}{Or} & \multicolumn{1}{@{}c@{}}{AKK} & \multicolumn{1}{@{}c@{}}{BCH} &
                                                                                 \multicolumn{1}{@{}c@{}}{Or} \\
    \midrule
    \multicolumn{13}{@{}c}{{Panel A\@: Coverage Probability, minimum
    across DGPs}}\\
    & \multicolumn{12}{c}{$s = 10$}\\
    $[0,1]$          & 92.6 & 91.3 & 93.1 & 92.4 & 88.8 & 93.2 & 93.4 & 88.4 & 93.6 & 93.3 & 85.4 & 93.8 \\
    $\hol{1, 1.5}$   & 91.3 & 89.2 & 93.7 & 91.1 & 86.6 & 93.8 & 89.2 & 86.9 & 93.7 & 87.4 & 84.9 & 93.9 \\
    $\hol{1.5, 2.5}$ & 85.8 & 90.2 & 93.1 & 83.3 & 88.9 & 93.3 & 78.2 & 86.2 & 93.5 & 68.0 & 83.0 & 93.7 \\  [1em]
    & \multicolumn{12}{c}{$s = 20$}\\
    $[0,1]$          & 92.1 & 80.1 & 93.2 & 92.9 & 75.5 & 93.7 & 93.3 & 68.1 & 93.5 & 93.7 & 62.2 & 93.9 \\
    $\hol{1, 1.5}$   & 92.2 & 80.9 & 94.1 & 90.7 & 74.0 & 94.1 & 90.1 & 67.9 & 93.4 & 86.6 & 58.7 & 94.6 \\
    $\hol{1.5, 2.5}$ & 85.6 & 79.2 & 93.2 & 82.9 & 72.4 & 92.9 & 74.4 & 68.3 & 93.8 & 65.5 & 59.1 & 93.4 \\  [1em]
    & \multicolumn{12}{c}{$s = 100$}\\
    $[0,1]$          & 92.9 & 40.6 & 93.8 & 93.1 & 35.7 & 93.3 & 94.5 & 33.7 & 93.6 & 94.5 & 32.5 & 93.4 \\
    $\hol{1, 1.5}$   & 92.8 & 17.1 & 93.9 & 93.9 & 14.2 & 93.9 & 93.6 & 9.8 & 93.8 & 92.6 & 8.1 & 94.7 \\
    $\hol{1.5, 2.5}$ & 90.8 & 2.5 & 93.9 & 88.6 & 1.6 & 94.3 & 80.5 & 0.7 & 95.0 & 70.7 & 0.3 & 94.3 \\  [1em]
    \multicolumn{13}{@{}c}{{Panel B\@: Relative length, average
    across DGPs}}  \\
    & \multicolumn{12}{c}{$s = 10$}\\
    $[0,1]$          & 1.01 & 0.96 &  & 1.04 & 0.94 &  & 1.10 & 0.91 &  & 1.16 & 0.89 &  \\
    $\hol{1, 1.5}$   & 0.98 & 0.95 &  & 0.99 & 0.92 &  & 0.99 & 0.86 &  & 1.00 & 0.82 &  \\
    $\hol{1.5, 2.5}$ & 0.97 & 0.95 &  & 0.96 & 0.91 &  & 0.95 & 0.85 &  & 0.94 & 0.80 &  \\  [1em]
    & \multicolumn{12}{c}{$s = 20$}\\
    $[0,1]$          & 1.01 & 0.96 &  & 1.04 & 0.94 &  & 1.10 & 0.90 &  & 1.16 & 0.87 &  \\
    $\hol{1, 1.5}$   & 0.98 & 0.94 &  & 0.98 & 0.90 &  & 0.98 & 0.84 &  & 0.98 & 0.79 &  \\
    $\hol{1.5, 2.5}$ & 0.96 & 0.94 &  & 0.95 & 0.90 &  & 0.92 & 0.82 &  & 0.91 & 0.76 &  \\  [1em]
    & \multicolumn{12}{c}{$s = 100$}\\
    $[0,1]$          & 1.01 & 0.95 &  & 1.04 & 0.92 &  & 1.10 & 0.88 &  & 1.16 & 0.85 &  \\
    $\hol{1, 1.5}$   & 0.98 & 0.92 &  & 0.98 & 0.87 &  & 0.97 & 0.79 &  & 0.96 & 0.73 &  \\
    $\hol{1.5, 2.5}$ & 0.96 & 0.91 &  & 0.95 & 0.85 &  & 0.90 & 0.74 &  & 0.86 & 0.67 &  \\
    \bottomrule
  \end{tabular}
  \begin{tablenotes}
  \item\footnotesize\emph{Notes}: For each method, panel A reports the
    worst-case coverage probability of nominal 95\% level \acp{CI} over 160
    \acp{DGP} for $\nu_{rot} \in [0,1]$ and $\nu_{rot} \in \hol{1.5, 2.5}$, and
    64 \acp{DGP} for $\nu_{rot} \in \hol{1, 1.5}$, where each \ac{DGP} averages
    across $1000$ Monte Carlo draws. Panel B reports the average relative length
    across the \acp{DGP}. Relative length is defined as the average length of
    the AKK and BCH \acp{CI}, averaged over the Monte Carlo draws, divided by
    the average length of the oracle \ac{CI}.
  \end{tablenotes}
\end{threeparttable}
\end{table}

\begin{table}
  \renewcommand*{\arraystretch}{1.15}
\begin{threeparttable}
  \caption{Simulation results for $n=1000$}\label{tab:sim_n=1000}
  \begin{tabular}{@{}l rrrr rrrr rrrr r@{}}
    & \multicolumn{3}{c}{$k_{2} = 100$}    & \multicolumn{3}{c}{$k_{2} = 200$}
    & \multicolumn{3}{c}{$k_{2} = 500$} &   \multicolumn{3}{c}{$k_{2} = 1000$}\\
    \cmidrule(lr){2-4}\cmidrule(lr){5-7} \cmidrule(lr){8-10} \cmidrule(lr){11-13}
$\nu_{rot}$    & \multicolumn{1}{@{}c@{}}{AKK} & \multicolumn{1}{@{}c@{}}{BCH} & \multicolumn{1}{@{}c@{}}{Or} & \multicolumn{1}{@{}c@{}}{AKK} & \multicolumn{1}{@{}c@{}}{BCH} & \multicolumn{1}{@{}c@{}}{Or} & \multicolumn{1}{@{}c@{}}{AKK} & \multicolumn{1}{@{}c@{}}{BCH} & \multicolumn{1}{@{}c@{}}{Or} & \multicolumn{1}{@{}c@{}}{AKK} & \multicolumn{1}{@{}c@{}}{BCH} &
                                                                                 \multicolumn{1}{@{}c@{}}{Or} \\
    \midrule
    \multicolumn{13}{@{}c}{{Panel A\@: Coverage Probability, minimum
    across DGPs}}\\
    & \multicolumn{12}{c}{$s = 10$}\\
    $[0,1]$          & 92.8 & 91.6 & 93.3 & 93.1 & 91.1 & 93.2 & 92.3 & 90.6 & 93.1 & 92.8 & 90.1 & 92.6 \\
    $\hol{1, 1.5}$   & 92.9 & 92.9 & 92.9 & 92.7 & 92.9 & 93.5 & 91.9 & 92.7 & 93.5 & 89.7 & 91.8 & 93.7 \\
    $\hol{1.5, 2.5}$ & 90.0 & 91.7 & 93.0 & 88.6 & 92.2 & 92.9 & 84.0 & 91.5 & 92.6 & 79.1 & 91.4 & 93.5 \\ [1em]
    & \multicolumn{12}{c}{$s = 20$}\\
    $[0,1]$          & 92.8 & 86.1 & 93.0 & 93.0 & 84.3 & 92.8 & 93.6 & 79.2 & 93.4 & 93.6 & 74.9 & 93.6 \\
    $\hol{1, 1.5}$   & 92.6 & 86.4 & 93.1 & 92.4 & 83.5 & 93.3 & 89.8 & 78.6 & 93.0 & 87.9 & 75.6 & 94.0 \\
    $\hol{1.5, 2.5}$ & 89.4 & 86.0 & 92.9 & 89.0 & 82.4 & 93.5 & 85.0 & 78.2 & 93.2 & 76.9 & 70.1 & 92.9 \\ [1em]
    & \multicolumn{12}{c}{$s = 100$}\\
    $[0,1]$          & 92.9 & 27.1 & 93.2 & 93.2 & 18.6 & 93.8 & 94.1 & 13.3 & 93.7 & 94.2 & 10.3 & 93.7 \\
    $\hol{1, 1.5}$   & 92.4 & 9.5 & 93.7 & 93.2 & 6.8 & 93.6 & 94.0 & 4.2 & 94.3 & 93.1 & 3.6 & 95.0 \\
    $\hol{1.5, 2.5}$ & 92.0 & 2.9 & 93.0 & 91.1 & 0.8 & 92.7 & 85.8 & 0.1 & 94.2 & 75.4 & 0.0 & 94.7 \\  [1em]
    \multicolumn{13}{@{}c}{{Panel B\@: Relative length, average
    across \acp{DGP}}}  \\
    & \multicolumn{12}{c}{$s = 10$}\\
    $[0,1]$          & 1.00 & 0.98 &  & 1.02 & 0.96 &  & 1.05 & 0.94 &  & 1.09 & 0.91 &  \\
    $\hol{1, 1.5}$   & 0.99 & 0.97 &  & 0.99 & 0.95 &  & 0.98 & 0.91 &  & 0.98 & 0.86 &  \\
    $\hol{1.5, 2.5}$ & 0.98 & 0.97 &  & 0.97 & 0.95 &  & 0.95 & 0.90 &  & 0.93 & 0.84 &  \\  [1em]
    & \multicolumn{12}{c}{$s = 20$}\\
    $[0,1]$          & 1.00 & 0.98 &  & 1.01 & 0.96 &  & 1.05 & 0.93 &  & 1.09 & 0.90 &  \\
    $\hol{1, 1.5}$   & 0.99 & 0.97 &  & 0.98 & 0.94 &  & 0.98 & 0.89 &  & 0.97 & 0.84 &  \\
    $\hol{1.5, 2.5}$ & 0.98 & 0.97 &  & 0.97 & 0.94 &  & 0.94 & 0.88 &  & 0.91 & 0.82 &  \\   [1em]
    & \multicolumn{12}{c}{$s = 100$}\\
    $[0,1]$          & 1.00 & 0.97 &  & 1.01 & 0.95 &  & 1.05 & 0.92 &  & 1.09 & 0.88 &  \\
    $\hol{1, 1.5}$   & 0.99 & 0.95 &  & 0.99 & 0.92 &  & 0.98 & 0.86 &  & 0.96 & 0.79 &  \\
    $\hol{1.5, 2.5}$ & 0.98 & 0.95 &  & 0.97 & 0.91 &  & 0.94 & 0.83 &  & 0.89 & 0.74 &  \\
    \bottomrule
  \end{tabular}
  \begin{tablenotes}
  \item\footnotesize\emph{Notes}: See \Cref{tab:sim_n=500}.
  \end{tablenotes}
\end{threeparttable}
\end{table}

\Cref{tab:sim_n=500} reports the simulation results for $n=500$. The results for
$n=1000$ are reported in \Cref{tab:sim_n=1000}. In line with the theory, the
coverage of the oracle \ac{CI} is close to nominal across all
designs.\footnote{The slight undercoverage reported in the tables is due to
  Monte Carlo error: with 1000 simulation draws, the expected worst-case
  coverage over 160 DGPs is 93\% if the true coverage for each DGP is 95\%.}
When $\nu_{rot}\leq 1$, coverage of the AKK \ac{CI} is likewise close to
nominal. Under mild violations of the population constraint, the \acp{CI}
display moderate undercoverage: when $\nu_{rot}\leq 1.5$, coverage remains over
86.6\% across all designs, and over 90.7\% when $k_{2} \leq 200$. Only when
$\nu_{rot} > 1.5 $ and $k_{2} \geq 500$, the undercoverage becomes more severe.
In contrast, the BCH method displays moderate undercoverage even in sparse
designs with $s=10$, with coverage at about 85\% when $k_{2}=1000$ and $n=500$.
The undercoverage gets more severe, with coverage dipping below 60\% once
$s=20$, and the \acp{CI} almost entirely miss the true parameter in dense
designs with $s=100$. These results illustrate the concern discussed in
\Cref{double_lasso_comparison_sec} that asymptotic sparsity requirements may be
difficult to evaluate in finite samples.

The favorable coverage of the AKK \acp{CI} relies heavily on using the
bias-aware critical value. Unreported simulations show that the coverage of
\acp{CI} constructed using the same estimators as the AKK \acp{CI} but with
standard critical values (i.e., 1.96 for 95\% coverage), rather than our
bias-aware critical values, can be as low as 74.8\% for \acp{DGP} with
$\nu_{rot}\leq 1$.

The AKK \acp{CI} display a mild increase in average length relative to the
oracle, with the length penalty ranging between 0 and 16\%. The length penalty
relative to the BCH method is also in this range for designs where both methods
achieve good coverage. This is a bargain price to pay for the much more reliable
and transparent coverage performance.

\section{Empirical application}\label{sec:empirical-app}

This section shows the performance of our methods using survey data on $n=496$
winners of major and minor prizes in the Massachusetts lottery in 1984--88 from
\citet{imbens2001lottery} to estimate the \ac{MPE} out of unearned income, a key
structural parameter in labor and public economics. While unearned income is
typically endogenous, \citet{imbens2001lottery} argue that in this sample,
observable individual characteristics proxy well enough for the frequency of
lottery ticket purchases that the magnitude of winnings is as good as random.
The lottery winnings are paid out over 20 years, so that in a regression of the
average social security earnings in the 6 years after the lottery, $Y_{i}$, onto
yearly lottery payments, $X_{i}$, and individual controls, the coefficient on
$X_{i}$ may be interpreted as the \ac{MPE}.

We focus on a specification taken from \citet{li_linear_2020}, who augment a
baseline set of $k_{1}=7$ individual controls $Z_{1}$ consisting of the
intercept, two continuous controls (years of education and age), and 4 binary
controls (indicators for male, college, age over 55, and age over 65) with
$k_{2}=25$ additional controls $Z_{2}$ that are constructed by taking demeaned
cross-products of 4 the baseline binary controls and their interactions with
$X_{i}$ and dropping collinear terms. Both $Z_{1}$ and $Z_{2}$ are standardized.
Following the discussion in \Cref{sec:heter-treatm-effects}, the coefficient on
$X_{i}$ in this specification can be interpreted as the average \ac{MPE},
allowing for heterogeneity in the \ac{MPE} with respect to the binary controls.
In contrast, the short regression estimand in a regression that only includes
$Z_{1}$ is biased for the average \ac{MPE} in presence of such heterogeneity.

The \ac{MPE} estimate in the long regression equals $-0.049$, close to the short
regression estimate $-0.052$ that only includes the baseline controls $Z_{1}$
and corresponds to the specification in Table 4, column II row 1 in
\citet{imbens2001lottery}. However, the long regression estimate is very noisy:
the 95\% confidence interval $(-0.115, 0.016)$ includes positive values for the
average \ac{MPE} which economic theory rules out, and it is over 3 times longer
than the short regression \ac{CI} $(-0.073, -0.032)$. To increase precision of
inference, \citet{li_linear_2020} restrict the average squared mean effects
$z_{2i}'\gamma_{2}$ using an $\ell_{2}$ penalty given in \Cref{example:l2}.
Calibrating $C$ to the rule of thumb value from \Cref{choice_of_C_remark},
$C^{rot}=7.2$, yields the \ac{CI} $(-0.116, 0.018)$ using the
\citet{li_linear_2020} method, which is even longer than the long regression
\ac{CI}.\footnote{\citet{li_linear_2020} show that their method is close to
  optimal in terms of weighted average length under a homoskedastic benchmark.
  This may no longer be the case under heteroskedasticity. Their \ac{CI} is
  variable length, and may be longer than the long regression \ac{CI} in some
  samples even in the homoskedastic case. In contrast, our construction
  guarantees length improvements over the long regression in all samples under
  homoskedasticity.} The \ac{CI} constructed using our method,
$(-0.114, 0.015)$, improves slightly upon the long \ac{CI}, but it is still too
wide to be informative.\footnote{To make the methods more comparable and not
  conflate the comparison with differences in standard error construction,
  variance estimates underlying \acp{CI} for all methods use residual estimates
  based on a lasso estimator that penalizes only $\gamma_{2}$, with penalty
  chosen by 10-fold cross-validation.} The \citet{li_linear_2020} penalty
affords only marginal precision gains because the penalty limits the average
influence of the additional regressors---but these regressors only marginally
increase the regression $R^{2}$ in the long regression: the adjusted $R^{2}$
increases from 0.233 in the short regression to 0.236 in the long regression.

\begin{figure}[tp]
  \centering
  \input{l1.tex}
  \caption{95\% \acp{CI} for the marginal propensity to earn out of unearned
    income under $\ell_{1}$ penalty.}\label{fig:emp}
  \floatfoot{\emph{Notes:} Orange solid vertical lines at the left and right
    endpoints of the $x$-axis mark \acp{CI} based on the short and long
    regression, respectively. Black solid vertical line at the left endpoint of
    the $x$-axis marks BCH \ac{CI}. The blue shaded area depicts the bias-aware
    \acp{CI}, while the dotted black line shows the point estimates as a
    function of the regularity parameter $C$. The rule of thumb value of the
    regularity parameter, $C^{rot}$, its breakdown value $C^{*}$, and the
    endpoint $\hat{\underline C}$ of a lower \ac{CI} for $C$, discussed in
    \Cref{choice_of_C_remark}, are all marked on the $x$-axis (log scale).}
\end{figure}

In contrast, limiting the total influence of the additional controls by imposing
a bound on $\norm{\gamma_{2}}_{1}$ as in \Cref{example:l1} yields much more
substantive precision gains. \Cref{fig:emp} depicts the \acp{CI} constructed
using the implementation in \Cref{algorithm:baseline} for a wide range of the
penalty parameter. The rule of thumb calibration from \Cref{choice_of_C_remark}
yields $C^{rot}=11.8$. At this calibration the point estimate is $-0.059$, with
a \ac{CI} given by $(-0.090, -0.028)$, about half as long as the long regression
\ac{CI} (depicted by an orange vertical line in the figure). In line with the
simulation results in \Cref{sec:simulation-results}, the \ac{CI} is also close
to the double lasso \ac{CI} of \citet{belloni_inference_2014}, given by
$(-0.080, -0.031)$, (depicted in the figure by a black vertical line). Doubling
the rule-of-thumb value of $C$ changes the \ac{CI} little, yielding
$(-0.092, -0.026)$.

\begin{appendices}
\crefalias{section}{appsec}
\crefalias{subsection}{appsec}
\crefalias{subsubsection}{appsec}

\section{Proofs}\label{sec:proofs}

This \namecref{sec:proofs} gives proofs for all results in the main text.

\subsection{Proof of Theorem~\ref{thm:optimization_equivalence}}

To prove \Cref{thm:optimization_equivalence}, we first explain how our results
fall into the general setup used in \citet{donoho94}, \citet{low95} and
\citet{ArKo18optimal}. In the notation of \citet{ArKo18optimal},
$(\beta, \gamma')'$ plays the role of the parameter $f$, the functional of
interest is given by $L(\beta, \gamma')'=\beta$ and
$K(\beta, \gamma')'=w\beta+Z\gamma$. The parameter space
$\mathbb{R}\times \Gamma$ is centrosymmetric, so that the modulus of continuity
\citep[eq.~(25) in][]{ArKo18optimal} is given by
\begin{equation*}
  \omega(\delta) = \sup_{\beta, \gamma} 2\beta \quad\text{s.t.}\quad \norm{w\beta+Z\gamma}_{2}
  \le \delta/2,
  \quad
  \Pen(\gamma)\le C.
\end{equation*}
Using the substitution $\pi=-\gamma/ \beta$, we can write this as
\begin{equation}\label{eq:modulus_pi}
  \omega(\delta)=\sup_{\beta, \pi} 2\beta \quad\text{s.t.}\quad \beta \norm{w-Z\pi}_{2}
  \le \delta/2,
  \quad
  \beta \Pen(\pi)\le C.
\end{equation}
Let $\beta^{\text{mod}}_\delta, \gamma^{\text{mod}}_\delta$ and
$\pi^{\text{mod}}_\delta=-\gamma^{\text{mod}}_\delta/\beta^{\text{mod}}_\delta$
denote a solution to this problem when it exists. In the notation of
\citet{ArKo18optimal},
$(\beta^{\text{mod}}_\delta, {\gamma^{\text{mod}}_\delta}')'$ plays the role of $g^*_\delta$, and the solution $(f^*_\delta, g^*_\delta)$ satisfies
$f^*_\delta=-g^*_\delta=-(\beta^{\text{mod}}_\delta, {\gamma^{\text{mod}}_\delta}')'$
by centrosymmetry.

This optimization problem is clearly related to the problem in
\cref{eq:pi_optimization}: we want to make $\norm{w-Z\pi}_{2}$ and $\Pen(\pi)$
small so that large values of $\beta$ satisfy the constraint
in~\eqref{eq:modulus_pi}. The following lemma formalizes the connection.

\begin{lemma}\label{modulus_solution_lemma}
  If there exists $\pi\in \sRk$ such that $w=Z\pi$ and $\Pen(\pi)=0$, then
  $\omega(\delta)=\infty$ for all $\delta\ge 0$. Otherwise, (i) for any
  $\delta>0$, the modulus problem in \cref{eq:modulus_pi} has a solution
  $\beta^{\text{mod}}_\delta, \pi^{\text{mod}}_\delta$ with
  $\beta^{\text{mod}}_\delta>0$. For
  $t_\lambda=C/\beta^{\text{mod}}_\delta=2C/\omega(\delta)$, this solution
  $\pi^{\text{mod}}_\delta$ is also a solution to the penalized
  regression~\eqref{eq:pi_optimization} with optimized objective
  $\norm{w-Z\pi^{\text{mod}}_\delta}_{2}
  =\delta/(2\beta^{\text{mod}}_\delta)=\delta/\omega(\delta)>0$; and (ii) for
  any $t_\lambda>0$, the penalized regression problem~\eqref{eq:pi_optimization}
  has a solution $\pi_\lambda$. Setting $\beta_\lambda=C/t_\lambda$ and
  $\delta_\lambda=2\beta_\lambda\norm{w-Z\pi_\lambda}_2=(2C/t_\lambda)\norm{w-Z\pi_\lambda}_{2}$,
  the pair $\beta_\lambda, \pi_\lambda$ solves the modulus
  problem~\eqref{eq:modulus_pi} at $\delta=\delta_\lambda$, with optimized
  objective $\omega(\delta_\lambda)=2C/t_\lambda$, so long as
  $\norm{w-Z\pi_\lambda}_2>0$.
\end{lemma}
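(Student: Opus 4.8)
The plan is to dispose of the degenerate case by inspection and then reduce both parts of the \namecref{modulus_solution_lemma} to a single scaling identity linking the modulus problem~\eqref{eq:modulus_pi} to the penalized regression~\eqref{eq:pi_optimization}. The only genuinely delicate ingredient is existence of minimizers, which is subtle precisely because $\Pen$ is merely a seminorm. For the degenerate case, if some $\pi$ satisfies $w=Z\pi$ and $\Pen(\pi)=0$, then this $\pi$ paired with \emph{any} $\beta$ is feasible in~\eqref{eq:modulus_pi}, since $\beta\norm{w-Z\pi}_{2}=0\le\delta/2$ and $\beta\Pen(\pi)=0\le C$; letting $\beta\to\infty$ gives $\omega(\delta)=\infty$ for every $\delta\ge 0$. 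So I assume for the remainder that no such $\pi$ exists.

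\textbf{Compactness (the main obstacle).} Let $N=\{\pi\in\sRk\colon \Pen(\pi)=0\}$, a linear subspace, and fix a complement so that $\pi=\pi_{N}+\pi_{\perp}$. Since $\Pen$ vanishes on $N$, the triangle inequality gives $\Pen(\pi)=\Pen(\pi_{\perp})$, and $\Pen$ restricted to the complement is a genuine norm; by equivalence of norms in finite dimension, $\Pen(\pi_{n})\to 0$ forces $\pi_{n,\perp}\to 0$. Because $\sRk$ is finite-dimensional, $Z(N)$ is a closed subspace. The upshot is that whenever $\Pen(\pi_{n})\to 0$ and $Z\pi_{n}$ converges, the limit lies in $Z(N)$; combined with the standing non-degeneracy assumption, $w$ cannot be approximated by $Z\pi$ with $\Pen(\pi)\to 0$. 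I expect this decomposition step to be the crux, because the null space $N$ makes the sublevel sets $\{\Pen\le t\}$ noncompact and one must effectively quotient it out. This same argument delivers existence in~\eqref{eq:pi_optimization}: for each $t>0$ the convex objective $\norm{w-Z\pi}_{2}^{2}$ is either constant or coercive along the only recession directions of $\{\Pen\le t\}$ (namely $N$), so a minimizing sequence, decomposed as above, has a convergent subsequence whose limit is a minimizer. Writing $\rho(t)=\min_{\Pen(\pi)\le t}\norm{w-Z\pi}_{2}$, the function $\rho$ is non-increasing and convex, hence continuous on $(0,\infty)$; the compactness statement yields $\rho(0^{+})>0$ (otherwise a sequence with $Z\pi_{n}\to w$, $\Pen(\pi_{n})\to 0$ would exist), while $\rho(t)\to\inf_{\pi}\norm{w-Z\pi}_{2}\ge 0$ as $t\to\infty$.

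\textbf{Part (ii), by direct scaling.} With $\beta^{*}_{\lambda}=C/t_{\lambda}$ and $\delta_{\lambda}=2\beta^{*}_{\lambda}\norm{w-Z\pi^{*}_{\lambda}}_{2}$, the pair $(\beta^{*}_{\lambda},\pi^{*}_{\lambda})$ is feasible in~\eqref{eq:modulus_pi} at $\delta_{\lambda}$: the first constraint holds with equality and the second because $\Pen(\pi^{*}_{\lambda})\le t_{\lambda}$. Hence $\omega(\delta_{\lambda})\ge 2\beta^{*}_{\lambda}$. Conversely, any feasible $(\beta,\pi)$ with $\beta>\beta^{*}_{\lambda}$ would satisfy $\Pen(\pi)\le C/\beta<t_{\lambda}$ and $\norm{w-Z\pi}_{2}\le \delta_{\lambda}/(2\beta)<\norm{w-Z\pi^{*}_{\lambda}}_{2}$, contradicting optimality of $\pi^{*}_{\lambda}$ in~\eqref{eq:pi_optimization}; the hypothesis $\norm{w-Z\pi^{*}_{\lambda}}_{2}>0$ is exactly what makes $\delta_{\lambda}>0$ and these inequalities strict. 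Therefore $\omega(\delta_{\lambda})=2\beta^{*}_{\lambda}=2C/t_{\lambda}$ and $(\beta^{*}_{\lambda},\pi^{*}_{\lambda})$ solves the modulus problem.

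\textbf{Part (i), via the value function.} For $\beta>0$, feasibility of $\beta$ in~\eqref{eq:modulus_pi} is equivalent to the existence of $\pi$ with $\Pen(\pi)\le C/\beta$ and $\norm{w-Z\pi}_{2}\le\delta/(2\beta)$, i.e.\ to $\rho(C/\beta)\le\delta/(2\beta)$, i.e.\ to $f(\beta)\le\delta$ where $f(\beta)=2\beta\rho(C/\beta)$. Thus $\omega(\delta)/2=\sup\{\beta\colon f(\beta)\le\delta\}$. Since $\rho$ is bounded with $\rho(0^{+})>0$, the map $f$ is continuous and non-decreasing with $f(0^{+})=0$ and $f(\infty)=\infty$, and strictly increasing wherever $\rho>0$; hence $\{f\le\delta\}=(0,\beta^{\text{mod}}_{\delta}]$ with $f(\beta^{\text{mod}}_{\delta})=\delta$. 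This gives $\omega(\delta)=2\beta^{\text{mod}}_{\delta}>0$ and $\rho(C/\beta^{\text{mod}}_{\delta})=\delta/(2\beta^{\text{mod}}_{\delta})>0$. Setting $t_{\lambda}=C/\beta^{\text{mod}}_{\delta}$, the minimizer of~\eqref{eq:pi_optimization} then attains both objectives, with $\norm{w-Z\pi^{\text{mod}}_{\delta}}_{2}=\rho(t_{\lambda})=\delta/(2\beta^{\text{mod}}_{\delta})=\delta/\omega(\delta)$, which is the claim. (Equivalently, one can establish existence of the modulus solution directly by the compactness argument and then show both constraints bind at the optimum; the $\rho$-route packages this more cleanly.)
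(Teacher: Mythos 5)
Your proof is correct, and while your degenerate case and part (ii) coincide with the paper's argument (the same strict-inequality contradiction against optimality of $\pi^*_\lambda$, with $\norm{w-Z\pi^*_\lambda}_2>0$ supplying strictness of $\norm{w-Z\tilde\pi}_2<\norm{w-Z\pi^*_\lambda}_2$), your treatment of existence and of part (i) takes a genuinely different route. For existence, the paper quotients out the subspace $\sRk^{(0)}=\{\pi\colon Z\pi=0,\ \Pen(\pi)=0\}$ and invokes closedness and boundedness of level sets on a complement; you instead split off the full null space $N=\{\pi\colon\Pen(\pi)=0\}$, use compactness of the $\Pen$-ball on a complement, and dispose of the recession directions inside $N$ as ``constant or coercive'' --- the same quotienting trick, organized differently, though note that a minimizing sequence need not literally have a convergent subsequence until you first delete its component in $N\cap\ker Z$ (your ``constant'' directions), so that clause is doing real work and deserves a sentence. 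The substantive divergence is part (i): the paper establishes $\beta^{\text{mod}}_\delta>0$, $\norm{w-Z\pi^{\text{mod}}_\delta}_2>0$, and the regression-optimality of $\pi^{\text{mod}}_\delta$ through separate perturbation/contradiction arguments (scaling $\tilde\pi_\eta=(1-\eta)\tilde\pi$ to slacken both constraints and extract a strictly larger $\beta$), whereas you scalarize via the value function $\rho(t)=\min_{\Pen(\pi)\le t}\norm{w-Z\pi}_2$ --- convex by partial minimization of a jointly convex problem, hence continuous on $(0,\infty)$, with $\rho(0^+)>0$ exactly because $w\notin Z(N)$ and $Z(N)$ is closed --- and read off $\omega(\delta)=2\beta^{\text{mod}}_\delta$ from solving $f(\beta)=2\beta\rho(C/\beta)=\delta$. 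This buys you the positivity $\norm{w-Z\pi^{\text{mod}}_\delta}_2=\delta/\omega(\delta)>0$ for free and replaces the paper's $\eta$-perturbations with monotonicity plus continuity of $f$; it also makes global facts about $\delta\mapsto\omega(\delta)$ visible at no extra cost, at the price of verifying convexity and continuity of $\rho$. One small addendum you should state explicitly: the lemma asserts that \emph{every} modulus solution solves \eqref{eq:pi_optimization} at $t_\lambda=C/\beta^{\text{mod}}_\delta$, while you only exhibit one solution built from the regression minimizer; this follows in one line from your identity, since any modulus solution is feasible for \eqref{eq:pi_optimization} at $t_\lambda$ with objective at most $\delta/(2\beta^{\text{mod}}_\delta)=\rho(t_\lambda)$, hence attains the minimum.
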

\begin{proof}
  If there exists $\pi\in \sRk$ such that $w=Z\pi$ and $\Pen(\pi)=0$, then the
  result is immediate. Suppose there does not exist such a $\pi$.

  First, we show that the problem in \cref{eq:pi_optimization} has a solution. Let
  $\sRk^{(0)}$ denote the linear subspace of vectors $\pi\in \sRk$ such
  that $Z\pi=0$ and $\Pen(\pi)=0$, and let $ \sRk^{(1)}$ be a subspace
  such that $\sRk= \sRk^{(0)}\oplus \sRk^{(1)}$, so that we can
  write $\pi\in \sRk$ uniquely as $\pi=\pi^{(0)}+\pi^{(1)}$ where
  $\pi^{(0)}\in \sRk^{(0)}$ and $\pi^{(1)}\in \sRk^{(1)}$. Note that
  $Z\pi=Z\pi^{(1)}$ and, applying the triangle inequality twice,
  $\Pen(\pi^{(1)})=\Pen(\pi^{(1)}) - \Pen(-\pi^{(0)}) \le \Pen(\pi)\le
  \Pen(\pi^{(0)})+\Pen(\pi^{(1)})=\Pen(\pi^{(1)})$ so that
  $\Pen(\pi)=\Pen(\pi^{(1)})$. Thus, the problem~\eqref{eq:pi_optimization} can
  be written in terms of $\pi^{(1)}\in \sRk^{(1)}$ only. The level sets of
  this optimization problem are bounded and are closed by continuity of the
  seminorm $\Pen(\cdot)$ \citep{goldberg_continuity_2017}, and so it has a
  solution, which is also a solution in the original problem. Similarly, to show
  that the problem~\eqref{eq:modulus_pi} has a solution, note that feasible
  values of $\beta$ are bounded by a constant times the inverse of the minimum
  of $\max\{\norm{w-Z\pi}_2, \Pen(\pi)\}$ over $\pi$, which is strictly positive
  by continuity of $\Pen(\pi)$ and the fact that there does not exist $\pi$ with
  $\max\{\norm{w-Z\pi}_2, \Pen(\pi)\}=0$. Thus, we can restrict
  $\beta, \tilde\pi^{(1)}$ to a compact set without changing the optimization
  problem.

  To show the first statement in the lemma, note that
  $\beta^{\text{mod}}_\delta>0$, since it is feasible to set $\pi=0$ and
  $\beta=\delta/(2\norm{w}_2)$, and that $\norm{w-Z\pi^{\text{mod}}_\delta}_2>0$,
  since otherwise a strictly larger value of $\beta$ could be achieved by
  multiplying $\pi^{\text{mod}}_\delta$ by $1-\eta$ for $\eta>0$ small enough.
  Now, if the first statement did not hold, there would exist a $\tilde\pi$ with
  $\Pen(\tilde \pi)\le C/\beta^{\text{mod}}_\delta$ such that
  $\norm{w-Z\tilde\pi}_2\le \norm{w-Z\pi^{\text{mod}}_\delta}_2-\nu$ for small enough
  $\nu>0$. Then, letting $\tilde\pi_\eta=(1-\eta)\tilde \pi$, we would have
  $\norm{w-Z\tilde\pi_\eta}_2\le \norm{w-Z\tilde\pi}_2 + \eta\norm{Z\tilde\pi}_2 \le
  \norm{w-Z\pi^{\text{mod}}_\delta}_2 - \nu + \eta\norm{Z\tilde\pi}_2\le
  \delta/(2\beta^{\text{mod}}_\delta)-\nu+\eta\norm{Z\tilde\pi}_2$. Thus, for small
  enough $\eta$, $\norm{w-Z\tilde\pi_\eta}_{2}$ will be strictly less than
  $\delta/(2\beta^{\text{mod}}_\delta)$ for small enough $\eta$ and
  $\Pen(\tilde{\pi}_\eta)\le
  (1-\eta)C/\beta^{\text{mod}}_\delta<C/\beta^{\text{mod}}_\delta$. This is a
  contradiction, since it would allow a strictly larger value of $\beta$ by
  setting $\pi=\tilde\pi_\eta$.

  The second statement follows immediately, since any pair
  $\tilde\beta, \tilde\pi$ satisfying the constraints in the
  modulus~\eqref{eq:modulus_pi} for $\delta=\delta_\lambda$ with
  $\tilde\beta>\beta_\lambda$ would have to have
  $\norm{w-Z\tilde\pi}_2<\norm{w-Z\pi_\lambda}_{2}$ while maintaining the constraint
  $\Pen(\pi_\lambda)\le t_\lambda$.
\end{proof}

We now prove \Cref{thm:optimization_equivalence}. The class of bias-variance
optimizing estimators, $\hat L_\delta$ in the notation of \citet{ArKo18optimal},
is given by
$\frac{(w\beta^{\text{mod}}_\delta +
  Z\gamma^{\text{mod}}_\delta)'Y}{(w\beta^{\text{mod}}_\delta +
  Z\gamma^{\text{mod}}_\delta)'w}$, where we use eq.~(26) in
\citet{ArKo18optimal} to compute the form of this estimator under
centrosymmetry, and Lemma D.1 in \citet{ArKo18optimal} to calculate the
derivative $\omega'(\delta)$, since the problem is translation invariant with
$\iota$ given by the parameter $\beta=1$, $\gamma=0$. Given $\lambda$ with
$\norm{w-Z\pi_\lambda}_{2}>0$, it follows from \Cref{modulus_solution_lemma}
that, for $\delta_\lambda$ given in the lemma, this estimator
$\hat L_{\delta_\lambda}$ is equal to $\hat\beta_\lambda={a_\lambda}'Y$ where
$a_\lambda=\frac{w-Z\pi_\delta}{(w-Z\pi_\delta)'w}$, as defined in
\Cref{thm:optimization_equivalence}. The worst-case bias formula in
\Cref{thm:optimization_equivalence} then follows from the fact that the maximum
bias is attained at
$\gamma=-\gamma^{\text{mod}}_{\delta_\lambda}=Ct_\lambda^{-1}\pi_\lambda$ by
Lemma A.1 in \citet{ArKo18optimal} (or Lemma 4 in \citealp{donoho94}).

\subsection{Proof of Corollary~\ref{optimality_corollary}}\label{sec:proof-coroll-optimality}

Part~\ref{item:bias-variance} of \Cref{optimality_corollary} follows from
\citet{low95}. In particular, consider the one-dimensional submodel
$\beta\in[-C/t_{\lambda}, C/t_{\lambda}]$, $\gamma=-\pi_{\lambda}\beta$. Let
$b_{\lambda}=(w-Z\pi_{\lambda})/\norm{w-Z\pi_{\lambda}}^{2}_{2}$, and
let $B\in\mathbb{R}^{(n-1)\times n}$ be an orthogonal matrix that's orthogonal
to $b_{\lambda}$. Note that in this submodel,
$B'Y=B'(w-Z\pi_{\lambda})\beta+B'\varepsilon=B'\varepsilon$, which does not
depend on the unknown parameter $\beta$, and is independent of $b_{\lambda}'Y$.
Therefore, $b_{\lambda}'Y\sim\ND(\beta, \norm{b_{\lambda}}_{2}^{2}\sigma^{2})$ is
a sufficient statistic in this submodel. By Theorem 1 in \citet{low95}, in this
submodel, the estimator
$\hat{\beta}_{\lambda}={a_{\lambda}}'Y=\kappa b_{\lambda}'Y$, where
$\kappa=\norm{w-Z\pi_{\lambda}}^{2}_{2}/(w-Z\pi_{\lambda})'w$ minimizes
$\sup_{\beta}\var(\delta(Y))$ among all estimators $\delta(Y)$ with
$\sup_{\beta}\abs{E_{\beta}[\delta(Y)]-\beta}\leq
(1-\kappa)C/t_{\lambda}=C\overline{B}_{\lambda}$, and, likewise, it minimizes
$\sup_{\beta}\abs{E_{\beta}[\delta(Y)]-\beta}$ among all estimators with
$\sup_{\beta}\var(\delta(Y))\leq
\kappa^{2}\sigma^{2}\norm{b_{\lambda}}_{2}^{2}=V_{\lambda}$. Since the
worst-case bias
$\maxbias_{\Gamma}(\hat{\beta}_{\lambda})\leq C\overline{B}_{\lambda}$ and variance
$(\hat{\beta}_{\lambda})= V_{\lambda}$ are the same in the full model
by~\Cref{thm:optimization_equivalence},
the result follows.

Part~\ref{item:mse} of \Cref{optimality_corollary} is immediate from
\citet{donoho94}. In particular, it holds with
\begin{equation*}
  \kappa^{*}_{\textnormal{MSE}}(X, \sigma, \Gamma) =
  \frac{\sup_{\delta>0}(\omega(\delta)/\delta)^2
    \rho_{N}(\delta/2,\sigma)}{\sup_{\delta>0}(\omega(\delta)/\delta)^{2}
    \rho_{A}(\delta/2,\sigma)}
  \ge 0.8,
\end{equation*}
where $\omega(\delta)$ is defined in \cref{eq:modulus_pi}, and $\rho_{A}$ and
$\rho_{N}$ are the minimax risk among affine estimators, and among all estimators,
respectively, in the bounded normal means problem $Y\sim \ND(\theta, \sigma^2)$,
$\abs{\theta}\le \tau$, defined in \citet{donoho94}, and the last inequality
follows from eq.~(4) in \citet{donoho94}.

Finally, Part~\ref{item:flci} of \Cref{optimality_corollary} follows from
Corollary 3.3 in \citet{ArKo18optimal}, with
\begin{equation*}
  \kappa^*_{\textnormal{FLCI}}(X, \sigma,
  \Gamma)=\frac{(1-\alpha)E\left[\omega(2(z_{1-\alpha}-Z))\mid Z\le
      z_{1-\alpha}\right]}{2\min_\delta
    \cv_\alpha\left(\frac{\omega(\delta)}{2\omega'(\delta)}
      -\frac{\delta}{2}\right)\omega'(\delta)},
\end{equation*}
where $Z\sim\mathcal{N}(0,1)$, $\omega(\delta)$ is given in
\cref{eq:modulus_pi}, and by Lemma D.1 in \citeauthor{ArKo18optimal}, since the
problem is translation invariant with $\iota$ given by the parameter $\beta=1$,
$\gamma=0$,
$\omega'(\delta)=\delta/[w'(w-Z\pi^{\text{mod}}_{\delta})\cdot \omega(\delta)]$.
The universal lower bound 0.717 when $\alpha=0.05$ follows from Theorem 4.1 in
\citet{ArKo20sensitivity}.

\subsection{Proof of Theorem~\ref{theorem:weighted_te}}
Note that
$ \widetilde{\overline{\bias}}_{\Gamma}(\hat\beta;\mu^*_{a, w})
=\sup_{\gamma\in\Gamma} a'Z\gamma$. If $a'w=1$, then it follows from
\cref{eq:maxbias_def} that
$\maxbias_{\Gamma}(\hat\beta) =\sup_{\gamma\in\Gamma}a'Z\gamma$. This proves the
first part of the theorem.

To prove the second part of the theorem, note that, if $\mu$ is any signed
measure not equal to $\mu^*_{a, w}$, then we must have (i)
$\mu\left( \{w_{j}, z_j\} \right)\ne \sum_{i:z_i=z_j} a_{i}w_i$ for some $j$ or
(ii) $\mu$ must place positive mass on some subset $\mathcal{Z}$ that does not
intersect with $\{(w_{i}, z_{i})\}_{i=1}^{n}$. If (i) holds, then the bias
$\sum_{i=1}^n a_{i}w_i\beta(z_i)+a'Z\gamma-\int \beta(z)\, d\mu(w, z)$ can be
made arbitrarily large by making $\beta(z_j)$ large and setting $\beta(z)=0$ for
$z\neq z_{j}$. If (ii) holds, then the bias
$\sum_{i=1}^n a_{i}w_i\beta(z_i)+a'Z\gamma-\int \beta(z)\, d\mu(w, z)$ can be
made arbitrarily large by setting $\beta(z)$ to be constant on
$z\in{\mathcal{Z}}$ and equal to a number that is set to be arbitrarily large,
and setting $\beta(z)=0$ elsewhere. Thus,
$\widetilde{\overline{\bias}}_{\Gamma}(\hat\beta;\mu^*_{a, w})=\infty$ if
$\mu\ne\mu^*_{a, w}$.

To prove the final assertion, the weights $a_\lambda$ that minimize the variance
of the linear estimator $\hat\beta$ subject to the bound $C\overline B_\lambda$
on worst-case bias $\maxbias_{\Gamma}(\hat\beta)$ for $\beta$ in the constant
treatment effects model in \cref{linear_regression_eq}. It follows from the fist
assertion that
$\maxbias_{\Gamma}(\hat\beta)=\min_{\mu}
\widetilde{\overline{\bias}}_{\Gamma}(\hat\beta;\mu)=\widetilde{\overline{\bias}}_{\Gamma}(\hat\beta;\mu^*_a)$,
where the minimization is over all signed measures $\mu$ that integrate to one.
Thus, under the heterogeneous \ac{TE} model~\eqref{het_te_regression_eq}, the
weights $a^*_\lambda$ solve~\cref{moving_goalposts_eq}.

\subsection{Proof of Theorem~\ref{thm:upper_rate_bound}}

To prove that the claimed upper bound holds for $X\in \mathcal{E}_n(\eta)$, we
first note that, since the \ac{FLCI} based on
$\hat\beta_{\lambda^{*}_{\textnormal{FLCI}}}$ is shorter than the \ac{FLCI}
based on any linear estimator $a'Y$, it suffices to show that there exists a
sequence of weight vectors $a$ such that the worst-case bias and standard
deviation are bounded by constants times $n^{-1/2}(1+C k^{1/q})$ when $p>1$ or
$n^{-1/2}(1+C \sqrt{\log k})$ when $p=1$. We consider the weights
$\tilde a_i=\frac{v_i}{\sum_{j=1}^{n}v_{j}w_{j}}$, where $v_i=w_i-z_i'\delta$,
with $\delta$ given in the definition of $\mathcal{E}(\eta)$. The variance of
the estimator $\tilde a'Y$ is
$\frac{\sum_{i=1}^n v_i^2}{\left(\sum_{i=1}^{n}v_{i} w_i \right)^2}\le
\eta^{-3}/n$. The worst-case bias is
\begin{equation*}
  \sup_{\gamma\colon \norm{\gamma}_p\le C} \tilde a'Z\gamma
  =C\norm{Z'\tilde a}_{q}
  =n^{-1/2}C\frac{n^{-1/2}\norm{Z'(w-Z\delta)}_{q}}{n^{-1}\abs{w'(w-Z\delta)}}
  \leq C \frac{r_{q}(k, n)}{\eta^{2}},
\end{equation*}
where the first equality follows by Hölder's inequality, and the last quality
follows by definition of $\mathcal{E}_{n}(\eta)$. This yields the convergence
rate $n^{-1/2}+Cr_{q}(k, n)$, as claimed. For part (ii), by analogous reasoning,
it suffices to consider the short regression estimator
$\hat{\beta}_{0}=w'Y/w'w$. The variance of this estimator is
$\sigma^{2}/w'w\leq \eta^{-1}\sigma^{2}/n$. The bias of the estimator is
$w'Z\gamma/w'w$. By the Cauchy-Schwarz inequality, this quantity is bounded in
absolute value by
$\norm{w/w'w}_{2}\norm{Z\gamma}_{2}=\norm{Z\gamma/\sqrt{n}}_{2}/\sqrt{w'w/n}\leq
\eta^{-1/2}C$. This yields the desired convergence rate.

\subsection{Proof of Lemma~\ref{upper_bound_event_lemma}}

By the orthogonality condition for the best linear predictor, we have
$E[w_{i}v_{i}]=E[v_{i}^{2}]$, where $v_{i}=w_{i}-z_{i}'\delta$, which is bounded
from below uniformly over $k$ by assumption. Since $E[w_{i}v_{i}]$ is bounded
from above by $Ew_i^2<\infty$, it follows from the law of large numbers for
triangular arrays that $\frac{1}{n}\sum_{i=1}^n w_{i}v_{i}\ge \eta$ with
probability approaching one once $\eta$ is small enough. Similarly,
$\frac{1}{n}\sum_{i=1}^{n} v_{i}^{2}\le 1/\eta$ for large enough $\eta$ by the
law of large numbers for triangular arrays.

For the last inequality in the definition of $\mathcal{E}_n(\eta)$, first
consider the case $p>1$ so that $q<\infty$. We then have
$E\norm{\frac{1}{\sqrt{n}}\sum_{i=1}^n z_{i}v_{i}}_{q}^{q} = E
\sum_{j=1}^k\abs{\sum_{i=1}^n v_{i}z_{ij}/\sqrt{n}}^{q} \le k\cdot K$ by
\citet{von_bahr_convergence_1965}, where $K$ is a constant that depends only on
an upper bound for $\max_{j} E[\abs{v_{i}z_{ij}}^{\max\{q,2\}}]$. Applying
Markov's inequality gives the required bound. When $p=1$, then $q=\infty$ so
that
\begin{equation*}
  P\left(\norm*{\frac{1}{\sqrt{n}}\sum_{i=1}^n z_{i}v_{i}}_{q}\ge
    \eta^{-1}\sqrt{\log k} \right) \le \sum_{j=1}^k P\left(\abs*{
      \frac{1}{\sqrt{n}}\sum_{i=1}^n v_{i} z_{ij}} > \eta^{-1}\sqrt{\log k}
  \right),
\end{equation*}
which is bounded by
$2k\exp\left(-K\cdot \eta^{-2}\log k \right)=2k^{1-K\eta^{-2}}$ for some
constant $K$ by Hoeffding's inequality for sub-Gaussian random variables
\citep[][Theorem 2.6.3]{vershynin_high-dimensional_2018}. This can be made
arbitrarily small uniformly in $k$ by making $\eta$ small, as required.

\subsection{Proof of Theorem~\ref{thm:l2_lower_bound}}

By \Cref{optimality_corollary}\ref{item:flci}, it suffices to show the bound for
$R^{*}_{\textnormal{FLCI}}(X, C)$. We first note that any estimator $a'Y$ that
does not have infinite worst-case bias must satisfy $a'w=1$, which implies
$1\le \norm{a}_{2}\cdot \norm{w}_{2}$ by the Cauchy-Schwarz inequality, so that
the variance $\sigma^2a'a$ is bounded by
$\sigma^2/\norm{w}_{2}^{2} \le \sigma^2\eta^{-1}/n$. It therefore suffices to
show that the worst-case bias is bounded by a constant times $C \sqrt{k/n}$
(for (i)), or a constant times $C$ (for (ii)).

For part (i), let $\tilde{\gamma}=-C\eta \sqrt{k/n}Z'(ZZ')^{-1}w$. Observe
\begin{equation*}
  \Pen(\gamma)=C\eta \sqrt{k/n}\sqrt{w'(ZZ')^{-1}w}\leq C\eta
  \norm{w/\sqrt{n}}_{2}\max\eig((ZZ'/k)^{-1})^{1/2} \leq C.
\end{equation*}
Let $\tilde{\beta}=C\eta \sqrt{k/n}$. Then $w\tilde{\beta}+Z\tilde{\gamma}=0$.
Thus, $\tilde\beta, \tilde\gamma$ is observationally equivalent to the parameter
vector $\beta=0$, $\gamma=0$, which implies that the length of any \ac{CI} must
be at least $C\eta \sqrt{k/n}$.

Part (ii), follows by an analogous argument, with
$\tilde{\gamma}=-C \eta^{1/2} Z'(ZZ')^{-1}w$ and $\tilde{\beta}=C\eta^{1/2}$.

\subsection{Proof of Theorem~\ref{thm:l1_lower_bound}}

Since the lower bound $c\cdot n^{-1/2}$ follows from standard efficiency bounds
with finite dimensional parameters (e.g.\ taking the submodel where
$\delta=\gamma=0$), we show the lower bound
$E_{\vartheta^*}\hat\chi \ge C_n \cdot c \cdot \sqrt{\log k}/\sqrt{n}$. To show
this, we follow essentially the same arguments as \citet[][Theorem
3]{cai_lasso_2017} and \citet[][Proposition 4.2]{javanmard_debiasing_2018},
noting that the required bounds on $\norm{\delta}$ and $\norm{\gamma}$ hold for the
distributions used in the lower bound. Under a given parameter vector
$\vartheta=(\beta, \gamma', \delta', \sigma^2,\sigma^2_{v})$, the data
$(Y_i, w_i, z_i)'$ are i.i.d.\ normal with mean zero and variance matrix
\begin{equation*}
  \Sigma_\vartheta=
  \begin{pmatrix}
    \sigma^2 + \beta^2(\sigma^2_{v}+\norm{\delta}_2^2) + 2\beta\delta'\gamma
    +\norm{\gamma}_{2}^{2} & \beta(\sigma^2_{v} + \norm{\delta}_{2}^{2}) +
    \gamma'\delta
    & \beta\delta' + \gamma' \\
    \beta(\sigma^2_{v} + \norm{\delta}_{2}^{2}) + \gamma'\delta
    & \sigma^2_{v} + \norm{\delta}_2^2 & \delta' \\
    \beta\delta + \gamma & \delta & I_{k}
  \end{pmatrix}.
\end{equation*}
Let $f_\pi$ denote the distribution of the data $\{Y_i, w_i, z_i\}_{i=1}^n$ when
the parameters follow a prior distribution $\pi$, and let
$\chi^2(f_{\pi_0}, f_{\pi_1})$ denote the chi-square distance between these
distributions for prior distributions $\pi_0$ and $\pi_1$. By Lemma 1 in
\citet{cai_lasso_2017}, it suffices to find a prior distribution $\pi_1$ over
the parameter space $\Theta(C_n, C_n\cdot K\sqrt{n/\log k}, \eta_n)$ such that
$\pi_1$ places probability one on $\beta=\beta_{1,n}$ for some sequence with
$\abs{\beta_{1,n}}$ bounded from below by a constant times
$C_n \sqrt{\log k}/\sqrt{n}$ and such that $\chi^2(f_{\pi_0}, f_{\pi_1})\to 0$,
where $\pi_0$ is the distribution that places probability one on $\vartheta^*$
given in the statement of the theorem.

To this end, we first note that we can assume
$\sigma^2_0=\sigma^2_{v, 0}=1$ without loss of generality, since
dividing $Y_i$ and $w_i$ by $\sigma_0$ and $\sigma_{v, 0}$ leads to the
same model with parameters multiplied by constants that depend only on
$\sigma_0$ and $\sigma_{v, 0}$.

Let $\pi_1$ be defined by a uniform prior for $\delta$ over the set with
$\norm{\delta}_{0}=s$ and each element $\delta_j\in\{0,\nu\}$, where $s$ and
$\nu$ will be determined below. We then set the remaining parameters as
deterministic functions of $\delta$:
$\beta=-\norm{\delta}_{2}^{2}/(1-\norm{\delta}_{2}^{2})$,
$\gamma=(1-\beta)\delta$, $\sigma^2_{v}=1-\norm{\delta}_{2}^{2}$ and
$\sigma^2=(1-2\norm{\delta}_{2}^{2})/(1-\norm{\delta}_{2}^{2})$. We note that
$\norm{\delta}_{2}$ is constant under this prior, so that $\beta$ is a unit
point mass as required. This leads to the variance matrix
\begin{equation*}
  \Sigma_\vartheta=
 \begin{pmatrix}
   1 & 0 & \delta' \\
   0 & 1 & \delta' \\
   \delta & \delta & I_{k}
 \end{pmatrix}
\end{equation*}
for $\vartheta$ in the support of $\pi_1$, and $\Sigma_{\vartheta^*}=I_{k+2}$
under the point mass $\pi_0$. It now follows from eqs.~(118) and (119) in
\citet{javanmard_debiasing_2018} (which are applications of Lemmas 2 and 3 in
\citet{cai_lasso_2017}) that
\begin{equation*}
  \chi^2(f_{\pi_0}, f_{\pi_1})\le e^{\frac{s^2}{k-s}}\left(1 + \frac{s}{k} (e^{4n\nu^2}-1) \right)^s - 1.
\end{equation*}
We set $\nu=(\sqrt{c_\nu}/2)\cdot \sqrt{\log k}/\sqrt{n}$ for some ${c_\nu}>0$
so that $e^{4n\nu^2}=k^{c_\nu}$. We then set $s$ to be the greatest integer less
than $C_n/\nu=(2 C_n/\sqrt{{c_\nu}})\cdot (\sqrt{n}/\sqrt{\log k})$. The
condition that $C_n\le \sqrt{k/n}\cdot k^{-\tilde\eta}$ for some $\tilde\eta>0$
then guarantees that $s\le k^{\psi}$ for some $\psi<1/2$, so that the above
display is bounded by
\begin{equation*}
  e^{k^{2\psi-1}(1-k^{\psi-1})^{-1}}\left(1 + \frac{1}{s}k^{2\psi-1} (k^{c_\nu}-1) \right)^s - 1.
\end{equation*}
This converges to zero as required if $c_\nu$ is chosen small enough so that
$2\psi+c_\nu<1$.

Finally, we note that, under $\pi_1$,
$\norm{\delta}_2^2=(1+o(1))s\nu^2=(1+o(1))C_n\nu=(1+o(1))\cdot C_n(\sqrt{{c_\nu}}/2)\cdot
\sqrt{\log k}/\sqrt{n}$ and
$\abs{\beta}=\norm{\delta}_2^{2} (1+o(1))=(1+o(1))C_n(\sqrt{{c_\nu}}/2)\cdot
\sqrt{\log k}/\sqrt{n}$. Thus, we obtain a lower bound of
$C_n \cdot c \cdot \sqrt{\log k}/\sqrt{n}$ as required.

\section{Additional results}\label{sec:additional-results}

This \namecref{sec:additional-results} presents additional results that are
useful for implementing \Cref{algorithm:baseline}, and for assessing the
plausibility of the assumption $\Pen(\gamma)\le C$.
\Cref{sec:global_estimation_algebra} derives the properties of a regularized
estimator of the regression function $w_{i}\beta+z_{i}'\gamma$.
\Cref{sec:se_consistency} gives conditions under which this estimator can be
used to construct initial estimates of residuals in \Cref{algorithm:baseline}.
\Cref{sec:C_lower_CI} presents a lower \ac{CI} for $C$ that can be used to
assess the plausibility of the assumption $\Pen(\gamma)\le C$.

In this \namecref{sec:additional-results}, we focus primarily on the $\ell_{p}$
penalty $\Pen(\gamma)=\norm{\gamma_{2}}_{p}$, with $k_{2}\to \infty$ and
$k_{1}/n\to 0$. To state the results concisely, we use the notation
$\theta=(\beta, \gamma')'$ and let $\Theta=\mathbb{R}\times \Gamma$ denote its
parameter space. Let $X=(X_{1}, X_{2})$, where $X_{1}=(w, Z_{1})$, and
$X_{2}=Z_{2}$. We partition $\theta$ accordingly, with
$\theta_{1}=(\beta, \gamma_{1}')'$, and $\theta_{2}=\gamma_{2}$. Let
$H_{X_{1}}=X_{1}X_{1}^{+}$ and $M_{X_{1}}=I-H_{X_{1}}$ denote projections
onto the column space of $X_{1}$ and its orthogonal complement, where
$X_{1}^{+}$ denotes the pseudo-inverse (so that $X_{1}^{+}=(X_1'X_1)^{-1}X_1'$
if $X_{1}$ is full rank).

We allow the distribution $Q$ of $\varepsilon$ to be unknown and possibly
non-Gaussian, and only maintain the assumption that $\varepsilon_i$ is
independent across $i$. The class of possible distributions for $Q$ is denoted
by $\mathcal{Q}_n$. We use $P_{\theta, Q}$ and $E_{\theta, Q}$ to denote
probability and expectation when $Y$ is drawn according to $Q\in \mathcal{Q}_n$
and $\theta\in\Theta$, and we use the notation $P_Q$ and $E_Q$ for expressions
that depend on $Q$ only and not on $\theta$.

We use the following assumption repeatedly throughout this appendix.

\begin{assumption}\label{global_estimation_assump}
  There exists $\eta>0$ such that, for all $i$ and $n$ and all
  $Q\in\mathcal{Q}_n$,
  \begin{equation*}
    P_{Q}(\abs{\varepsilon_i}>t)\le 2\exp(-\eta t) \quad\text{when}\quad
    p=1   \\
    E_{Q}[\abs{\varepsilon_i}^{\max\{2+\eta, q\}}] < 1/\eta \quad\text{when}\quad
    p>1
  \end{equation*}
  and $1/\eta < E_Q\varepsilon_i^2$. In addition, the elements of $M_{X_{1}}X_2$
  are bounded by some constant $K_X$ uniformly over $n$.
\end{assumption}

\subsection{Estimating the regression function globally}\label{sec:global_estimation_algebra}

Consider the regularized regression estimator of $\theta$, given by
\begin{equation}\label{eq:YX_regularized_regression}
  \hat\theta = \argmin_{\vartheta} \norm{Y-X\vartheta}_2^2/n + \lambda \norm{\vartheta_2}_{p}.
\end{equation}
In order to derive the rate of convergence $\hat{\theta}$ in \Cref{l2_rate_thm}
below, we first give an elementary property of this estimator, following
standard arguments (see \citet[][Section 6.2]{buhlmann_statistics_2011} and
\citet[][Chapter 10.1]{van_de_geer_empirical_2000}).

\begin{lemma}\label{theorem:elementary_property_lp}
  If $\norm{2X_2'M_{X_{1}}\varepsilon}_q/n\le \lambda_0$, then
  $\norm{M_{X_{1}}X_2(\hat\theta_2-\theta_2)}_2^2/n + (\lambda-\lambda_0)
  \norm{\hat\theta_2}_{p} \le (\lambda+\lambda_0) \norm{\theta_2}_{p}$.
\end{lemma}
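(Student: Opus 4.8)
The plan is to establish the standard ``basic inequality'' for penalized least squares, adapted to the fact that only the block $\theta_2$ carries a penalty. First I would profile out the unpenalized coefficients using the Frisch--Waugh--Lovell decomposition: since the objective in \cref{eq:YX_regularized_regression} places no penalty on $\vartheta_1$, minimizing over $\vartheta_1$ for fixed $\vartheta_2$ replaces $\norm{Y-X_1\vartheta_1-X_2\vartheta_2}_2^2$ by $\norm{M_{X_1}(Y-X_2\vartheta_2)}_2^2$. Hence $\hat\theta_2$ is the minimizer of $\norm{M_{X_1}(Y-X_2\vartheta_2)}_2^2/n+\lambda\norm{\vartheta_2}_p$, an $\ell_p$-penalized regression with effective design $M_{X_1}X_2$ and effective response $M_{X_1}Y$. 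This reduction is the crux of the argument: it is precisely what makes the partialled-out quantities $M_{X_1}X_2$ and $X_2'M_{X_1}\varepsilon$, rather than the full design, appear in the conclusion, and it relies on invertibility of $X_1'X_1$ so that $M_{X_1}$ is well-defined.

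Next I would write the optimality comparison. Evaluating the reduced objective at $\hat\theta_2$ and at the true $\theta_2$ gives
\[
  \norm{M_{X_1}(Y-X_2\hat\theta_2)}_2^2/n+\lambda\norm{\hat\theta_2}_p
  \le \norm{M_{X_1}(Y-X_2\theta_2)}_2^2/n+\lambda\norm{\theta_2}_p.
\]
Substituting the model $Y=X_1\theta_1+X_2\theta_2+\varepsilon$ and using $M_{X_1}X_1=0$, one has $M_{X_1}(Y-X_2\vartheta_2)=M_{X_1}X_2(\theta_2-\vartheta_2)+M_{X_1}\varepsilon$. I would then expand both squared norms, cancel the common noise term $\norm{M_{X_1}\varepsilon}_2^2/n$, and collect the cross term using symmetry and idempotency of $M_{X_1}$, obtaining
\[
  \norm{M_{X_1}X_2(\hat\theta_2-\theta_2)}_2^2/n+\lambda\norm{\hat\theta_2}_p
  \le \lambda\norm{\theta_2}_p + 2(\hat\theta_2-\theta_2)'X_2'M_{X_1}\varepsilon/n.
\]

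Finally I would control the cross term. By Hölder's inequality with conjugate exponents $p,q$ (recall $1/p+1/q=1$), it is bounded by $\norm{\hat\theta_2-\theta_2}_p\cdot\norm{2X_2'M_{X_1}\varepsilon}_q/n$, which is at most $\lambda_0\norm{\hat\theta_2-\theta_2}_p$ under the hypothesis $\norm{2X_2'M_{X_1}\varepsilon}_q/n\le\lambda_0$. Applying the triangle inequality $\norm{\hat\theta_2-\theta_2}_p\le\norm{\hat\theta_2}_p+\norm{\theta_2}_p$ and rearranging yields exactly $\norm{M_{X_1}X_2(\hat\theta_2-\theta_2)}_2^2/n+(\lambda-\lambda_0)\norm{\hat\theta_2}_p\le(\lambda+\lambda_0)\norm{\theta_2}_p$. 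The only genuine subtlety is the profiling-out step that isolates the penalized block; once the problem is recast with the partialled-out design and noise, the remaining expansion, cancellation, Hölder bound, and triangle inequality are routine, so I do not anticipate any serious obstacle beyond keeping the $M_{X_1}$ bookkeeping correct.
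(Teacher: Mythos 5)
Your proposal is correct and follows essentially the same route as the paper's proof: profiling out the unpenalized block $\vartheta_1$ so that $\hat\theta_2$ minimizes the reduced objective with design $M_{X_1}X_2$, comparing objective values at $\hat\theta_2$ and $\theta_2$ to obtain the basic inequality, and then applying H\"older's inequality followed by the triangle inequality $\norm{\hat\theta_2-\theta_2}_p\le\norm{\hat\theta_2}_p+\norm{\theta_2}_p$. The only cosmetic difference is that the paper carries out the profiling via the explicit Pythagorean split $\norm{H_{X_1}(\cdot)}_2^2+\norm{M_{X_1}(\cdot)}_2^2$ and notes the first piece can be zeroed out, which is the same partialling-out step you perform.
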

\begin{proof}
  Write the objective function as
  \begin{equation*}
    \norm{H_{X_{1}}(Y-X_2\vartheta_2)-X_1\vartheta_1}_2^2/n
    + \norm{M_{X_{1}}Y-M_{X_{1}}X_2\vartheta_2}_2^2/n
    + \lambda \norm{\vartheta_2}_p.
  \end{equation*}
  The first summand can be set to zero for any $\vartheta_2$ by taking
  $\vartheta_1=X_{1}^{+}(Y-X_2\vartheta_2)$. Therefore,
  \begin{equation*}
    \hat\theta_2=\argmin_{\vartheta}
    \norm{M_{X_{1}}Y-M_{X_{1}}X_2\vartheta_2}_{2}^{2} / n
    + \lambda \norm{\vartheta_2}_{p},
  \end{equation*}
  with $\hat\theta_1=X_{1}^{+}(Y-X_2\hat\theta_{2})$. This implies
  $H_{X_{1}}\varepsilon=H_{X_{1}}Y-H_{X_{1}}X'\theta=H_{X_{1}}X'(\hat{\theta}-\theta)$,
  so that
  \begin{equation}\label{eq:theta2_error_orthogonal_decomposition}
    \norm{X(\hat\theta-\theta)}_{2}^{2}/n
    =\norm{H_{X_{1}}\varepsilon}_2^2/n
    +\norm{M_{X_{1}}X_2(\hat\theta_2-\theta_2)}_2^2/n,
  \end{equation}
  Using the fact that $\hat\theta_2$ attains a lower value of the objective than
  the true parameter value $\theta_2$, we obtain an $\ell_{p}$ version of what
  in the $\ell_{1}$ case \citet[Lemma 6.1]{buhlmann_statistics_2011} term ``the
  Basic Inequality'',
  \begin{equation*}
    \norm{M_{X_{1}}X_2(\hat\theta_2-\theta_2)}_2^2/n + \lambda \|\hat\theta_2\|_p
    \le 2\varepsilon'M_{X_{1}}X_2(\hat\theta_2-\theta_2)/n + \lambda \|\theta_2\|_p.
  \end{equation*}
  By Hölder's inequality,
  $2\varepsilon'M_{X_{1}}X_2(\hat\theta_2-\theta_2)\le
  \|2X_2'M_{X_{1}}\varepsilon\|_q\|\hat\theta_2-\theta_2\|_p$ so that, on the
  event $\norm{2X_2'M_{X_{1}}\varepsilon}_q/n\le \lambda_0$, we have
  \begin{equation*}
      \|M_{X_{1}}X_2(\hat\theta_2-\theta_2)\|_2^2/n + \lambda \|\hat\theta_2\|_p
      \le \lambda_0\|\hat\theta_2-\theta_2\|_p + \lambda \|\theta_2\|_p
      \le \lambda_0\|\hat\theta_2\|_p + (\lambda+\lambda_0) \|\theta_2\|_{p},
  \end{equation*}
  which implies the result.
\end{proof}

We now use \Cref{theorem:elementary_property_lp} to derive rates of convergence
for the regularized regression estimator in \cref{eq:YX_regularized_regression}
for estimating the regression function in $\ell_{2}$ loss. For simplicity, we
use a fixed sequence for the penalty parameter $\lambda$ satisfying certain rate
conditions. This yields simple sufficient conditions that allow $\hat{\theta}$
to be used for auxiliary purposes such as standard error construction. In
practice, data-driven methods such as cross-validation may be appealing. We
discuss another possible choice based on moderate deviations bounds in
\Cref{lambda_alpha_remark} in \Cref{sec:C_lower_CI} below. We leave the analysis
of $\hat{\theta}$ under such choices of $\lambda$ for future research.

\begin{theorem}\label{l2_rate_thm}
  Suppose that \Cref{global_estimation_assump} holds.
  Let $\hat\theta$ be the penalized regression
  estimator defined in \cref{eq:YX_regularized_regression} with
  $\lambda=K_n r_q(k_2,n)$, where $K_n\to\infty$ and $r_{q}(k, n)$ given in
  \cref{eq:convergence-rates}. Then
  \begin{equation*}
    \sup_{\theta\in\mathbb{R}^{k+1}}
    \sup_{Q\in\mathcal{Q}_n} P_{\theta, Q}\left(\|X(\hat\theta-\theta)\|_2^2/n >
      K_n((k_1+1)/n+2\norm{\theta_2}_{p} r_q(k_2,n)) \right)\to 0,
  \end{equation*}
\end{theorem}
\begin{proof}
  By \Cref{lambda0_bound_lemma} below, if we set
  $\lambda_{0}=\lambda=K_n r_q(k_2,n)$, the condition of
  \Cref{theorem:elementary_property_lp}, and hence the conclusion that
  $\norm{M_{X_{1}}X_2(\hat\theta-\theta)}_2^2/n\le
  2K_n\norm{\theta_2}_{p}r_{q}(k_2,n)$, holds with probability approaching one
  uniformly over $\theta\in\mathbb{R}^{k+1}$ and $Q\in\mathcal{Q}_n$. In
  addition, since $H_{X_{1}}$ is idempotent with rank at most $k_1+1$ and
  $E_Q\varepsilon\varepsilon'$ is diagonal with elements bounded uniformly over
  $Q\in\mathcal{Q}_n$, we have
  $E_Q\|H_{X_{1}}\varepsilon\|_2^2/n\le \tilde{K} (k_1+1)/n$ for some constant
  $\tilde K$. The result follows by Markov's inequality and
  \cref{eq:theta2_error_orthogonal_decomposition}.
\end{proof}

\begin{lemma}\label{lambda0_bound_lemma}
  Under \Cref{global_estimation_assump},
  $\inf_{Q\in\mathcal{Q}_n}P_Q(\norm{2X_2'M_{X_{1}}\varepsilon}_{q}/n\le K_n
  r_q(k_2,n)) \to 1$.
\end{lemma}
\begin{proof}
  Let $\tilde x_{ij}=(2M_{X_{1}}X_2)_{ij}$.  For $q<\infty$,
  we have
  \begin{equation*}
    E_{Q}\norm{2X_2'M_{X_{1}}\varepsilon}_{q}^{q}
    =E_Q \sum_{j=1}^{k_2}\left(\sum_{i=1}^n \tilde x_{ij}\varepsilon_i \right)^q
    \le k_2\cdot K\cdot n^{q/2}
  \end{equation*}
  for some constant $K$ that depends only on $\eta$, $q$ and $K_X$, by
  \citet{von_bahr_convergence_1965}. The result then follows by Markov's
  inequality. For $q=\infty$, we have
  \begin{equation*}
    P_Q\left(\norm{2X_2'M_{X_{1}}\varepsilon}_q/n> K_n\sqrt{\log k_2}/\sqrt{n} \right)
    =P_Q\left(\max_{j}\left| \sum_{i=1}^n \tilde x_{ij} \varepsilon_i \right|/n>
    K_n\sqrt{\log k_2}/\sqrt{n} \right),
  \end{equation*}
  which, for some $\tilde K>0$, is bounded by
  $2k_2\exp(-\tilde K\cdot K_n^2 \log k_2)=2k_2^{1-\tilde K\cdot K_n^2}\to 0$ by
  Hoeffding's inequality for sub-Gaussian random variables \citep[][Thm.
  2.6.3]{vershynin_high-dimensional_2018}.
\end{proof}

\subsection{Feasible \texorpdfstring{\acp{CI}}{CIs} with unknown error distribution}\label{sec:se_consistency}

This \namecref{sec:se_consistency} presents formal results for feasible \acp{CI}
when the error distribution is unknown. \Cref{sec:se_general} presents general
results for feasible \acp{CI} for linear estimators in our setting.
\Cref{sec:se_optimized_weights} specializes these results to the feasible
\acp{CI} in \Cref{sec:impl-with-non}, with some technical modifications.

\subsubsection{General results}\label{sec:se_general}

We consider standard errors for linear estimators $\hat\beta_{a}=a'Y$, deviating
slightly from the notation in the main text by making the dependence on the
weights explicit with the subscript $a$. As in the main text, the weights $a$
are nonrandom: they can depend on $X$ but not on $Y$. We consider asymptotics
where the weights $a$ are allowed to depend on $n$ so that
$a_{1}, \dotsc, a_{n}$ is a triangular array rather than a sequence, but we
leave this implicit in the notation.

Let $\hat\theta$ be an estimate of $\theta$, and let
$\hat\varepsilon=Y-X\hat\theta$. Consider the estimator
$\hat{V}_{a}=\sum_{i=1}^n a_i^2\hat\varepsilon_i^2$ of
$V_{Q}=\var_{Q}(\hat{\beta}_{a})=\sum_{i=1}^{n} a_i^2E_{Q}\varepsilon_i^2$. We
consider coverage of the feasible bias-aware \ac{CI}
\begin{equation}\label{eq:feasible_FLCI_unknown_error}
  \hat\beta_{a}\pm \cv_\alpha(\maxbias_{\Gamma}(\hat\beta_a)/\hat{V}_{a}^{1/2})\cdot
  \hat{V}_{a}^{1/2},
\end{equation}
where $\maxbias_{\Gamma}(\hat\beta_{a})$ is the worst-case bias, given in
\cref{eq:maxbias_def}. Under non-Gaussian errors, valid coverage will require
conditions on the quantity
\begin{equation*}
  \operatorname{Lind}(a)=\max_{1\le i\le n} \frac{a_i^2}{\sum_{j=1}^{n}a_j^2}
\end{equation*}
in order to invoke a Lindeberg central limit theorem. This quantity, which we
refer to as the (maximal) Lindeberg weight, turns out to also be relevant for
controlling the contribution of estimation error in $\hat\theta$ in the variance
estimate $\hat{V}_{a}$. In particular, in the following theorem, there is a
tradeoff between the rate at which $\operatorname{Lind}(a)\to 0$ and the
$\ell_2$ rate of convergence of the estimator $X\hat\theta$ of the regression
function.

\begin{theorem}\label{general_se_thm}
  Suppose that, for some $\eta>0$, $\eta\le E_Q\varepsilon_i^2$ and
  $E_Q\abs{\varepsilon_i}^{2+\eta} \le 1/\eta$ for all $i$ and all
  $Q\in\mathcal{Q}_n$.  Suppose also that,
  for some sequence $c_n$ with $c_n=\mathcal{O}(\sqrt{n})$, we have
  \begin{enumerate}[label=({\roman*})]
  \item $\max\left\{\sqrt{n}c_n,1\right\}\cdot \operatorname{Lind}(a) \to
    0$; and
  \item $\inf_{\theta\in\Theta, Q\in\mathcal{Q}_n} P_{\theta, Q}
    (\norm{X(\hat\theta-\theta)}_2 \le c_n)\to 1$.
  \end{enumerate}
  Then, for any $\delta>0$,
  $\inf_{\theta\in\Theta, Q\in\mathcal{Q}_n} P_Q\left(\abs{(\hat{V}_{a}-V_Q)/V_Q} <
    \delta \right) \to 1$. Furthermore,
  \begin{equation}\label{eq:feasible_flci_coverage}
    \liminf_n\inf_{\theta\in\Theta, Q\in\mathcal{Q}_n} P_Q\left(
      \beta\in \left\{\hat\beta_{a}\pm
        \cv_\alpha(\maxbias_{\Gamma}(\hat\beta_{a})/\sqrt{\hat{V}_{a}})\cdot
        \sqrt{\hat{V}_{a}} \right\} \right) \ge 1-\alpha.
  \end{equation}
\end{theorem}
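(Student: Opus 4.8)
The plan is to prove the two conclusions in turn: first the consistency of $\hat V$, then coverage, by combining consistency with a central limit theorem for $a'\varepsilon$. Throughout I normalize $\sum_i a_i^2=1$, which is harmless when analyzing the scale-invariant ratio $\hat V/V_Q$ and the studentized statistics below, and I write $L_n=\max_i a_i^2$ for the maximal Lindeberg weight. The hypotheses drive $L_n\to 0$ (needed for the CLT): since $L_n\ge 1/n$, the condition $\sqrt n c_n L_n\to 0$ already forces $c_n/\sqrt n\to 0$, and whenever $c_n$ is bounded away from zero—as in the regularized-regression application, where $\norm{X(\hat\theta-\theta)}_2$ is of order at least $\sqrt{k_1+1}$—it gives $\sqrt n L_n\to 0$. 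Writing $d_i=x_i'(\hat\theta-\theta)$ for $x_i$ the $i$th row of $X$, we have $\hat\varepsilon_i=\varepsilon_i-d_i$, hence $\hat V=\tilde V-2\sum_i a_i^2\varepsilon_i d_i+\sum_i a_i^2 d_i^2$, where $\tilde V=\sum_i a_i^2\varepsilon_i^2$ is the infeasible estimator built from the true errors. Since $\eta\le E_Q\varepsilon_i^2$ and $E_Q\abs{\varepsilon_i}^{2+\eta}\le1/\eta$ give $V_Q\asymp\sum_i a_i^2\asymp 1$ uniformly in $Q$, it suffices to show the three pieces equal $V_Q+o_p(1)$, $o_p(1)$, and $o_p(1)$ respectively.

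For $\tilde V$ the obstacle is that $\varepsilon_i^2$ has only $1+\eta/2$ finite moments, so no second-moment (variance) argument is available. Instead I would apply the von Bahr--Esseen inequality \citep{von_bahr_convergence_1965} with exponent $r=1+\eta/2\in(1,2]$ to the independent, centered summands $a_i^2\varepsilon_i^2-a_i^2E_Q\varepsilon_i^2$, giving $E_Q\abs{\tilde V-V_Q}^{r}\le 2\sum_i E_Q\abs{a_i^2\varepsilon_i^2-a_i^2E_Q\varepsilon_i^2}^{r}\lesssim\sum_i a_i^{2r}=\sum_i a_i^{2+\eta}\le L_n^{\eta/2}$, where I used $E_Q\abs{\varepsilon_i}^{2r}=E_Q\abs{\varepsilon_i}^{2+\eta}\le1/\eta$ and $\sum_i a_i^{2+\eta}\le(\max_i a_i^2)^{\eta/2}$. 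As $L_n\to0$ and all constants depend only on $\eta$, Markov's inequality yields $\tilde V-V_Q\to 0$ in probability uniformly over $\theta,Q$. For the remaining pieces I would condition on $A_n=\{\norm{X(\hat\theta-\theta)}_2\le c_n\}$, which has probability tending to one uniformly. On $A_n$, $\sum_i a_i^2 d_i^2\le L_n\sum_i d_i^2=L_n\norm{X(\hat\theta-\theta)}_2^2\le L_n c_n^2=(\sqrt n c_n L_n)(c_n/\sqrt n)\to0$, using $\sqrt n c_n L_n\to0$ and $c_n/\sqrt n$ bounded; and the cross term is handled \emph{without} any independence between $\hat\theta$ and $\varepsilon$ via Cauchy--Schwarz, $\abs{\sum_i a_i^2\varepsilon_i d_i}\le\tilde V^{1/2}(\sum_i a_i^2 d_i^2)^{1/2}\to0$ in probability since $\tilde V=O_p(1)$. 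Combining these gives $\hat V/V_Q\to1$ in probability uniformly, the first conclusion.

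For coverage, decompose $\hat\beta-\beta=b_\theta+a'\varepsilon$ with exact bias $b_\theta=a'X\theta-\beta$, so $\abs{b_\theta}\le\maxbias_\Gamma(\hat\beta)=:\bar b$ for every $\theta\in\Theta$. The Lyapunov ratio with exponent $2+\eta$ satisfies $\sum_i\abs{a_i}^{2+\eta}E_Q\abs{\varepsilon_i}^{2+\eta}/V_Q^{1+\eta/2}\lesssim L_n^{\eta/2}\to0$, so $t_n:=a'\varepsilon/\sqrt{V_Q}\Rightarrow\ND(0,1)$. Since the inf of the coverage probability over $(\theta,Q)$ is attained along a sequence, it suffices to bound the $\liminf$ along an arbitrary sequence $(\theta_n,Q_n)$, and I would argue by subsequences on $\bar B_n:=\bar b/\sqrt{V_Q}$ and $m_n:=b_\theta/\sqrt{V_Q}$ (with $\abs{m_n}\le\bar B_n$), using $\hat r_n:=\hat V/V_Q\to1$ in probability. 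When $\bar B_n\to B^*<\infty$, pass to a subsequence with $m_n\to m^*$, $\abs{m^*}\le B^*$; then $m_n+t_n\Rightarrow\ND(m^*,1)$, and by continuity of $\cv_\alpha$ with $\hat r_n\to1$ the feasible half-length $\cv_\alpha(\bar B_n/\sqrt{\hat r_n})\sqrt{\hat r_n}\to\cv_\alpha(B^*)$ in probability, so Slutsky gives coverage tending to $P(\abs{\ND(m^*,1)}\le\cv_\alpha(B^*))\ge1-\alpha$—the last inequality being the defining property of $\cv_\alpha$, namely that $m\mapsto P(\abs{\ND(m,1)}\le\cv_\alpha(B))$ is maximized at $m=0$ and equals $1-\alpha$ at $\abs{m}=B$.

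The delicate regime, which I expect to be the main obstacle, is $\bar B_n\to\infty$: there the half-length diverges, so a naive Slutsky argument would multiply the $o_p(1)$ error in $\hat r_n$ by an unbounded critical value. The resolution is the large-$B$ expansion $\cv_\alpha(B)=B+z_{1-\alpha}+o(1)$, which makes the leading term cancel: $\cv_\alpha(\bar B_n/\sqrt{\hat r_n})\sqrt{\hat r_n}=\bar B_n+z_{1-\alpha}\sqrt{\hat r_n}+o_p(1)=\bar B_n+z_{1-\alpha}+o_p(1)$, so only the vanishing fluctuation $z_{1-\alpha}(\sqrt{\hat r_n}-1)$ survives. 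Writing the coverage event as $\{-\ell_n\le t_n\le u_n\}$ with $u_n=\cv_\alpha(\bar B_n/\sqrt{\hat r_n})\sqrt{\hat r_n}-m_n$ and $\ell_n=\cv_\alpha(\bar B_n/\sqrt{\hat r_n})\sqrt{\hat r_n}+m_n$, one has $u_n+\ell_n\to\infty$ while $\min(u_n,\ell_n)\ge z_{1-\alpha}+o_p(1)$ because $\abs{m_n}\le\bar B_n$; passing to a further subsequence along which (say) $\ell_n\to\infty$, the lower constraint becomes slack and coverage is bounded below by $P(t_n\le u_n)-P(t_n<-\ell_n)\to\Phi(z_{1-\alpha})=1-\alpha$. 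Hence the $\liminf$ is at least $1-\alpha$ in every regime, which is \eqref{eq:feasible_flci_coverage}. Apart from this interaction between a diverging bias bound and the estimated variance, the only other genuinely technical step is the von Bahr--Esseen bound, needed precisely because only $2+\eta$ error moments are assumed.
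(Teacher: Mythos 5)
Your proposal is correct, and it divides naturally against the paper's proof. For the variance-consistency half you follow essentially the paper's own route: the paper splits $(\hat V-V_Q)/V_Q$ into the infeasible term $\sum_i \tilde b_i(\varepsilon_i^2-E_Q\varepsilon_i^2)$ (with $\tilde b_i=a_i^2/\sum_j a_j^2$), controlled by a von Bahr--Esseen moment bound plus Markov exactly as you do, and a plug-in term bounded on the event $\norm{X(\hat\theta-\theta)}_2\le c_n$ via $\sum_i\abs{\hat\varepsilon_i^2-\varepsilon_i^2}\le(\norm{\hat\varepsilon-\varepsilon}_2+2\norm{\varepsilon}_2)\norm{\hat\varepsilon-\varepsilon}_2$ after pulling out $\max_i\tilde b_i$; your expansion into a cross term (weighted Cauchy--Schwarz against $\tilde V^{1/2}$, correctly requiring no independence between $\hat\theta$ and $\varepsilon$) and a quadratic term is equivalent bookkeeping, and marginally sharper since those two pieces need only $L_nc_n^2\to0$ rather than the full $(\sqrt{n}+c_n)c_nL_n\to0$ the paper's factoring uses. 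Two places you are more careful than the paper: the paper invokes von Bahr--Esseen at exponent $1+\eta$, which the assumed moment $E_Q\abs{\varepsilon_i}^{2+\eta}\le 1/\eta$ does not directly license, whereas your $r=1+\eta/2$ is the exponent actually available (the paper's slip is harmless up to relabeling $\eta$); and both arguments tacitly need $L_n\to0$, which the stated hypotheses deliver only when $c_n$ is bounded away from zero --- you flag this explicitly, the paper leaves it implicit. The genuine divergence is the coverage half: the paper disposes of it in one sentence by citing Theorem F.1 and Lemma F.1 of \citet{ArKo18optimal} (a Lindeberg CLT plus a uniform-coverage result for bias-aware fixed-length intervals with consistently estimated variance), whereas you re-derive that black box from scratch --- Lyapunov CLT for $t_n$, subsequencing on $\bar B_n$ and $m_n$, Slutsky in the bounded-bias-ratio regime, and, the step the citation really hides, the expansion $\cv_\alpha(B)=B+z_{1-\alpha}+o(1)$ to cancel the diverging half-length when $\bar B_n\to\infty$, where naive Slutsky would multiply an $o_p(1)$ variance error by an unbounded critical value. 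Your version buys self-containedness and makes visible exactly where each hypothesis ($L_n\to0$ for the CLT, $\hat V/V_Q\to 1$ for studentization, $\abs{m_n}\le\bar B_n$ for the slack constraint) enters; the paper's citation buys brevity, since the uniformity bookkeeping is already done in the cited appendix.
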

\begin{proof}
  We have
  \begin{equation}\label{eq:Vhat_error_decomposition}
    \frac{\hat{V}_{a} - V_Q}{V_Q} = \frac{\sum_{i=1}^n a_i^2(\hat\varepsilon_i^2-\varepsilon_i^2)}{V_Q}
   + \frac{\sum_{i=1}^n a_i^2(\varepsilon_i^2-E_Q\varepsilon_i^2)}{V_Q}.
  \end{equation}
  Let $\tilde{b}_i=a_i^2/\sum_{j=1}^{n} a_j^2$ so that
  $\max_{1\le i\le n}\tilde{b}_i=\operatorname{Lind}(a)$. The second term in
  \cref{eq:Vhat_error_decomposition} is bounded by
  $|\sum_{i=1}^n \tilde b_i (\varepsilon_i^2-E_Q\varepsilon_i^2)|/\eta$. The
  absolute $1+\eta$ moment of this quantity is bounded by a constant times
  $\sum_{i=1}^n \tilde b_i^{1+\eta}\cdot 1/\eta^{1+\eta}$ by
  \citet{von_bahr_inequalities_1965}. This is bounded by
  $\max_{1\le i\le n} \tilde b_i^\eta\cdot \sum_{i=1}^n\tilde b_i/\eta^{1+\eta}
  = \max_{1\le i\le n} \tilde b_i^\eta/\eta^{1+\eta}
  =\operatorname{Lind}(a)/\eta^{1+\eta}\to 0$. The first term in
  \cref{eq:Vhat_error_decomposition} is bounded by
  $\operatorname{Lind}(a)/\eta$ times
  \begin{equation*}
    \sum_{i=1}^n \abs{\hat{\varepsilon}_{i}^{2}-\varepsilon_i^2}
    = \sum_{i=1}^n\abs{\hat{\varepsilon}_i+ \varepsilon_{i}}\cdot
    \abs{\hat\varepsilon_i-\varepsilon_i}
    \le \norm{\hat\varepsilon+\varepsilon}_2\norm{\hat\varepsilon-\varepsilon}_2
    \le \left(\|\hat\varepsilon-\varepsilon\|_2 + 2\|\varepsilon\|_2 \right)
    \norm{\hat\varepsilon-\varepsilon}_2.
  \end{equation*}
  For some constant $K$ that depends only on $\eta$, we have
  $2\|\varepsilon\|_2\le K\sqrt{n}$ with probability approaching one uniformly
  over $Q\in\mathcal{Q}_n$. Since
  $\|\hat\varepsilon-\varepsilon\|_2=\|X(\hat\theta-\theta)\|_2\le c_n$ it
  follows that, with probability approaching one uniformly over
  $\theta\in\Theta$ and $Q\in\mathcal{Q}_n$, the first term in
  \cref{eq:Vhat_error_decomposition} is bounded by
  $\operatorname{Lind}(a)\cdot (K\sqrt{n}+c_n)\cdot c_n\to 0$. It follows that
  for any $\delta>0$,
  $\inf_{\theta\in\Theta, Q\in\mathcal{Q}_n} P_Q\left( \left| (\hat{V}_{a}-V_Q)/V_Q
    \right| < \delta \right) \to 1$. Coverage of the \ac{CI} then follows from
  Theorem F.1 in \citet{ArKo18optimal}, with the central limit theorem condition
  following by using the weights and moment bounds to verify the Lindeberg
  condition (see Lemma F.1 in \citet{ArKo18optimal}).
\end{proof}

For the setting in \Cref{l2_rate_thm}, condition (ii) in \Cref{general_se_thm}
will hold with $c_n=\sqrt{K_n n((k_1+1)/n+C_n r_q(k_2,n))}$ for a slowly increasing
constant $K_n$. Condition (i) in \Cref{general_se_thm} will then hold so long as
$\sqrt{K_n((k_1+1)/n+C_n r_q(k_2,n))}\cdot n \operatorname{Lind}(a)\to 0$. This
gives the following result.

\begin{corollary}\label{general_se_corollary}
  Suppose that \Cref{global_estimation_assump} holds. Let $\hat\varepsilon$ be
  the residuals from the regularized regression in
  \cref{eq:YX_regularized_regression}, with $\lambda$ given in
  \Cref{l2_rate_thm} for some $K_n\to \infty$. Then, if
  $\sqrt{K_n((k_1+1)/n+C_n r_q(k_2,n))}\cdot n \operatorname{Lind}(a)\to 0$, the
  coverage result in \cref{eq:feasible_flci_coverage} holds with
  $\Theta=\mathbb{R}^{k_1+1}\times \{\gamma_2\colon \norm{\gamma_2}_p\le C_n\}$.
\end{corollary}

\subsubsection{Optimized weights}\label{sec:se_optimized_weights}

We now apply the results in \Cref{sec:se_general} to the feasible CIs based on
optimized weights in \Cref{algorithm:baseline}. We make two modifications
relative to the baseline algorithm. First, we impose a bound on the Lindeberg
weight $\operatorname{Lind}(a)$, as described in \Cref{lindeberg_remark}.
Second, we compute the weights using some nonrandom initial guess
$\tilde\sigma^2$ in Step~\ref{initial_variance_estimator_step} of the
algorithm.\footnote{Alternatively, one could use sample-splitting or
  cross-fitting. In our Monte Carlos, we find that the feasible \acp{CI} have
  good coverage without imposing these technical modifications.}

In the homoskedastic model with error variance $\tilde\sigma^2$, a
\ac{FLCI} centered at the linear estimator $\hat\beta_a=a'Y$ has length
\begin{equation*}
  2 \tilde\sigma \|a\|_2 \cdot \cv_\alpha(\maxbias_\Gamma(\hat\beta_a)/(\tilde\sigma
  \|a\|_2))
\end{equation*}
where $\maxbias_\Gamma(\hat\beta_a)$ is the worst-case bias of the
linear estimator $\hat\beta_a=a'Y$.  Let the weights $a^*_b$ minimize the
above display subject to the constraint $\operatorname{Lind}(a)\le b$.

It follows immediately from \Cref{general_se_corollary} that a feasible \ac{CI}
centered at $\hat{\beta}_{a^{*}_{b}}$ will have asymptotic coverage so long as
the constraint on $b$ is chosen appropriately.

\begin{theorem}\label{thm:feasible_ci_optimized_weights}
  Suppose that \Cref{global_estimation_assump} holds. Let $\hat\varepsilon$ be
  the residuals from the regularized regression in
  \cref{eq:YX_regularized_regression}, with $\lambda$ given in
  \Cref{l2_rate_thm} for some $K_n\to \infty$. Let
  $\Gamma=\mathbb{R}^{k_1}\times \{\gamma_2\colon \norm{\gamma_2}_{p}\le C_n\}$
  so that
  $\Theta=\mathbb{R}\times \mathbb{R}^{k_1}\times \{\gamma_2\colon
  \norm{\gamma_2}_{p}\le C_n\}$. Consider a sequence $b_{n}$ such that
  $\sqrt{K_n((k_1+1)/n+C_n r_q(k_2,n))}\cdot n \cdot b_n\to 0$. Then the CI in
  \cref{eq:feasible_FLCI_unknown_error} with $a=a^{*}_{b_{n}}$ satisfies
  \begin{equation*}
    \liminf_n\inf_{\theta\in\Theta, Q\in\mathcal{Q}_n} P_Q\left(
    \beta\in
    \{\hat\beta_{a}\pm \cv_\alpha(\maxbias_{\Gamma}(\hat\beta_a)/\hat{V}_{a}^{1/2})\cdot
    \hat{V}_{a}^{1/2}\}   \right) \ge 1-\alpha.
  \end{equation*}
\end{theorem}

Imposing a condition on the Lindeberg weight can in general affect the
performance of the \ac{CI}.  The following theorem shows that the optimal rate
of convergence derived in \Cref{thm:upper_rate_bound} will still be
obtained if the constraint $b_n$ on the Lindeberg weight is chosen appropriately.

\begin{theorem}\label{thm:feasible_ci_upper_bound}
  Suppose the conditions of \Cref{thm:feasible_ci_optimized_weights} hold, with
  $k_1=0$ so that $\Gamma=\{\gamma:\|\gamma\|_p\le C_n\}$. Suppose also that,
  for some $\eta>0$, the design matrix $X$ is in the set $\mathcal{E}_n(\eta)$
  defined in \Cref{sec:asymptotic_upper_bounds} for large enough $n$, and that
  for some sequence $b_{n}$, $\max_{1\le i\le n} (w_i-z_i'\delta)^2/n=o(b_n)$
  and $\frac{1}{n}\sum_{i=1}^n(w_i-z_i'\delta)^2$ is bounded away from zero
  where $\delta$ is given in the definition of $\mathcal{E}_n(\eta)$ in
  \Cref{sec:asymptotic_upper_bounds}. Then there exists a constant $K$ such that
  the CI in \cref{eq:feasible_FLCI_unknown_error} with $a=a^{*}_{b_{n}}$ satisfies
  \begin{equation*}
    \lim_{n\to\infty}\sup_{\theta\in\Theta, Q\in\mathcal{Q}_n} P_Q\left(
    2\cv_\alpha(\maxbias_{\Gamma}(\hat\beta_a)/\hat{V}_{a}^{1/2})\cdot
    \hat{V}_{a}^{1/2}
    \ge K (n^{-1/2} + C_n r_q(k, n))\right) = 0.
  \end{equation*}
\end{theorem}
\begin{proof}
  Let $v_i=w_i-z_i'\delta$ and $\tilde a_i=v_i/\sum_{j=1}^n v_{i}w_j$ where
  $\delta$ is given in the definition of $\mathcal{E}_n(\eta)$. Note that
  $\operatorname{Lind}(\tilde a)=\max_{1\le i\le n}
  (w_i-z_i'\delta)^2/\sum_{j=1}^n (w_j-z_j'\delta)^2$. Under the assumptions of
  the theorem, this is bounded by a constant times
  $\max_{1\le i\le n}(w_i-z_i'\delta)^2/n=o(b_n)$. Thus, the weights $\tilde a$
  are feasible for the constrained optimization problem that defines
  $a^*_{b_n}$. It follows that
  \begin{equation*}
    2 \tilde\sigma \|a^*_{b_n}\|_2 \cdot \cv_\alpha(\maxbias_\Gamma(\hat\beta_{a^*_{b_n}})/(\tilde\sigma
    \|a^*_{b_n}\|_2))
    \le     2 \tilde\sigma \|\tilde a\|_2 \cdot \cv_\alpha(\maxbias_\Gamma(\hat\beta_{\tilde a})/(\tilde\sigma
    \|\tilde a\|_2)).
  \end{equation*}
  It follows from the proof of \Cref{thm:upper_rate_bound} that the right-hand side of the above display is bounded by a constant times
  $n^{-1/2}+C_n r_q(k, n)$.  Furthermore, by the uniform consistency of $\hat V$
  (which follows from \Cref{general_se_thm}) and the fact that the
  variance of $\hat\beta_{a^*_{b_n}}$ is bounded from above uniformly over
  $\mathcal{Q}_n$, the width $2\cdot
  \cv_\alpha(\maxbias_{\Gamma}(\hat\beta_a)/\sqrt{\hat V})\cdot \sqrt{\hat V}$
  is bounded by a constant times the left-hand side of the above display with
  probability approaching one uniformly over $\theta\in\Theta$ and
  $Q\in\mathcal{Q}_n$.  The result follows.
\end{proof}

While \Cref{thm:feasible_ci_optimized_weights} imposes an upper bound
$b_n=o(n^{-1}(K_n((k_1+1)/n+C_n r_q(k_2,n)))^{-1/2})$ on $b_{n}$,
\Cref{thm:feasible_ci_upper_bound} imposes a lower bound on $b_n$ (it must
decrease more slowly than $\max_{1\le i\le n}(w_i-z_i'\delta)^2/n$). To
interpret the latter condition, note that $w_i-z_i'\delta$ plays the role of the
residual in a best linear predictor regression of $w_i$ on $z_i$ in a random
design setting. Thus, the condition
$\max_{1\le i\le n}(w_i-z_i'\delta)^2/n=o(b_n)$ is a tail condition on this best
linear predictor error.

Depending on how quickly $\max_{1\le i\le n}(w_i-z_i'\delta)^2$ increases, there
will be a range of choices of $b_n$ that satisfy the conditions of both
\Cref{thm:feasible_ci_optimized_weights} and \Cref{thm:feasible_ci_upper_bound}.
For example, if $\max_{1\le i\le n}(w_i-z_i'\delta)^2$ is bounded, then the
conditions of \Cref{thm:feasible_ci_upper_bound} will hold with
$b_n=K_n/n$ for a slowly increasing sequence $K_n$. Taking the same sequence
$K_n$ in the choice of $\lambda$ in \Cref{l2_rate_thm} for simplicity,
the condition in \Cref{thm:feasible_ci_optimized_weights} becomes
\begin{equation*}
  \sqrt{K_n((k_1+1)/n+C_n r_q(k_2,n))}\cdot n
  \cdot b_n
  = \sqrt{K_n((k_1+1)/n+C_n r_q(k_2,n))}\cdot K_n
  \to 0.
\end{equation*}
Since $C_n r_q(k_2,n)$ is the rate at which the optimal \ac{CI} shrinks, this condition
is essentially the same as requiring that the optimal \ac{CI} shrinks towards
zero as $n\to\infty$.

\subsection{Lower CIs for \texorpdfstring{$C$}{C}}\label{sec:C_lower_CI}
We present a lower \ac{CI} for the regularity parameter $C$, which can be used
to assess the plausibility of the assumption $\Pen(\gamma_2)\le C$. Let
$\hat\theta_2(\lambda)$ denote the regularized regression estimator of
$\gamma_2$, given in \cref{eq:YX_regularized_regression}, with penalty
$\lambda$. Let $\lambda^*_\alpha$ denote an upper bound for the $1-\alpha$
quantile of $\|2X_2'M_{X_{1}}\varepsilon\|_q/n$. Let
\begin{equation}\label{eq:C_lower_CI}
  \hat{\underline C}=\sup_{\lambda>\lambda^*_\alpha} \frac{\lambda-\lambda^*_\alpha}{\lambda+\lambda^*_\alpha}\|\hat\theta_2(\lambda)\|_{p}.
\end{equation}
In the idealized finite sample setting with $\varepsilon\sim
\mathcal{N}(0,\sigma^2I_n)$ with $\sigma^2$ known, $\lambda^*_\alpha$ can be
computed exactly, so that $\hat{\underline C}$ is feasible.

\begin{theorem}
  Consider $\hat{\underline C}$ in \cref{eq:C_lower_CI} with $\lambda^*_\alpha$
  given by the $1-\alpha$ quantile of $\|2X_2'M_{X_{1}}\varepsilon\|_q/n$. Then,
  for any $\beta, \gamma_1,\gamma_2$ with $\|\gamma_2\|_p\le C$, we have
  $P_{\beta, \gamma_1,\gamma_2}(C\in \hor{\hat{\underline C}, \infty})\ge
  1-\alpha$.
\end{theorem}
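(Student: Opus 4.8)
The plan is to reduce the coverage statement to a single deterministic inequality that holds on a high-probability event, using Lemma~\ref{theorem:elementary_property_lp} as the only real input. Define the event $\mathcal{A}=\{\norm{2X_2'M_{X_1}\varepsilon}_q/n\le \lambda^*_\alpha\}$. Since the design matrix $X$ is fixed and $\varepsilon\sim\ND(0,\sigma^2 I_n)$, the statistic $\norm{2X_2'M_{X_1}\varepsilon}_q/n$ is a function of $\varepsilon$ alone, and in particular its distribution does not depend on the parameters $(\beta,\gamma_1,\gamma_2)$. Because $\lambda^*_\alpha$ is the $1-\alpha$ quantile of this statistic, $P_{\beta,\gamma_1,\gamma_2}(\mathcal{A})\ge 1-\alpha$ uniformly over the parameter space.

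Next I would work pathwise on $\mathcal{A}$. For each $\lambda>\lambda^*_\alpha$, apply Lemma~\ref{theorem:elementary_property_lp} to $\hat\theta_2(\lambda)$ with $\lambda_0=\lambda^*_\alpha$ (the hypothesis $\norm{2X_2'M_{X_1}\varepsilon}_q/n\le\lambda_0$ is exactly the definition of $\mathcal{A}$), giving $\norm{M_{X_1}X_2(\hat\theta_2(\lambda)-\theta_2)}_2^2/n+(\lambda-\lambda^*_\alpha)\norm{\hat\theta_2(\lambda)}_p\le(\lambda+\lambda^*_\alpha)\norm{\theta_2}_p$. Dropping the nonnegative squared-norm term and dividing by $\lambda+\lambda^*_\alpha>0$ yields $\frac{\lambda-\lambda^*_\alpha}{\lambda+\lambda^*_\alpha}\norm{\hat\theta_2(\lambda)}_p\le\norm{\theta_2}_p=\norm{\gamma_2}_p\le C$, using $\theta_2=\gamma_2$ and the maintained bound $\norm{\gamma_2}_p\le C$.

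The crucial point is that the \emph{same} event $\mathcal{A}$ controls this inequality simultaneously for every $\lambda>\lambda^*_\alpha$, so no union bound over $\lambda$ is required. I would therefore take the supremum over $\lambda>\lambda^*_\alpha$ on $\mathcal{A}$ to conclude $\hat{\underline C}\le C$, i.e.\ $\mathcal{A}\subseteq\{C\ge\hat{\underline C}\}$, whence $P_{\beta,\gamma_1,\gamma_2}(C\in[\hat{\underline C},\infty))\ge P_{\beta,\gamma_1,\gamma_2}(\mathcal{A})\ge1-\alpha$. There is no serious analytic obstacle here: because the \ac{CI} is one-sided I only ever discard the nonnegative prediction-error term and never need a lower bound on it, so no compatibility or restricted-eigenvalue condition enters. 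The only things to get right are that the noise event defining $\lambda^*_\alpha$ is parameter-free (so coverage is uniform over $(\beta,\gamma_1,\gamma_2)$) and that the control over $\lambda$ is pathwise rather than merely in probability, which is what legitimizes passing to the supremum inside the event.
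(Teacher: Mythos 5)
Your proposal is correct and follows essentially the same route as the paper's proof: both invoke Lemma~\ref{theorem:elementary_property_lp} on the single event $\{\norm{2X_2'M_{X_{1}}\varepsilon}_q/n\le \lambda^*_\alpha\}$, which has probability at least $1-\alpha$ and yields $\frac{\lambda-\lambda^*_\alpha}{\lambda+\lambda^*_\alpha}\norm{\hat\theta_2(\lambda)}_p\le\norm{\gamma_2}_p\le C$ simultaneously for all $\lambda>\lambda^*_\alpha$, so the supremum defining $\hat{\underline C}$ is bounded by $C$ on that event. Your additional observations --- that the noise statistic's distribution is parameter-free under the fixed design so coverage is uniform, and that the control over $\lambda$ is pathwise so no union bound is needed --- are correct and merely make explicit what the paper leaves implicit.
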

\begin{proof}
  It follows from \Cref{theorem:elementary_property_lp} that, on the event
  $\|2X_2'M_{X_{1}}\varepsilon\|_q/n\le \lambda^*_\alpha$ (which holds with probability
  at least $1-\alpha$ by assumption), we have
  $\frac{\lambda-\lambda^*_\alpha}{\lambda+\lambda^*_\alpha}\|\hat\theta_2(\lambda)\|_p\le
  \|\gamma_2\|_p\le C$
  for all $\lambda>\lambda^*_\alpha$.  Thus, the supremum of this quantity over
  $\lambda$ in this set is also no greater than $C$ on this event.
\end{proof}

We now present a feasible version of this \ac{CI} when the error distribution is
unknown and possibly heteroskedastic in the case where $p=1$. Let
$\tilde x_{ij}=(M_{X_{1}}'X_2)_{ij}$. Since $q=\infty$ in this case, we need to
choose $\hat\lambda^*_\alpha$ such that
\begin{equation*}
  2\norm{X_2M_{X_{1}}'\varepsilon}_{\infty} / n = \max_{1\le j\le k_2}
  \abs*{\sum_{i=1}^n 2\tilde x_{ij} \varepsilon_i/n }\le \hat\lambda^*_\alpha
\end{equation*}
with probability at least $1-\alpha$ asymptotically. Let
$\hat{V}_j=\sum_{i=1}^n (2\tilde x_{ij}/n)^2\hat{\varepsilon}_i^2$, where
$\hat\varepsilon_i$ is the residual from an initial regularized regression with
$\lambda$ chosen as in \Cref{l2_rate_thm} for some slowly increasing $K_n$. This
leads to the moderate deviations critical value $\hat\lambda^*_\alpha$, which
sets
\begin{equation}\label{eq:lambda_alpha_unknown_error_distribution}
  \alpha = \sum_{j=1}^{k_2} 2\Phi(-\hat\lambda^*_\alpha/\hat{V}_j^{1/2}).
\end{equation}

\begin{remark}\label{lambda_alpha_remark}
  The analysis in \Cref{l2_rate_thm} of the regularized regression estimator in
  \cref{eq:YX_regularized_regression} relies on choosing a penalty parameter
  greater than $2\|X_2M_{X_{1}}'\varepsilon\|_\infty/n$ with high probability,
  which is precisely the goal of the critical value $\hat\lambda^*_\alpha$ given
  in \cref{eq:lambda_alpha_unknown_error_distribution}. This suggests an
  iterative procedure in which one uses $\hat\lambda^*_\alpha$ (perhaps with
  some sequence $\alpha_n$ converging slowly to zero) as a data-driven penalty
  parameter in the regression in \cref{eq:YX_regularized_regression} after using
  some initial penalty choice satisfying the conditions of \Cref{l2_rate_thm} to
  form the residuals used to compute $\hat\lambda^*_\alpha$.
\end{remark}

The penalty choice $\hat\lambda^*_\alpha$ is related to data-driven choices of
the lasso penalty in the case with unknown error distribution.
\citet{belloni_sparse_2012} use similar ideas to choose the penalty parameter in
this setting under $\ell_0$ constraints, although our implementation is somewhat
different, since our parameter space constrains the penalty loadings we place on
each parameter. While $\hat\lambda^*_\alpha$ does not take into account
correlations between the moments, one could take into account these correlations
using a bootstrap implementation, as suggested by
\citet{chernozhukov_gaussian_2013}.
\begin{theorem}
  Suppose \Cref{global_estimation_assump} holds with $p=1$ and that and
  $\frac{1}{n}\sum_{i=1}^n \tilde x_{ij}^2\ge \eta$ for $j=1,\dotsc, k$ for all
  $n$, where $\tilde x_{ij}=(M_{X_{1}}X_2)_{ij}$. Let $\hat\lambda^*_\alpha$ be
  given in \cref{eq:C_lower_CI} with $\hat V_j$ formed using residuals
  $\hat\varepsilon$ from the regularized regression in
  \cref{eq:YX_regularized_regression} with penalty $\lambda$ chosen as in
  \Cref{l2_rate_thm} for some $K_n\to\infty$ with
  $K_n(k_1/n+(C_n+1)\sqrt{\log k_2}/\sqrt{n})\cdot (\log k_2)^2\to 0$. Then
  \begin{equation*}
    \limsup_n\sup_{\beta, \gamma\colon \norm{\gamma_2}_{1}\le
    C_n}\sup_{Q\in\mathcal{Q}_n}P_{\theta, Q}\left(\max_{1\le j\le k_2}
    \abs{\sum_{i=1}^n 2\tilde x_{ij} \varepsilon_i/n}> \hat\lambda^*_\alpha
    \right)\le \alpha.
  \end{equation*}
  In particular, letting $\hat{\underline C}$ be given in
  \cref{eq:C_lower_CI} with $\lambda^*_\alpha$ given by $\hat\lambda^*_\alpha$,
  we have
  \begin{equation*}
  \liminf_n\inf_{\beta, \gamma\colon \norm{\gamma_2}_1\le
    C_n}\inf_{Q\in\mathcal{Q}_n}P_{\theta, Q}\left(C_n\in \hor{\hat{\underline C}, \infty}
  \right)\ge 1-\alpha.
  \end{equation*}
\end{theorem}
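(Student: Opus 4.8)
The plan is to first reduce the coverage statement (the ``in particular'' display) to the probability bound in the first display, and then to establish that probability bound by a moderate-deviation argument. For the reduction, observe that since $q=\infty$ when $p=1$, the event $\{\max_{1\le j\le k_2}\abs{\sum_{i=1}^n 2\tilde x_{ij}\varepsilon_i/n}\le \hat\lambda^*_\alpha\}$ is exactly $\{\norm{2X_2'M_{X_1}\varepsilon}_\infty/n\le \hat\lambda^*_\alpha\}$. On this event, \Cref{theorem:elementary_property_lp} (applied with $\lambda_0=\hat\lambda^*_\alpha$, dropping the nonnegative first term) gives $\frac{\lambda-\hat\lambda^*_\alpha}{\lambda+\hat\lambda^*_\alpha}\norm{\hat\theta_2(\lambda)}_1\le\norm{\gamma_2}_1\le C_n$ for every $\lambda>\hat\lambda^*_\alpha$, so that $\hat{\underline C}\le C_n$ and hence $C_n\in\hor{\hat{\underline C}, \infty}$; this is the same step used in the proof of the infeasible lower CI just above. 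Once the first display is shown, the coverage statement follows because this event then has limiting probability at least $1-\alpha$, uniformly over the parameter space.

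For the first display, I would proceed in three steps. First, I would show that the estimated loadings $\hat V_j=\sum_{i=1}^n(2\tilde x_{ij}/n)^2\hat\varepsilon_i^2$ are uniformly consistent for the true variances $V_j=\sum_{i=1}^n(2\tilde x_{ij}/n)^2E_Q\varepsilon_i^2$ in the strong sense that $\max_{1\le j\le k_2}\abs{\hat V_j/V_j-1}\cdot\log k_2\to_P 0$. The lower bound $n^{-1}\sum_i\tilde x_{ij}^2\ge\eta$ together with boundedness of $\tilde x_{ij}$ and $E_Q\varepsilon_i^2\in[\eta,1/\eta]$ gives $V_j\asymp 1/n$ uniformly, so it suffices to bound $\max_j\abs{\hat V_j-V_j}$; writing $\hat\varepsilon_i^2-\varepsilon_i^2=(\hat\varepsilon_i-\varepsilon_i)(\hat\varepsilon_i+\varepsilon_i)$ and using the rate $\norm{X(\hat\theta-\theta)}_2^2/n=O_P(k_1/n+C_n\sqrt{\log k_2}/\sqrt{n})$ from \Cref{l2_rate_thm} for the estimation term, together with a maximal inequality over $j$ for the mean-zero fluctuation $\sum_i(2\tilde x_{ij}/n)^2(\varepsilon_i^2-E_Q\varepsilon_i^2)$, delivers this rate under the stated condition $K_n(k_1/n+(C_n+1)\sqrt{\log k_2}/\sqrt n)(\log k_2)^2\to 0$. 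Second, I would define the oracle threshold $\lambda^\dagger_\alpha$ through $\alpha=\sum_j 2\Phi(-\lambda^\dagger_\alpha/V_j^{1/2})$ and argue, using monotonicity of this defining equation and Step~1, that $\hat\lambda^*_\alpha\ge(1-o_P(1))\lambda^\dagger_\alpha$ with an error small enough not to disturb the tail approximation below. Third, I would bound $P(\max_j\abs{S_j}>\lambda^\dagger_\alpha)\le\sum_j P(\abs{S_j}>\lambda^\dagger_\alpha)$ by the union bound and apply a moderate-deviation approximation for the weighted sums $S_j=\sum_i(2\tilde x_{ij}/n)\varepsilon_i$ of independent, mean-zero, sub-exponential summands: since $\max_i(2\tilde x_{ij}/n)^2/V_j=O(1/n)$ and $\lambda^\dagger_\alpha/V_j^{1/2}=O(\sqrt{\log k_2})$, the relative error in the Gaussian tail approximation $P(\abs{S_j}>\lambda^\dagger_\alpha)=(1+o(1))\,2\Phi(-\lambda^\dagger_\alpha/V_j^{1/2})$ is uniform over $j$ in this moderate-deviation regime, so the sum is $(1+o(1))\alpha$. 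Combining the three steps and absorbing the $o_P(1)$ gap between $\hat\lambda^*_\alpha$ and $\lambda^\dagger_\alpha$ yields $P(\max_j\abs{S_j}>\hat\lambda^*_\alpha)\le\alpha+o(1)$, uniformly over $\beta,\gamma$ with $\norm{\gamma_2}_1\le C_n$ and over $Q\in\mathcal{Q}_n$.

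The main obstacle is the third step and, specifically, its interaction with the first two. The tail probabilities $2\Phi(-\lambda/V_j^{1/2})$ are exponentially sensitive to both the threshold and the normalizing variance in the regime $\lambda/V_j^{1/2}\asymp\sqrt{\log k_2}$: a multiplicative perturbation of $V_j$ of size $\delta$ changes the tail by a factor of order $\exp(O(\delta\log k_2))$. Controlling this forces the variance-estimation error of Step~1 to vanish faster than $1/\log k_2$, which is precisely why the rate condition carries the factor $(\log k_2)^2$ (one $\log k_2$ from this amplification, one from the maximal inequality over the $k_2$ coordinates). One must also ensure the moderate-deviation range is wide enough to reach $\sqrt{\log k_2}$, which uses the sub-exponential bound $P_Q(\abs{\varepsilon_i}>t)\le 2\exp(-\eta t)$ from the $p=1$ case of \Cref{l2_rate_thm} and an implicit requirement $\log k_2=o(n^c)$ subsumed by the stated rate condition; the relevant inequalities are of the type used by \citet{belloni_sparse_2012} and \citet{chernozhukov_gaussian_2013}.
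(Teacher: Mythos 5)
Your proposal is correct and follows essentially the same route as the paper's proof: the same reduction of coverage to the tail-probability bound via \Cref{theorem:elementary_property_lp}, the same two-part variance-consistency argument (residual estimation error bounded through the rate from \Cref{l2_rate_thm}, plus a Bernstein-type maximal inequality for the mean-zero fluctuation), the same comparison of $\hat\lambda^*_\alpha$ to an oracle threshold with Mills-ratio control of the $\exp(O(c_n\log k_2))$ tail amplification, and the same union bound combined with a moderate-deviations approximation requiring $(\log k_2)^{3/2}/\sqrt{n}\to 0$. One bookkeeping quibble: in the paper both powers of $\log k_2$ in the rate condition come from squaring the single requirement $c_n\log k_2\to 0$ (the relative variance error $c_n$ being the square root of the rate expression), rather than one power arising from the amplification and one from the maximal inequality as you suggest.
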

\begin{proof}
  Let $\tilde V_j=\sum_{i=1}^n (2\tilde x_{ij}/n)^2\varepsilon_{i}^{2}$ and let
  $V_{Q, j}=\sum_{i=1}^n (2\tilde x_{ij}/n)^2E_Q\varepsilon_i^2$. Note that
\begin{multline*}
  \abs{\hat{V}_j-\tilde{V}_j} = \left| \sum_{i=1}^n
    (2\tilde{x}_{ij}/n)^2(\hat\varepsilon_i^2-\varepsilon_i^2) \right|
  =\left| \sum_{i=1}^n (2\tilde x_{ij}/n)^2(\hat\varepsilon_i+\varepsilon_i)(\hat\varepsilon_i-\varepsilon_i) \right| \\
  \le (2K_X/n)^2
  \|\hat\varepsilon+\varepsilon\|_2\|\hat\varepsilon-\varepsilon\|_2 \le
  (2K_X/n)^2
  (2\|\varepsilon\|_2+\|\hat\varepsilon-\varepsilon\|_2)\|\hat\varepsilon-\varepsilon\|_2.
\end{multline*}
On the event that $2\norm{\varepsilon}_2\le \sqrt{n}\tilde K$ and
\begin{equation*}
\norm{\hat\varepsilon-\varepsilon}_2=\norm{X(\hat{\theta}-\theta)}_2\le \sqrt{n
  K_n}\cdot (k_1/n+2C_n \sqrt{\log n}/\sqrt{n})^{1/2},
\end{equation*}
which holds with probability approaching one uniformly over $Q\in\mathcal{Q}_n$
when $\tilde K$ is large enough, this is bounded by
$(2K_X/n)^2(\tilde K \sqrt{n} + \sqrt{n K_n}(k_1/n+2C_n \sqrt{\log
  k_2}/\sqrt{n})^{1/2})\cdot \sqrt{n K_n}(k_1/n+2C_n \sqrt{\log
  k_2}/\sqrt{n})^{1/2}$. Since $V_{Q, j}\ge \tilde \eta/n$ uniformly over $j$ and
over $n$ for some $\tilde \eta>0$, this implies that, on this event,
$\max_{1\le j\le k_2} \left| \hat V_j-\tilde V_j \right|/V_{Q, j}$ is bounded by
\begin{equation*}
4\tilde \eta^{-1}(K_X^2/n)(\tilde K \sqrt{n} + \sqrt{n K_n}(k_1/n+2C_n
\sqrt{\log k_2}/\sqrt{n})^{1/2})\cdot \sqrt{n K_n}(k_1/n+2C_n
\sqrt{\log k_2}/\sqrt{n})^{1/2},
\end{equation*}
which in turn is bounded by a constant times
$K_n^{1/2}(k_1/n+2C_n\sqrt{\log k_2}/\sqrt{n})^{1/2}$ so long as this quantity converges to zero.

In addition, note that
$(\tilde V_j-V_{Q, j})/V_{Q, j}=\sum_{i=1}^n \tilde a_{ij}
(\varepsilon_{i}-E_Q\varepsilon_i)/n$, where
$\tilde a_{ij}=\tilde x_{ij}^2/(nV_{j, Q})\le K_X^2\tilde \eta^{-1}$ and
$\tilde \eta$ is a lower bound for $nV_{Q, j}$. Using this bound on
$\tilde a_{ij}$ and the tail bound on $\varepsilon_{i}$, it follows from
Bernstein's inequality for sub-exponential random variables that, for
$\delta<1$, $P_Q(|\tilde V_j-V_{Q, j}|/V_{Q, j}\ge \delta)$ is bounded from
above by $2\exp(- c n\delta^2)$ for some constant $c$ that depends only on
$K_X$, $\tilde \eta$ and $\eta$. Thus, for any sequence $\delta_n$, we have
$P_Q(\max_{1\le j\le k_2}|\tilde V_j-V_{Q, j}|/V_{Q, j}\ge \delta)\le
2k_2\exp(-c n\delta_n^2)$, which converges to zero so long as $\delta_n$ is
bounded from below by a large enough constant times $\sqrt{\log k_2}/\sqrt{n}$.

This gives the rate of convergence for $\hat V_j/V_{Q, j}$ to one which, by
continuous differentiability of $t\mapsto\sqrt{t}$ at $t=1$, gives the same
rates for $\sqrt{\hat V_j}/\sqrt{V_{Q, j}}$. In particular, letting $c_n$ be
given by a large enough constant times
$K_n^{1/2}(k_1/n+(C_n+1)\sqrt{\log k_2}/\sqrt{n})^{1/2}$, the event
$\max_{1\le j\le k_2}\left| \sqrt{\hat V_j}/\sqrt{V_{Q, j}}-1 \right|\le c_n$
holds with probability approaching one uniformly over $Q\in\mathcal{Q}_n$ and
$\beta, \gamma$ with $\|\gamma_2\|\le C_n$. On this event, we have
\begin{equation*}
  \alpha=\sum_{j=1}^{k_2} 2\Phi(-\hat\lambda^*_\alpha/\sqrt{\hat V_{j}})
  \ge \sum_{j=1}^{k_2} 2\Phi(-\hat\lambda^*_\alpha/(\sqrt{V_{Q_n, j}}(1-c_n))).
\end{equation*}
Thus,
letting $\lambda_{\alpha, n}$ solve $\alpha=\sum_{j=1}^{k_2}
2\Phi(-\lambda_{\alpha, n}/(\sqrt{V_{Q_n, j}}))$,
we have
$\hat\lambda^*_\alpha/(1-c_n)=\lambda_{\tilde\alpha, n}$ for some
$\tilde\alpha\le \alpha$, so that $\hat\lambda^*_\alpha/(1-c_n)\ge
\lambda_{\alpha, n}$.  It follows that the non-coverage probability
under any sequence of parameters with $\|\gamma_2\|_p\le C_n$ and any sequence $Q_n\in\mathcal{Q}_n$
is bounded by a term that converges to zero plus
\begin{multline*}
  P_{Q_n}\left(\max_{1\le j\le k_2}\left| \sum_{i=1}^n2\tilde{x}_{ij}
      \varepsilon_i \right| > (1-c_n)\lambda_{\alpha, n} \right)
  \le \sum_{j=1}^{k_2} F_{n, j}(-(1-c_n)\lambda_{\alpha, n}/\sqrt{V_{Q_n, j}}) \\
  = \sum_{j=1}^{k_2} 2\Phi(-\lambda_{\alpha, n}/\sqrt{V_{Q_n, j}})\cdot
  A_{n, j}\cdot B_{n, j},
\end{multline*}
where
$F_{n, j}(t)=P_{Q_n}\left(\left| \sum_{i=1}^n 2 \tilde{x}_{ij}
    \varepsilon_i/\sqrt{V_{Q_n, j}} \right| > t \right)$,
$A_{n, j}=\frac{\Phi(-(1-c_n)\lambda_{\alpha, n}/\sqrt{V_{Q_n, j}})}{\Phi(-\lambda_{\alpha, n}/
  \sqrt{V_{Q_n, j}})}$ and
$B_{n, j}=\frac{F_{n, j}(-(1-c_n)\lambda_{\alpha, n}/\sqrt{V_{Q_n, j}})}{2\Phi(-(1-c_n)
  \lambda_{\alpha, n}/\sqrt{V_{Q_n, j}})}$. Since
$\sum_{j=1}^{k_2} 2\Phi(-\lambda_{\alpha, n}/\sqrt{V_{Q_n, j}})=\alpha$ by
definition, it suffices to show that
$\limsup_{n\to\infty}\max_{1\le j\le {k_2}} \max\{A_{n, j}, B_{n, j}\}\le 1$.

For $A_{n, j}$, we use the bound
$\Phi(-s)/\Phi(-t)\le
[s^{-1}/(t^{-1}-t^{-3})]\exp((t^2-s^2)/2)$ (this follows from the bound $(t^{-1}-t^{-3})\exp(-t^2/2)/\sqrt{2\pi}\le \Phi(-t)\le
t^{-1}\exp(-t^2/2)/\sqrt{2\pi}$ given in Lemma 2, Section 7.1 in
\citet{feller_introduction_1968}), which gives
\begin{equation*}
A_{n, j}\le \frac{(1-c_n)^{-1}}{1-(\lambda_{\alpha, n}/\sqrt{V_{Q_n, j}})^{-2}}\exp\left([1-(1-c_n)^2]\lambda_{\alpha, n}^2/(2V_{Q_n, j}) \right).
\end{equation*}
Using standard calculations and the fact that $nV_{Q_n, j}$ is uniformly bounded
from above and below, we have $(\log k_2)/ K \le
\lambda_{\alpha, n}^2/V_{Q_n, j}\le K \log k_2$ for some constant $K$.
Thus, the right-hand side of the above
display converges to $1$ uniformly over $n$ and $1\le j\le k$ so long as
$c_n\log k_2\to 0$, which is guaranteed by the assumptions of the theorem.

For $B_{n, j}$, we use a moderate deviations bound as in \citet[][Chapter
16.7]{feller_introduction_1971}. In particular, the bound
$|F_{n, j}(t)/(2\Phi(t))-1|\le \tilde K t^3/\sqrt{n}$ holds for all
$1\le t<\overline t_n$, where $\overline t_n$ is any sequence with
$\overline t_n/n^{1/6}\to 0$, and $\tilde K$ depends only on $\overline t_n$ and
the moment conditions and tail bounds on $\varepsilon_i$ \citep[][Lemma
B.5]{armstrong_multiscale_2016}. Using the fact that
$\lambda_{\alpha, n}/\sqrt{V_{Q_n, j}}$ is bounded by a constant times
$\sqrt{\log k_2}$, it follows that
$\limsup_{n\to\infty}\max_{1\le j\le {k_2}} B_{n, j}\le 1$ so long as
$(\log k_2)^{3/2}/\sqrt{n}\to 0$, which is guaranteed by the conditions of the
theorem.
\end{proof}

\end{appendices}

\bibliography{regularized_regression_library}

\end{document}